\documentclass[journal,final,twocolumn]{IEEEtran}
\usepackage{fixltx2e}
\usepackage{cite}
\usepackage{mathrsfs}
\usepackage{color}
\usepackage{float}

\usepackage[caption=false]{subfig}

\ifCLASSINFOpdf
   \usepackage[pdftex]{graphicx}
   \graphicspath{{Figs/}}
   \DeclareGraphicsExtensions{.pdf,.jpeg,.png}
\else
\fi

\usepackage[cmex10]{amsmath}
\usepackage{amsmath}
\usepackage{amssymb}
\usepackage{amsthm}
\usepackage{amsfonts}
\usepackage{bm}
\usepackage{xfrac}
\usepackage{empheq}
\usepackage[normalem]{ulem} 
\usepackage{soul} 
\usepackage{mathtools}

\newtheorem{theorem}{Theorem}

\theoremstyle{example}
\newtheorem{example}{Example}
\newtheorem{proposition}{Proposition}

\newtheorem{corollary}{Corollary}
\newtheorem{remark}{Remark}
\theoremstyle{definition}
\newtheorem{definition}{Definition}

\graphicspath{{figs/}}

\interdisplaylinepenalty=2500
\begin{document}
	
	\title{A Design Framework for Strongly $\chi^2$-Private Data Disclosure}
\vspace{-5mm}
\author{
		\IEEEauthorblockN{Amirreza Zamani, ~\IEEEmembership{Member,~IEEE,} Tobias J. Oechtering,~\IEEEmembership{Senior Member,~IEEE,} Mikael Skoglund,~\IEEEmembership{Fellow,~IEEE} \vspace*{0.5em}
		}
\thanks{This work was funded in
	part by the Swedish research council under contract 2019-03606. A. Zamani , M. Skoglund and T. J. Oechtering are with the division of information science and engineering, School of Electrical Engineering and
	Computer Science, KTH Royal Institute of Technology, 100 44 Stockholm,
	Sweden (e-mail: amizam@kth.se; oech@kth.se; mikael.skoglund@ee.kth.se).}}
	\maketitle

\begin{abstract}
	In this paper, we study a stochastic disclosure control problem using information-theoretic methods. The useful data to be disclosed depend on private data that should be protected. Thus, we design a 
	privacy mechanism to produce new data which maximizes the disclosed information about the useful data under a strong $\chi^2$-privacy criterion. 
	For sufficiently small leakage, the privacy mechanism design problem can be geometrically studied in the space of probability distributions by a local approximation of the mutual information. 
	 By using methods from Euclidean information geometry, the original highly challenging optimization problem can be reduced to a problem of finding the principal right-singular vector of a matrix, which characterizes the optimal privacy mechanism.
	 In two extensions we first consider a scenario where an adversary receives a noisy version of the user's message and then we look for
	 a mechanism which finds $U$ based on observing $X$, maximizing the mutual information between $U$ and $Y$ while satisfying the privacy criterion on $U$ and $Z$ under the Markov chain $(Z,Y)-X-U$. 
\end{abstract}

\section{Introduction}
The amount of data created by humans, robots, advanced cyber-physical and software systems and billions of interconnected sensors is growing rapidly. Unwanted inference possibilities from this data cause privacy threats. 
Thus, privacy mechanisms are required before data can be disclosed.

Accordingly, the information theoretic approach to privacy is receiving increased attention and related works can be found in \cite{yamamoto, sankar, makhdoumi, dwork1, oech, issa, Calmon2, asoodeh1, houi, deniz6, gun, sankar2, deniz4, asoodeh3, Calmon1, 7888175, nekouei2, Johnson,Total}. 
One of the earliest works is \cite{yamamoto}, where a source coding problem with secrecy is studied.
In both \cite{yamamoto} and \cite{sankar}, the privacy-utility trade-off is considered using expected distortion and equivocation as measure of utility and privacy. 
In \cite{makhdoumi}, the concept of a privacy funnel is introduced, where the privacy-utility trade-off under log-loss distortion is considered.
In \cite{dwork1}, the concept of differential privacy is introduced, which aims to minimize the chance of identifying the membership in a database. 
 In \cite{oech}, the hypothesis test performance of an adversary is used to measure the privacy leakage.
The concept of maximal leakage is introduced in \cite{issa} and some bounds on privacy utility trade-off are provided.  
In \cite{Calmon2}, fundamental limits of privacy utility trade-off are studied measuring the leakage using estimation-theoretic guarantees.
Properties of rate-privacy functions are studied in \cite{asoodeh1}, where either maximal correlation or mutual information are used for measuring privacy. Biometric identification systems with no privacy leakage are studied in \cite{houi}. Furthermore, recent related work, from which our work is independent\footnote{The paper \cite{jende} first appeared on arXiv significantly after our paper was submitted to these transactions.}, appears in \cite{jende}. 

Our problem formulation is closest related to \cite{deniz6} where the problem of maximizing mutual information $I(U;Y)$ given the leakage constraint $I(U;X)\leq\epsilon$ and Markov chain $X-Y-U$ is studied. 
Under the assumption of perfect privacy, i.e., $\epsilon=0$, it is shown that the privacy mechanism design problem can be reduced to a standard linear program.
In \cite{gun}, the work has been extended considering the privacy utility trade-off with a rate constraint for the disclosed data.

In this paper, we consider a similar problem as in \cite{deniz6} depicted in Fig.~\ref{fig:sysmodel}, where an agent wants to disclose some useful data to a user. The useful data is denoted by the random variable (RV) $Y$. Furthermore, $Y$ is dependent on the private data denoted by RV $X$, which is not accessible to the agent. Due to privacy considerations, the agent can not release the useful data directly. So, the agent uses a privacy mechanism to produce data $U$ that can be disclosed. $U$ should disclose as much information about $Y$ as possible and at the same time satisfy the privacy criterion. In this work, the perfect privacy condition considered in \cite{deniz6} is relaxed considering an element-wise $\chi^2$ privacy criterion which we call \textit{"Strong $\chi^2$-privacy criterion"}. A $\chi^2$-privacy criterion has been also considered in \cite{Calmon2}, studying a related privacy-utility trade-off problem. Since the optimization problem is difficult, only upper and lower bounds on the optimal privacy-utility trade-off have been derived. Furthermore,  a convex program for designing the privacy mechanism is introduced, where additional constraints are added to the main privacy problem. In contrast, we in this paper focus on finding an explicit design for the privacy mechanism problem for small leakage considering our strong $\chi^2$-privacy criterion. As a side result we show that the upper bound in \cite{Calmon2} is achievable in the small leakage regime. 

The key idea of the perfect privacy approach in \cite{deniz6} depends on revealing information aligned with the null space of the conditional distribution matrix $P_{X|Y}$. However, when the matrix is invertible this approach leads to zero utility. For instance, consider an example where a health organization center has tested multiple patients through their CD4 lymphocyte level. This test is done for finding out if an HIV positive patient is in the terminal stage of the disease (aids) or not. Assume that the results of these level tests need to be revealed for use at an aggregated level, however the organization does not want to reveal information regarding whether a person is in the terminal stage or not. In this example it can be seen that the perfect privacy approach leads to zero utility and no information can be revealed. We study this scenario in Example~3 below.


\begin{figure}[]
	\centering
	\includegraphics[width = 0.4\textwidth]{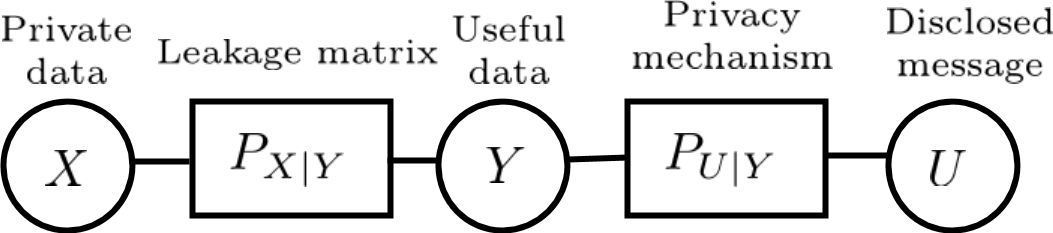}
	\caption{In this system model, disclosed data $U$ is designed by a privacy mechanism which maximizes the information disclosed about $Y$ and satisfies the strong $\chi^2$-privacy criterion.}
	\label{fig:sysmodel}
\end{figure} 

 We use methods from Euclidean information theory \cite{borade, huang} to study the design optimization problem.
 There exist many problems in information theory, where one main difficulty is not having a geometric structure on the space of probability distributions. If we assume that the distributions of interest are close to each other, then KL divergence can be well approximated by
weighted squared Euclidean distance. This results in a framework where a mutual information term has been approximated in order to simplify the optimization problem. 
This framework has been used in \cite{borade,huang}, specifically, in \cite{borade}, where it was employed for point-to-point channels and some specific broadcast channels. 
In this paper, due to the strong $\chi^2$-privacy criterion, we can exploit the information geometry approach and approximate the KL divergence and mutual information in case of a small leakage $\epsilon$. This allows us to transfer the main problem into a analytically simple largest singular value problem, which also provides deep intuitive understanding of the mechanism.

In more detail we can summerize our contribution as follows
\textbf{(i)} We present an information-theoretic disclosure control problem using a strong $\chi^2$-privacy criterion in Section  \ref{sec:system}.\\
\textbf{(ii)} We introduce and utilize concepts from Euclidean information theory to linearize the problem and derive a simple approximate solution for small leakage in Section \ref{result}. In particular our result shows that the upper bound found in \cite{Calmon2} is actually achievable for small leakage.\\
\textbf{(iii)} We provide a geometrical interpretation of the privacy mechanism design problem and two examples are given. Significantly that enhance the intuitive understanding of the privacy mechanism. In Example~1, the approximated solution and exact solution are compared. Furthermore, in Example~2 the privacy-utility trade-off with error probability as a measure of utility and estimation error as a measure of privacy is studied to compare our approach with perfect privacy. The example illustrates that by allowing an increasing privacy cost, our approach achieves higher utility, while perfect privacy results in a constant utility.    
	Finally, Example~3 illustrates a medical example where perfect privacy leads to zero utility, while our new approach allows for positive revealed information.\\
\textbf{(iv)} We transfer our methods to two extended problems which demonstrates the value of our approach as a design framework. In the first extension, a binary channel between the user and a sophisticated adversary is considered, where the agent is trying to find a mechanism to produce binary random variable $U$, which maximizes $I(U;Y)$ under the Markov chain $X-Y-U-U'$ having a privacy criterion on $X$ and $U'$. Here $U'$ is the received message by the adversary or $U$ is the stored data for future purposes and $U'$ is the first instance that is disclosed with a stored post-processing mechanism.
In the second extension, the agent looks for a mechanism which finds $U$ based on observing $X$ maximizing the mutual information between $U$ and $Y$ while satisfying the privacy criterion on $U$ and $Z$ under the Markov chain $(Z,Y)-X-U$.\\ 
 The paper is concluded in Section \ref{concul}.

\section{system model and Problem Formulation} \label{sec:system}
Let $P_{XY}$ denote the joint distribution of discrete random variables $X$ and $Y$ defined on finite alphabet $\cal{X}$ and $\cal{Y}$ with equal cardinality, i.e, $|\cal{X}|=|\cal{Y}|=\mathcal{K}$. We represent $P_{XY}$ by a matrix defined on $\mathbb{R}^{\mathcal{K}\times\mathcal{K}}$ and 
marginal distributions of $X$ and $Y$ by vectors $P_X$ and $P_Y$ defined on $\mathbb{R}^\mathcal{K}$. We assume that each element in vectors $P_X$ and $P_Y$ is non-zero. Furthermore, we represent the leakage matrix $P_{X|Y}$ by a matrix defined on $\mathbb{R}^{\cal{K}\times\cal{K}}$ which is assumed to be invertible.
In the related privacy problem with perfect privacy \cite{deniz6}, it has been shown that information can be only revealed if $P_{X|Y}$ is not invertible. This result was also proved in \cite{berger} in a source coding setup.
 RVs $X$ and $Y$ denote the private data and the useful data.
 In this work, privacy is measured by the strong $\chi^2$-privacy criterion which we introduce next.
 \begin{definition}
 	Given two random variables $X\in\mathcal{X}$ and $U\in\mathcal{U}$ with joint pmf $P_{XU}$ where $X$ describes the private data and $U$ denotes the disclosed data, for $\epsilon>0$, the \textit{strong $\chi^2$-privacy criterion} is defined as follows
 	\begin{align*}
 	&\chi^2(P_{X|U=u}||P_X)=\sum_{x\in\mathcal{X}}\frac{(P_{X|U=u}(x)-P_X(x))^2}{P_X(x)}\\&= \left\lVert[\sqrt{P_X}^{-1}](P_{X|U=u}-P_X) \right\rVert^2\leq \epsilon^2,\ \forall u\in\mathcal{U},
 	\end{align*}
 	where $[\sqrt{P_X}^{-1}]$ is a diagonal matrix with diagonal entries $\{\sqrt{P_X(x)}^{-1},x\in\mathcal{X}\}$. The norm is the Euclidean norm. The first equality is due to the definition of $\chi^2$ divergence between the two distribution vectors $P_{X|U=u}$ and $P_{X}$ \cite[Page 25]{wu2017lecture}, and the second equality is due to the definition of the $\ell_2$-norm.
 \end{definition}
The strong $\chi^2$-privacy criterion means that the all distributions (vectors) $P_{X|U=u}$ for all $u\in\mathcal{U}$ are close to $P_X$ in the Euclidean sense. 
The closeness of $P_{X|U=u}$ and $P_X$ allows us to use the concepts of information geometry so that we can locally approximate the KL divergence and mutual information between $U$ and $Y$ for small $\epsilon>0$.
In \cite{Calmon2}, the concept of $\chi^2$-information between $U$ and $X$ is employed as privacy criterion. The relation between these two criteria is as follows
\begin{align*}
\chi^2_{\text{information}}(X;U)=\mathbb{E}_U \left[\chi^2(P_{X|U=u}||P_X)\right].
\end{align*}

Our goal is to design the privacy mechanism that produces the disclosed data $U$, which maximizes $I(U;Y)$ and satisfies the strong $\chi^2$-privacy criterion. The relation between $U$ and $Y$ is described by the kernel $P_{U|Y}$ defined on $\mathbb{R}^{\mathcal{U}\times\mathcal{K}}$.
Thus, the privacy problem can be stated as follows
\begin{subequations}
\begin{align}
\sup_{P_{U|Y}} \ \ &I(U;Y),\label{privacy}\\
\text{subject to:}\ \ &X-Y-U,\label{Markov}\\
& \left\lVert[\sqrt{P_X}^{-1}](P_{X|U=u}-P_X) \right\rVert^2\leq \epsilon^2,\ \forall u\in\mathcal{U}.\label{local}
\end{align}
\end{subequations}
The strong $\chi^2$-privacy criterion for small $\epsilon$ results in closeness of $P_{Y|U=u}$ and $P_Y$ in the output distributions space, which allows us to transfer the main problem into a linear algebra problem, specifically, finding the largest singular value of a matrix. On the one hand, the criterion results in an upper bound to other privacy metrics such as KL divergence, and on the other hand, it enable us to approximately optimize the main problem.
\begin{remark}
	\normalfont
	For $\epsilon=0$ and $P_{X|Y}$  invertible, $U$ needs to be independent of $X$ and therefore $Y$ so that $I(U;Y)=0$. Accordingly, we are interested in non-trivial cases where $\epsilon$ is positive and sufficiently small \cite[Th. 4]{berger}.  
\end{remark}
\begin{remark}
	\normalfont
	By using an inequality between KL divergence and the strong $\chi^2$-privacy criterion \cite[Page 130]{wu2017lecture}, we have
	\begin{align}\label{KLchi2}
	D(P_{X|U=u}||P_X)\leq\chi^2(P_{X|U=u}||P_X)\leq\epsilon^2,\ \forall u,
	\end{align} 
	where $D(P_{X|U=u}||P_X)$ denotes KL divergence between distributions $P_{X|U=u}$ and $P_X$. Thus, we have
	\begin{align}\label{hir}
	I(U;X)=\sum_{u\in\mathcal{U}}P_U(u)D(P_{X|U=u}||P_X)\leq\epsilon^2.
	\end{align}
	Consequently, information leakage using the strong $\chi^2$-privacy criterion implies also a bound on the mutual information $I(U;X)$.  
	In the following we show that by using the Euclidean information theory method, we can strengthen \eqref{hir} and show that \eqref{local} implies $I(U;X)\leq\frac{1}{2}\epsilon^2+o(\epsilon^2)$ for small $\epsilon$.
\end{remark}
\begin{proposition}
	It suffices to consider $U$ such that $|\mathcal{U}|\leq|\mathcal{Y}|$. Furthermore, a maximum can be used in \eqref{privacy} since the
		corresponding supremum is achieved.
\end{proposition}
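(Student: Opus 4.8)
The plan is a standard support-lemma (Fenchel--Eggleston--Carath\'eodory) argument for the cardinality bound, followed by a compactness argument for the attainment of the supremum. Throughout, recall that $P_Y$ and the invertible leakage matrix $P_{X|Y}$ are fixed, so any candidate mechanism is completely described by its output weights $\{P_U(u)\}$ and its reverse conditionals $\{P_{Y|U=u}\}$, and that the Markov chain \eqref{Markov} gives $P_{X|U=u}=P_{X|Y}\,P_{Y|U=u}$ (matrix times column vector). Let $\mathcal{Q}\subseteq\mathbb{R}^{\mathcal{K}}$ be the set of all probability vectors $q$ on $\mathcal{Y}$ satisfying $\lVert[\sqrt{P_X}^{-1}](P_{X|Y}\,q-P_X)\rVert^2\le\epsilon^2$. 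Since this is a convex quadratic inequality in $q$, the set $\mathcal{Q}$ is convex, hence connected, and it is closed and bounded, hence compact; it is nonempty since $P_Y\in\mathcal{Q}$. Feasibility of a mechanism is then equivalent to $P_{Y|U=u}\in\mathcal{Q}$ for every $u$ together with $\sum_u P_U(u)P_{Y|U=u}=P_Y$.

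For the cardinality bound I would map $q\in\mathcal{Q}$ to $\big(q(y_1),\dots,q(y_{\mathcal{K}-1}),H(q)\big)\in\mathbb{R}^{\mathcal{K}}$ and let $S$ be the image, which is connected because $\mathcal{Q}$ is and the map is continuous. For any feasible mechanism, $P_Y=\sum_u P_U(u)P_{Y|U=u}$ and $H(Y|U)=\sum_u P_U(u)H(P_{Y|U=u})$ with each $P_{Y|U=u}\in\mathcal{Q}$, so the point $\big(P_Y(y_1),\dots,P_Y(y_{\mathcal{K}-1}),H(Y|U)\big)$ lies in the convex hull of $S$. Because $S$ is a connected subset of $\mathbb{R}^{\mathcal{K}}$, the Fenchel--Eggleston strengthening of Carath\'eodory's theorem writes this point as a convex combination of at most $\mathcal{K}=|\mathcal{Y}|$ points of $S$, associated with some $q_1,\dots,q_m\in\mathcal{Q}$ and weights $\lambda_1,\dots,\lambda_m$, $m\le\mathcal{K}$. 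The mechanism with $|\mathcal{U}|=m$, $P_U(j)=\lambda_j$, $P_{Y|U=j}=q_j$ then has output marginal $\sum_j\lambda_j q_j=P_Y$ (so it is consistent with $P_{XY}$ and the Markov chain \eqref{Markov}), satisfies \eqref{local} since each $q_j\in\mathcal{Q}$, and attains $I(U;Y)=H(Y)-\sum_j\lambda_j H(q_j)=H(Y)-H(Y|U)$, the same value as the original mechanism. Hence the supremum in \eqref{privacy} is unchanged under the restriction $|\mathcal{U}|\le|\mathcal{Y}|$.

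For attainment, parametrize the restricted mechanisms by $(\lambda,q_1,\dots,q_{\mathcal{K}})$ in the compact set $\Delta_{\mathcal{K}-1}\times\mathcal{Q}^{\mathcal{K}}$ (padding with zero weights when $m<\mathcal{K}$), intersected with the closed affine constraint $\sum_j\lambda_j q_j=P_Y$; this intersection is compact and nonempty (take $|\mathcal{U}|=1$, $q_1=P_Y$, feasible since then $P_{X|U}=P_X$). On it the objective $(\lambda,q_1,\dots,q_{\mathcal{K}})\mapsto H(Y)-\sum_j\lambda_j H(q_j)$ is continuous, since entropy is continuous and bounded on the simplex so atoms with $\lambda_j\to 0$ cause no discontinuity. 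A continuous function on a nonempty compact set attains its maximum, giving the claim.

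I do not anticipate a genuine obstacle; the argument is routine. The only points needing a little care are identifying the correct quantities to preserve --- the pair $(P_Y,H(Y|U))$, which is $\mathcal{K}$ scalars and simultaneously pins down both the joint law $P_{XY}$ and the objective $I(U;Y)=H(Y)-H(Y|U)$ --- and using the connected-set (Fenchel--Eggleston) version of Carath\'eodory's theorem, which is what produces $|\mathcal{U}|\le|\mathcal{Y}|$ rather than the weaker $|\mathcal{U}|\le|\mathcal{Y}|+1$; the convexity of $\mathcal{Q}$ established above is exactly what makes the connected version applicable.
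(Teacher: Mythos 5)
Your proof is correct and follows essentially the same route as the paper: a Fenchel--Eggleston--Carath\'eodory argument preserving the $\mathcal{K}-1$ coordinates of $P_Y$ together with $H(Y|U)$ to get $|\mathcal{U}|\le|\mathcal{Y}|$, and then a compactness-plus-continuity argument for attainment of the supremum. The only (harmless) difference is that you work directly with the compact convex set of conditional output distributions satisfying the $\chi^2$ constraint, whereas the paper parametrizes them as perturbations $P_Y+\epsilon P_{X|Y}^{-1}J_u$ and invokes the small-$\epsilon$ condition to ensure these are valid distributions.
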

\begin{proof}
	The proof is provided in Appendix~A.
\end{proof}
\section{Privacy Mechanism Design}\label{result}
In this section we follow the method used in \cite{borade,huang} and show that input and output spaces can be reduced to linear spaces where the kernel describes a linear mapping between these two spaces. Thus, the privacy problem can be reduced to a linear algebra problem. The solution to the linear problem is provided and elucidates the optimal mechanism for producing $U$. We consider the resulting design framework to be the main contribution of this work.

By using \eqref{local}, we can rewrite the conditional distribution $P_{X|U=u}$ as a perturbation of $P_X$. Thus, for any $u\in\mathcal{U}$, we can write $P_{X|U=u}=P_X+\epsilon\cdot J_u$, where $J_u\in\mathbb{R}^\mathcal{K}$ is a perturbation vector that has the following three properties
\begin{align}
\sum_{x\in\mathcal{X}} J_u(x)=0,\ \forall u,\label{prop1}\\
\sum_{u\in\mathcal{U}} P_U(u)J_u(x)=0,\ \forall x\label{prop2},\\
\sum_{x\in\mathcal{X}}\frac{J_u^2(x)}{P_X(x)}\leq 1, \forall u\label{prop3}.
\end{align}
The first two properties ensure that $P_{X|U=u}$ is a valid probability distribution and the third property follows from \eqref{local}. The next proposition shows that $I(U;X)$ can be locally approximated by a squared Euclidean metric. 
In the following we use the Bachmann-Landau notation, where $o(\epsilon^2)$ describes the asymptotic behavior of a function $f:\mathbb{R}^+\rightarrow\mathbb{R}$ which satisfies that $\frac{f(\epsilon)}{\epsilon^2}\rightarrow 0$ as $\epsilon\rightarrow 0$.
\begin{proposition}\label{propp1}
	 For all $\epsilon<\frac{\min_{x\in\mathcal{X}}P_X(x)}{\sqrt{\max_{x\in\mathcal{X}}P_X(x)}}$, \eqref{local} results in a leakage constraint as follows
	\begin{align}
	I(X;U)\leq\frac{1}{2}\epsilon^2+o(\epsilon^2),\label{approx1}
	\end{align}
\end{proposition}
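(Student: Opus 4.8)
The plan is to average the second‑order local expansion of KL divergence over $u$. Since the leakage matrix is irrelevant here, I would start from the identity $I(X;U)=\sum_{u\in\mathcal U}P_U(u)\,D(P_{X|U=u}\|P_X)$ already used in \eqref{hir}; because the weights $P_U(u)$ sum to one, it suffices to show that $D(P_{X|U=u}\|P_X)\le\frac12\epsilon^2+o(\epsilon^2)$ \emph{uniformly} in $u$, and then sum.

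First I would check that the hypothesis $\epsilon<\frac{\min_{x}P_X(x)}{\sqrt{\max_{x}P_X(x)}}$ keeps every $P_{X|U=u}$ strictly positive, which is what makes the logarithm and its Taylor expansion legitimate: property \eqref{prop3} gives $J_u^2(x)\le P_X(x)$, hence $|J_u(x)|\le\sqrt{\max_x P_X(x)}$ uniformly in $u,x$, so $P_{X|U=u}(x)=P_X(x)+\epsilon J_u(x)\ge\min_x P_X(x)-\epsilon\sqrt{\max_x P_X(x)}>0$ exactly under the stated bound. Next, writing $t_u(x):=\epsilon J_u(x)/P_X(x)$ (which by the previous step is bounded by a constant times $\epsilon$, uniformly in $u,x$) and using $\log(1+t)=t-\tfrac{t^2}{2}+O(t^3)$, I would expand
\begin{align*}
D(P_{X|U=u}\|P_X)&=\sum_x\bigl(P_X(x)+\epsilon J_u(x)\bigr)\log\bigl(1+t_u(x)\bigr)\\
&=\sum_x\Bigl(\epsilon J_u(x)+\tfrac{\epsilon^2 J_u^2(x)}{2P_X(x)}\Bigr)+r_u .
\end{align*}
The first sum vanishes by \eqref{prop1}, the second equals $\tfrac{\epsilon^2}{2}\sum_x J_u^2(x)/P_X(x)\le\tfrac{\epsilon^2}{2}$ by \eqref{prop3}, and $r_u$ collects the cubic‑and‑higher terms. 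Averaging over $u$ then yields $I(X;U)\le\frac12\epsilon^2+\sum_u P_U(u)r_u$, and it remains to argue $\sum_u P_U(u)r_u=o(\epsilon^2)$.

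The one delicate point — and the main obstacle — is precisely the uniformity of the remainder: $\mathcal U$ is a priori large, so one cannot just invoke "$D=\frac12\chi^2+o(\chi^2)$ as $Q\to P$" pointwise in $u$. This is handled by the uniform bound $|J_u(x)|\le\sqrt{\max_x P_X(x)}$ from \eqref{prop3} together with the finiteness of $\mathcal X$: since $|t_u(x)|\le \epsilon/\sqrt{\min_x P_X(x)}\le\tfrac12$ for $\epsilon$ small, the tail of $\log(1+t_u(x))$ past its quadratic part is $O(|t_u(x)|^3)$ with an absolute constant, giving $|r_u|\le C\epsilon^3$ with $C$ depending only on $P_X$ and $|\mathcal X|$ but not on $u$. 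Hence $\sum_u P_U(u)r_u\le C\epsilon^3=o(\epsilon^2)$, which completes the argument. (Proposition~1's reduction to $|\mathcal U|\le|\mathcal Y|$ would give an alternative route to uniformity, but it is not needed.) Everything beyond this is routine Taylor bookkeeping.
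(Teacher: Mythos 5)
Your proposal is correct and follows essentially the same route as the paper's Appendix~B: expand $D(P_{X|U=u}\|P_X)$ via the second-order Taylor series of $\log(1+\epsilon J_u/P_X)$, kill the linear term with \eqref{prop1}, bound the quadratic term by $\tfrac12\epsilon^2$ via \eqref{prop3}, and note that the stated bound on $\epsilon$ guarantees $|\epsilon J_u(x)/P_X(x)|<1$ so the expansion is valid. Your extra bookkeeping on the uniform-in-$u$ control of the cubic remainder is a welcome refinement of a point the paper leaves implicit, but it does not change the argument.
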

\begin{proof}
	The proof is provided in Appendix B.
\end{proof}
Now we show that the distribution $P_{Y|U=u}$ can be written as a linear perturbation of $P_Y$.
Since we have the Markov chain $X-Y-U$, we can write
\begin{align*}
P_{X|U=u}-P_X=P_{X|Y}[P_{Y|U=u}-P_Y]=\epsilon\cdot J_u.
\end{align*}
Due to the assumed non-singularity of the leakage matrix we obtain
\begin{align}\label{pyu}
P_{Y|U=u}-P_Y=P_{X|Y}^{-1}[P_{X|U=u}-P_X]=\epsilon\cdot P_{X|Y}^{-1}J_u.
\end{align}
The next proposition shows that $I(U;Y)$ can be locally approximated by a squared Euclidean metric \cite{borade,huang}.
	In the following we use the notation $f(x)\cong g(x)$ to describe the following equality
		\begin{align*}
		f(x)=g(x)+o(\epsilon^2),
		\end{align*}
		for a fixed $\epsilon>0$, as it will be clear from the context.
\begin{proposition}
	For all $\epsilon<\frac{|\sigma_{\text{min}}(P_{X|Y})|\min_{y\in\mathcal{Y}}P_Y(y)}{\sqrt{\max_{x\in{\mathcal{X}}}P_X(x)}}$, $I(U;Y)$ can be approximated as follows
		\begin{align}
	I(Y;U)\cong\frac{1}{2}\epsilon^2\sum_u P_U\|[\sqrt{P_Y}^{-1}]P_{X|Y}^{-1}[\sqrt {P_X}]L_u\|^2,\label{approx2}
	\end{align}
	where $[\sqrt{P_Y}^{-1}]$ and $[\sqrt{P_X}]$ are diagonal matrices with diagonal entries $\{\sqrt{P_Y}^{-1},\ \forall y\in\mathcal{Y}\}$ and $\{\sqrt{P_X},\ \forall x\in\mathcal{X}\}$. Furthermore, for every $u\in\mathcal{U}$ we have $L_u=[\sqrt{P_X}^{-1}]J_u\in\mathbb{R}^{\mathcal{K}}$.
\end{proposition}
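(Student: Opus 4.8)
The plan is to follow the same local-expansion strategy used for Proposition~\ref{propp1}, but now applied to the \emph{output} distributions $P_{Y|U=u}$ instead of $P_{X|U=u}$, and then to carry the resulting weighted Euclidean distance back to the vectors $L_u$ through the inverse leakage matrix. Writing $I(Y;U)=\sum_{u\in\mathcal{U}}P_U(u)\,D(P_{Y|U=u}\|P_Y)$, it suffices to expand each divergence term locally and to check that the error is still $o(\epsilon^2)$ after the summation over $u$.

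First I would verify that $P_{Y|U=u}$ is an order-$\epsilon$ perturbation of $P_Y$, uniformly in $u$. By \eqref{pyu} and $J_u=[\sqrt{P_X}]L_u$ we have $P_{Y|U=u}-P_Y=\epsilon\,P_{X|Y}^{-1}[\sqrt{P_X}]L_u$, while property \eqref{prop3} says precisely $\|L_u\|\le 1$, hence $\|J_u\|\le\sqrt{\max_{x}P_X(x)}$ and
\begin{align*}
\|P_{Y|U=u}-P_Y\|\le\epsilon\,\|P_{X|Y}^{-1}\|_{\text{op}}\sqrt{\max_{x}P_X(x)}=\frac{\epsilon\sqrt{\max_{x}P_X(x)}}{|\sigma_{\text{min}}(P_{X|Y})|}.
\end{align*}
The hypothesis on $\epsilon$ makes the right-hand side strictly smaller than $\min_{y}P_Y(y)$, so every coordinate of $P_{Y|U=u}$ stays strictly positive; as $\epsilon\to 0$ all the distributions $P_{Y|U=u}$ then lie in a fixed compact subset of the interior of the simplex. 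This is what legitimizes the Taylor expansion below and makes its remainder uniform in $u$.

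Next I would invoke the standard second-order expansion of KL divergence near $P_Y$: since $P_{Y|U=u}$ and $P_Y$ are both probability vectors, the linear term vanishes and
\begin{align*}
D(P_{Y|U=u}\|P_Y)=\frac{1}{2}\,\chi^2(P_{Y|U=u}\|P_Y)+o(\epsilon^2)=\frac{1}{2}\big\|[\sqrt{P_Y}^{-1}](P_{Y|U=u}-P_Y)\big\|^2+o(\epsilon^2),
\end{align*}
where, by the previous paragraph, the $o(\epsilon^2)$ term is uniform in $u$ (its implied constant depends only on $\min_y P_Y(y)$ and on the uniform bound $\|P_{Y|U=u}-P_Y\|=O(\epsilon)$). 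Substituting $P_{Y|U=u}-P_Y=\epsilon\,P_{X|Y}^{-1}[\sqrt{P_X}]L_u$ gives $D(P_{Y|U=u}\|P_Y)\cong\frac{1}{2}\epsilon^2\|[\sqrt{P_Y}^{-1}]P_{X|Y}^{-1}[\sqrt{P_X}]L_u\|^2$.

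Finally I would sum over $u$ with weights $P_U(u)$. Since $|\mathcal{U}|\le|\mathcal{Y}|$ is finite and the per-$u$ error terms are uniformly $o(\epsilon^2)$, we get $\sum_u P_U(u)\,o(\epsilon^2)=o(\epsilon^2)$, whence
\begin{align*}
I(Y;U)=\sum_u P_U(u)\,D(P_{Y|U=u}\|P_Y)\cong\frac{1}{2}\epsilon^2\sum_u P_U(u)\big\|[\sqrt{P_Y}^{-1}]P_{X|Y}^{-1}[\sqrt{P_X}]L_u\big\|^2,
\end{align*}
which is \eqref{approx2}. The main obstacle is precisely the uniformity of the Taylor remainder across $u$; it is handled by the bound $\|L_u\|\le 1$ together with the finiteness of $\mathcal{U}$ and the stated restriction on $\epsilon$, which keep every $P_{Y|U=u}$ uniformly away from the boundary of the simplex.
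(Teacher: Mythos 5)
Your proposal is correct and follows essentially the same route as the paper: decompose $I(Y;U)$ into KL divergences, perform a second-order local expansion in which the linear term vanishes (your ``both are probability vectors'' argument is equivalent to the paper's identity $\bm{1}^T P_{X|Y}^{-1}J_u=\bm{1}^T J_u=0$), and justify the expansion via the same bound chain $\epsilon\|P_{X|Y}^{-1}J_u\|\leq \epsilon\sqrt{\max_x P_X(x)}/|\sigma_{\min}(P_{X|Y})|<\min_y P_Y(y)$. The only difference is cosmetic: you cite the standard $D\cong\frac{1}{2}\chi^2$ approximation where the paper expands $\log(1+x)$ explicitly.
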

\begin{proof}
	For the local approximation of the KL-divergence we follow similar arguments as in \cite{borade,huang}:
	\begin{align*}
	I(Y;U)&=\sum_u P_U(u)D(P_{Y|U=u}||P_Y)\\&=\sum_u P_U(u)\sum_y P_{Y|U=u}(y)\log\left(\frac{P_{Y|U=u}(y)}{P_Y(y)}\right)\\&=\sum_u P_U(u)\sum_y\! P_{Y|U=u}(y)\log\left(\!1\!+\!\epsilon\frac{P_{X|Y}^{-1}J_u(y)}{P_Y(y)}\right)\\&\stackrel{(a)}{=}\frac{1}{2}\epsilon^2\sum_u P_U\sum_y
	\frac{(P_{X|Y}^{-1}J_u)^2}{P_Y}+o(\epsilon^2)\\
	&=\frac{1}{2}\epsilon^2\sum_u P_U\|[\sqrt{P_Y}^{-1}]P_{X|Y}^{-1}J_u\|^2+o(\epsilon^2)
	\\&\cong\frac{1}{2}\epsilon^2\sum_u P_U\|[\sqrt{P_Y}^{-1}]P_{X|Y}^{-1}[\sqrt{P_X}]L_u\|^2,
\end{align*}
where (a) comes from second order Taylor expansion of $\log(1+x)$ which is equal to $x-\frac{x^2}{2}+o(x^2)$ and using the fact that we have $\sum_y P_{X|Y}^{-1}J_u(y)=0$. The latter follows from \eqref{prop1} and the property of the leakage matrix $\bm{1}^T\cdot P_{X|Y}=\bm{1}^T$, we have
\begin{align*}
	0=\bm{1}^T\cdot J_u=\bm{1}^T\cdot P_{X|Y}^{-1}J_u,  
\end{align*}
where $\bm{1}\in\mathbb{R}^{\mathcal{K}}$ denotes a vector with all entries equal to $1$. For approximating $I(U;Y)$, we use the second Taylor expansion of $\log(1+x)$. Therefore we must have $|\epsilon\frac{P_{X|Y}^{-1}J_u(y)}{P_Y(y)}|<1$ for all $u$ and $y$. One sufficient condition for $\epsilon$ to satisfy this inequality is to have $\epsilon<\frac{|\sigma_{\text{min}}(P_{X|Y})|\min_{y\in\mathcal{Y}}P_Y(y)}{\sqrt{\max_{x\in{\mathcal{X}}}P_X(x)}}$, since in this case we have
\begin{align*}
\epsilon^2|P_{X|Y}^{-1}J_u(y)|^2&\leq\epsilon^2\left\lVert P_{X|Y}^{-1}J_u\right\rVert^2\leq\epsilon^2 \sigma_{\max}^2\left(P_{X|Y}^{-1}\right)\left\lVert J_u\right\rVert^2\\&\stackrel{(a)}\leq\frac{\epsilon^2\max_{x\in{\mathcal{X}}}P_X(x)}{\sigma^2_{\text{min}}(P_{X|Y})}<\min_{y\in\mathcal{Y}} P_Y^2(y),
\end{align*}
which implies $|\epsilon\frac{P_{X|Y}^{-1}J_u(y)}{P_Y(y)}\!|<1$.
The step (a) follows from $\sigma_{\max}^2\left(P_{X|Y}^{-1}\right)=\frac{1}{\sigma_{\min}^2\left(P_{X|Y}\right)}$ and $\|J_u\|^2\leq\max_{x\in{\mathcal{X}}}P_X(x)$. The latter inequality follows from \eqref{prop3} since we have
\begin{align*}
\frac{\|J_u\|^2}{\max_{x\in{\mathcal{X}}}P_X(x)}\leq \sum_{x\in\mathcal{X}}\frac{J_u^2(x)}{P_X(x)}\leq 1.
\end{align*}
\end{proof}
	\begin{remark}
		In Proposition~2 and Proposition~3, approximation of mutual information is based on a Taylor expansion of $\log(1+x)$, where the expansion converges for all $|x|<1$. The upper bounds on $\epsilon$ are derived from this constraint, i.e., for all those $\epsilon$ the Taylor expansion of KL-divergence converges.
\end{remark}
The following result shows that by using local approximation in \eqref{approx2}, the privacy problem defined in \eqref{privacy} can be reduced to a linear algebra problem. In more detail, by substituting $L_u$ in \eqref{prop1}, \eqref{prop2} and \eqref{prop3} we obtain next corollary. 
\begin{corollary}\label{corr1}
	For all $\epsilon<\frac{|\sigma_{\text{min}}(P_{X|Y})|\min_{y\in\mathcal{Y}}P_Y(y)}{\sqrt{\max_{x\in{\mathcal{X}}}P_X(x)}}$, the privacy mechanism design problem in \eqref{privacy} can be approximately solved by the following linear problem
	\begin{align}
	\max_{\{L_u,P_U\}} \ &\sum_u P_U(u)\|W\cdot L_u\|^2,\label{newprob2}\\
	\text{subject to:}\ &\|L_u\|^2\leq 1,\ \forall u\in\mathcal{U},\\
	&\sum_x \sqrt{P_X(x)}L_u(x)=0,\ \forall u,\label{orth}\\
	&\sum_u P_U(u)\sqrt{P_X(x)}L_u(x)=0,\ \forall x,\label{orth2}
	\end{align}
	where $W=[\sqrt{P_Y}^{-1}]P_{X|Y}^{-1}[\sqrt{P_X}]$ and the $o(\epsilon^2)$-term is ignored.
\end{corollary}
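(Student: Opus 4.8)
The plan is to combine the perturbation parametrization introduced just before Proposition~2 with the local approximation of $I(U;Y)$ from Proposition~3, and then re-express the feasibility constraints of \eqref{privacy} in the coordinates $L_u$.

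First I would argue that the optimization over $P_{U|Y}$ is the same as an optimization over pairs $(\{J_u\}_{u\in\mathcal{U}},P_U)$. Indeed, by \eqref{local} every feasible $P_{U|Y}$, together with the induced $P_U$ and the conditionals $P_{X|U=u}$ obtained through the Markov chain \eqref{Markov}, yields perturbation vectors $J_u$ via $P_{X|U=u}=P_X+\epsilon\cdot J_u$ satisfying \eqref{prop1}--\eqref{prop3}; conversely, given a pmf $P_U$ and vectors $\{J_u\}$ obeying \eqref{prop1}--\eqref{prop3}, the $P_{X|U=u}$ are valid pmfs, and since $P_{X|Y}$ is fixed and invertible, \eqref{pyu} produces valid $P_{Y|U=u}$ and hence a valid kernel $P_{U|Y}$ consistent with the Markov chain. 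Substituting $J_u=[\sqrt{P_X}]L_u$, i.e. $L_u=[\sqrt{P_X}^{-1}]J_u$, turns \eqref{prop1} into $\sum_x\sqrt{P_X(x)}L_u(x)=0$, which is \eqref{orth}; \eqref{prop2} into $\sum_u P_U(u)\sqrt{P_X(x)}L_u(x)=0$, which is \eqref{orth2}; and \eqref{prop3} into $\sum_x J_u^2(x)/P_X(x)=\sum_x L_u^2(x)=\|L_u\|^2\le 1$, which is the norm constraint in the statement.

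Next I would invoke Proposition~3: for $\epsilon$ below the stated threshold, $I(Y;U)=\frac12\epsilon^2\sum_u P_U(u)\|W L_u\|^2+o(\epsilon^2)$ with $W=[\sqrt{P_Y}^{-1}]P_{X|Y}^{-1}[\sqrt{P_X}]$, which is exactly \eqref{approx2}. Since $\frac12\epsilon^2$ is a positive constant independent of the variables, maximizing $I(Y;U)$ is, after discarding the $o(\epsilon^2)$ correction, equivalent to maximizing $\sum_u P_U(u)\|W L_u\|^2$ over $\{L_u,P_U\}$ subject to \eqref{orth}, \eqref{orth2} and $\|L_u\|^2\le 1$, which is precisely \eqref{newprob2}. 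By Proposition~1 it suffices to take $|\mathcal{U}|\le|\mathcal{Y}|$ and the supremum in \eqref{privacy} is attained, so a maximum may likewise be used in the reduced problem.

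The step I expect to require the most care is the \emph{uniformity} of the $o(\epsilon^2)$ remainder across the feasible set: for "maximize, then approximate" to agree with "approximate, then maximize" as $\epsilon\to 0$, the remainder in Proposition~3 must be $o(\epsilon^2)$ uniformly over all $\{L_u\}$ with $\|L_u\|^2\le 1$, all pmfs $P_U$, and $|\mathcal{U}|\le|\mathcal{Y}|$. This is exactly where the compactness of the feasible region (bounded $L_u$, $P_U$ in a fixed simplex, and $|\mathcal{U}|$ bounded via Proposition~1) together with the explicit bound on $|\epsilon\, P_{X|Y}^{-1}J_u(y)/P_Y(y)|$ derived in the proof of Proposition~3 comes in; I would make that uniformity explicit, after which the corollary follows by ignoring the vanishing term as stated.
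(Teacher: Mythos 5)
Your proposal is correct and follows essentially the same route as the paper: substitute $J_u=[\sqrt{P_X}]L_u$ so that \eqref{prop1}--\eqref{prop3} become \eqref{orth}, \eqref{orth2} and $\|L_u\|^2\le 1$, then invoke the local approximation \eqref{approx2} of Proposition~3 (valid under the stated bound on $\epsilon$) and drop the $o(\epsilon^2)$ term, with Proposition~1 handling the cardinality and attainment of the maximum. Your added remarks on the kernel--perturbation correspondence and the uniformity of the remainder are consistent with the paper's Appendix~A and the explicit bound $|\epsilon P_{X|Y}^{-1}J_u(y)/P_Y(y)|<1$ used there, so no gap.
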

		Condition \eqref{orth} can be interpreted as an inner product between vectors $L_u$ and $\sqrt{P_X}$, where$\sqrt{P_X}\in\mathbb{R}^{\mathcal{K}}$ is a vector with entries $\{\sqrt{P_X(x)},\ x\in\mathcal{X}\}$. Thus, condition \eqref{orth} states an orthogonality condition. 
		Furthermore, \eqref{orth2} can be rewritten in vector form as $\sum_u P_U(u)L_u=\bm 0\in\mathcal{R}^{\mathcal{K}}$ using the assumption that $P_X(x)>0$ for all $x\in\mathcal{X}$.
	Therewith, the problem in Corollary \ref{corr1} can be rewritten as
	\begin{align}
	\max_{\begin{array}{c}   \substack{L_u,P_U:\|L_u\|^2\leq 1,\\ L_u\perp\sqrt{P_X},\\ \sum_u P_U(u)L_u=\bm 0} \end{array}} \sum_u P_U(u)\|W\cdot L_u\|^2.\label{max}
	\end{align}
	The next proposition shows how to simplify \eqref{max}.
	\begin{proposition}
		Let $L^*$ be the maximizer of \eqref{max2}, then \eqref{max} and \eqref{max2} achieve the same maximum value while $U$ as a uniform binary RV with $L_0=-L_1=L^*$ maximizes \eqref{max}.
		\begin{align}
		\max_{L:L\perp \sqrt{P_X},\ \|L\|^2\leq 1} \|W\cdot L\|^2.\label{max2}
		\end{align}
		\begin{proof}
			Let $\{L_u^*,P_U^*\}$ be the maximizer of \eqref{max}. Furthermore, let $u'$ be the index that maximizes $\|W\cdot L_{u}^*\|^2$, i.e., $u'=\text{argmax}_{u\in\mathcal{U}} \|W\cdot L_{u}^*\|^2$. Then we have

			\begin{align*}
			\sum_u P_U^*(u)||W\cdot L_u^*||^2\leq ||W\cdot L_{u'}^*||^2\leq||W\cdot L^*||^2,
			\end{align*}
			where the right inequality comes from the fact that $L^*$ has to satisfy one less constraint than $L_{u'}^*$. However, by choosing $U$ as a uniform binary RV and $L_0=-L_1=L^*$ the constraints in \eqref{max} are satisfied and the maximum in \eqref{max2} is achieved. Thus, without loss of optimality we can choose $U$ as a uniformly distributed binary RV and \eqref{max} reduces to \eqref{max2}.
		\end{proof}
	\end{proposition}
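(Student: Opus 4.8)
The plan is to prove the two problems have the same optimal value by a two-sided inequality, and then check that the proposed binary $U$ is an explicit optimizer of \eqref{max}.

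First I would argue the direction ``value of \eqref{max} $\le$ value of \eqref{max2}''. Take any feasible tuple $\{L_u,P_U\}$ for \eqref{max}. Since the objective $\sum_u P_U(u)\|W\cdot L_u\|^2$ is a convex combination of the nonnegative scalars $\|W\cdot L_u\|^2$, it is at most $\max_u \|W\cdot L_u\|^2$; let $u'$ attain this maximum. The vector $L_{u'}$ obeys $\|L_{u'}\|^2\le 1$ and $L_{u'}\perp\sqrt{P_X}$, so it is feasible for \eqref{max2} (the extra mean-zero coupling constraint of \eqref{max} simply does not appear in \eqref{max2}), whence $\|W\cdot L_{u'}\|^2\le\|W\cdot L^*\|^2$. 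Chaining these gives the inequality. This part is routine.

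Next I would handle achievability, i.e. ``value of \eqref{max} $\ge$ value of \eqref{max2}'', together with the claimed optimizer. Note first that a maximizer $L^*$ of \eqref{max2} exists, since $\{L:L\perp\sqrt{P_X},\ \|L\|^2\le 1\}$ is compact and $L\mapsto\|W\cdot L\|^2$ is continuous. Put $\mathcal{U}=\{0,1\}$, $P_U(0)=P_U(1)=1/2$, and $L_0=-L_1=L^*$. I would then verify the three constraints of \eqref{max}: $\|L_0\|^2=\|L_1\|^2=\|L^*\|^2\le 1$; both $L_0$ and $L_1$ are scalar multiples of $L^*$, hence orthogonal to $\sqrt{P_X}$; and $\sum_u P_U(u)L_u=\tfrac12 L^*-\tfrac12 L^*=\bm 0$. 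The objective evaluates to $\tfrac12\|W\cdot L^*\|^2+\tfrac12\|W\cdot(-L^*)\|^2=\|W\cdot L^*\|^2$, which is exactly the optimal value of \eqref{max2}. Combined with the first part, the two maxima coincide and this binary uniform $U$ attains the maximum in \eqref{max}. I would also remark in passing that $|\mathcal{U}|=2\le\mathcal{K}=|\mathcal{Y}|$, so the construction is consistent with Proposition~1.

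The only step worth a moment's reflection — what I would flag as the ``hard part,'' though it is quite soft — is the justification for discarding the coupling constraint $\sum_u P_U(u)L_u=\bm 0$ when reducing to the single-vector problem. The upper bound works because removing a constraint can only enlarge the feasible set; the matching lower bound works because the objective depends on each $L_u$ only through $\|W\cdot L_u\|^2$, which is even in $L_u$, so replacing a single optimal direction by the antipodal pair $\{L^*,-L^*\}$ both satisfies the coupling constraint automatically and leaves the objective unchanged. Everything else is a direct feasibility check and a convex-combination bound.
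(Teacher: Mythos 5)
Your proposal is correct and follows essentially the same argument as the paper: the upper bound by replacing the convex combination with its best term and dropping the coupling constraint, and the matching lower bound via the uniform binary $U$ with $L_0=-L_1=L^*$, whose feasibility and objective value you verify explicitly. Your additional remarks (existence of $L^*$ by compactness, consistency with $|\mathcal{U}|\leq\mathcal{K}$) are fine but not needed beyond the paper's reasoning.
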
 
	After finding the solution of \eqref{max2}, the conditional distributions $P_{X|U=u}$ and $P_{Y|U=u}$ are given by
	\begin{align}
	P_{X|U=0}&=P_X+\epsilon[\sqrt{P_X}]L^*,\label{cond11}\\
	P_{X|U=1}&=P_X-\epsilon[\sqrt{P_X}]L^*,\label{cond12}\\
	P_{Y|U=0}&=P_Y+\epsilon P_{X|Y}^{-1}[\sqrt{P_X}]L^*,\label{condis1}\\
	P_{Y|U=1}&=P_Y-\epsilon P_{X|Y}^{-1}[\sqrt{P_X}]L^*.\label{condis2}
	\end{align}

In next theorem we derive the solution of \eqref{max2}.
\begin{theorem}\label{th1}
	$L^*$, which maximizes \eqref{max2}, is the right singular vector corresponding to the largest singular value of $W$.
\end{theorem}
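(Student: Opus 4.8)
The plan is to read \eqref{max2} as a Rayleigh-quotient problem for the symmetric matrix $W^TW$ and to argue that the orthogonality constraint $L\perp\sqrt{P_X}$ does not lower the optimal value. Writing $\|WL\|^2=L^TW^TWL$, the standard variational characterization of singular values gives $\max_{\|L\|^2\le 1}\|WL\|^2=\sigma_{\max}^2(W)$, attained at a right singular vector $v_1$ of $W$ belonging to $\sigma_{\max}(W)$. So it would be enough to show that such a $v_1$ can be taken orthogonal to $\sqrt{P_X}$: then $v_1$ is feasible for \eqref{max2}, and since it already maximizes $\|WL\|^2$ over the larger set $\{\|L\|^2\le 1\}$, it is a fortiori optimal for \eqref{max2}.

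The key step I would establish is that $\sqrt{P_X}$ is itself a right singular vector of $W$, with singular value $1$. First, $[\sqrt{P_X}]\sqrt{P_X}=P_X$ and $P_{X|Y}^{-1}P_X=P_Y$ (because $P_{X|Y}P_Y=P_X$ by marginalization), hence $W\sqrt{P_X}=[\sqrt{P_Y}^{-1}]P_Y=\sqrt{P_Y}$. Second, $W^T=[\sqrt{P_X}](P_{X|Y}^{-1})^T[\sqrt{P_Y}^{-1}]$ and $[\sqrt{P_Y}^{-1}]\sqrt{P_Y}=\bm 1$; from $\bm 1^TP_{X|Y}=\bm 1^T$ (already used in the proof of Proposition~3) we get $\bm 1^TP_{X|Y}^{-1}=\bm 1^T$, i.e. $(P_{X|Y}^{-1})^T\bm 1=\bm 1$, so $W^T\sqrt{P_Y}=[\sqrt{P_X}]\bm 1=\sqrt{P_X}$. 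Combining the two, $W^TW\sqrt{P_X}=\sqrt{P_X}$, so $\sqrt{P_X}$ is an eigenvector of $W^TW$ with eigenvalue $1$, equivalently a right singular vector of $W$ with singular value $1$; note also $\|\sqrt{P_X}\|^2=\sum_x P_X(x)=1$ and likewise $\|\sqrt{P_Y}\|=1$.

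To conclude: since $\sqrt{P_X}$ is an eigenvector of the symmetric matrix $W^TW$, the subspace $\{\sqrt{P_X}\}^\perp$ is $W^TW$-invariant, and the maximum of $L^TW^TWL$ over unit vectors in it equals the largest eigenvalue of $W^TW$ restricted to that subspace, attained at the corresponding eigenvector — again a right singular vector of $W$. Because $\|W\sqrt{P_X}\|=\|\sqrt{P_X}\|$ we have $\sigma_{\max}(W)\ge 1$, and in the non-degenerate cases of interest $\sigma_{\max}(W)>1$, so $\sigma_{\max}(W)$ is not the singular value carried by $\sqrt{P_X}$; then any top right singular vector $v_1$ is automatically orthogonal to $\sqrt{P_X}$, hence feasible, hence the maximizer $L^*$ of \eqref{max2}. (If instead $\sigma_{\max}(W)=1$, the restriction argument above still delivers $L^*$ as a right singular vector of $W$.)

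The main obstacle is recognizing and verifying the identity $W^TW\sqrt{P_X}=\sqrt{P_X}$: once one sees that the feasibility constraint $L\perp\sqrt{P_X}$ is exactly orthogonality to a singular vector of $W$, the constrained and unconstrained problems coincide and the rest is the routine Courant–Fischer argument. A secondary subtlety to phrase carefully is the boundary case $\sigma_{\max}(W)=1$, where ``the largest singular value'' is the one attached to $\sqrt{P_X}$ and the conclusion is cleanest when stated via the restriction of $W$ to $\{\sqrt{P_X}\}^\perp$.
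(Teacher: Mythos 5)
Your first step---verifying $W\sqrt{P_X}=\sqrt{P_Y}$ and $W^T\sqrt{P_Y}=\sqrt{P_X}$, hence $W^TW\sqrt{P_X}=\sqrt{P_X}$---coincides with the opening computation of the paper's proof, and the invariant-subspace/Courant--Fischer reduction is sound. The genuine gap is that you never prove the fact on which the conclusion actually rests: that $1$ is the \emph{smallest} singular value of $W$, equivalently that every singular value of $W^{-1}=[\sqrt{P_X}^{-1}]P_{X|Y}[\sqrt{P_Y}]$ is at most $1$. From $\lVert W\sqrt{P_X}\rVert=1$ you only get $\sigma_{\max}(W)\ge 1$, and you then assert that ``in the non-degenerate cases of interest'' $\sigma_{\max}(W)>1$. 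Without the missing bound you cannot exclude the scenario $\sigma_{\max}(W)=1$ with all other singular values strictly below $1$: there the unique principal right singular vector would be $\sqrt{P_X}$ itself, which is infeasible for \eqref{max2}, the constrained optimum would be attained at a non-principal singular vector, and the theorem's conclusion would fail. So ``$\sigma_{\max}(W)>1$ unless everything is degenerate'' is precisely what must be proven, not assumed; your parenthetical treatment of the case $\sigma_{\max}(W)=1$ only delivers ``some right singular vector,'' which is weaker than the statement.

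The paper closes this hole with the second (and harder) half of Appendix C: it shows $\lVert W^{-1}\alpha\rVert^2\le\lVert\alpha\rVert^2$ for every $\alpha$ by expanding $\alpha^T(W^{-1})^TW^{-1}\alpha$, rewriting the diagonal terms as $\sum_i P_{X_i|Y_j}P_{Y_j|X_i}$, and bounding each cross term through the arithmetic--geometric mean inequality
\begin{align*}
2\alpha_m\alpha_n\frac{P_{X_i|Y_m}P_{X_i|Y_n}\sqrt{P_{Y_n}P_{Y_m}}}{P_{X_i}}\leq \alpha_m^2P_{X_i|Y_m}P_{Y_n|X_i}+\alpha_n^2P_{X_i|Y_n}P_{Y_m|X_i}.
\end{align*}
(Equivalently, one may invoke that $W^{-1}$ is the divergence transition matrix $Q_{X,Y}$, whose spectral norm equals $1$.) Once all singular values of $W$ are known to be at least $1$, either $\sigma_{\max}(W)>1$, in which case every principal right singular vector is automatically orthogonal to $\sqrt{P_X}$ and your argument goes through verbatim, or all singular values equal $1$, in which case any feasible unit vector attains $\sigma_{\max}^2$ and the claim holds trivially. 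Supply this bound (or a citation for it) and your proof is complete.
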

\begin{proof}
	The proof is provided in Appendix C.
\end{proof}
By using Theorem~\ref{th1}, the solution to the problem in Corollary~\ref{corr1} can be summarized as $\{P_U^*,L_u^*\}=\{U\ \text{uniform binary RV},\ L_0=-L_1=L^*\}$, where $L^*$ is the solution of \eqref{max2}. Thus, we have the following result.
\begin{corollary}
	The maximum value in \eqref{privacy} can be approximated by $\frac{1}{2}\epsilon^2\sigma_{\text{max}}^2$ for small $\epsilon$ and can be achieved by a privacy mechanism characterized by the conditional distributions found in \eqref{condis1} and \eqref{condis2},
	 where $\sigma_{\text{max}}$ is the largest singular value of $W$ corresponding to the right singular vector $L^*$.
	\end{corollary}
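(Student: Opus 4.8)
The plan is to chain together the reductions already established and then verify feasibility of the resulting mechanism. First I would use Proposition~3 together with Corollary~\ref{corr1}: for every $\epsilon$ below the stated threshold and every feasible $P_{U|Y}$, the perturbation vectors satisfy $\|L_u\|^2\le 1$, \eqref{orth}, \eqref{orth2}, and $I(U;Y)=\frac{1}{2}\epsilon^2\sum_u P_U(u)\|W L_u\|^2+o(\epsilon^2)$. Since Proposition~1 lets us restrict to $|\mathcal{U}|\le|\mathcal{Y}|$, the feasible pairs $(P_U,\{L_u\})$ range over a fixed compact finite-dimensional set on which the Taylor remainder of $\log(1+\cdot)$ is controlled uniformly, so the $o(\epsilon^2)$ above is uniform; hence the maximum in \eqref{privacy} equals $\frac{1}{2}\epsilon^2$ times the optimal value of \eqref{max}, up to $o(\epsilon^2)$.

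Next I would invoke Proposition~4 to replace \eqref{max} by the single-vector problem \eqref{max2} with the same optimal value, and then Theorem~\ref{th1}, which identifies its maximizer $L^*$ as the right singular vector of $W$ for the largest singular value $\sigma_{\max}$. Being a singular vector, $\|L^*\|=1$ and $\|W L^*\|^2=\sigma_{\max}^2$, so the optimal value of \eqref{max2}, hence of \eqref{max}, is $\sigma_{\max}^2$; combining with the first step gives the approximation $\frac{1}{2}\epsilon^2\sigma_{\max}^2$ for the maximum of \eqref{privacy}.

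It then remains to exhibit the optimizer. By Proposition~4 the optimum of \eqref{max} is attained by $U$ a uniform binary RV with $L_0=-L_1=L^*$; converting back through $J_u=[\sqrt{P_X}]L_u$, $P_{X|U=u}=P_X+\epsilon J_u$ and \eqref{pyu} produces exactly \eqref{cond11}, \eqref{cond12} and \eqref{condis1}, \eqref{condis2}. I would then check this is a genuine privacy mechanism: each $P_{Y|U=u}$ sums to $1$ because $\bm{1}^T P_{X|Y}^{-1}=\bm{1}^T$ and $L^*\perp\sqrt{P_X}$ by \eqref{orth}; the two distributions average back to $P_Y$ since $P_U$ is uniform and $L_0+L_1=\bm{0}$; all entries stay nonnegative for small $\epsilon$ because $P_Y$ has strictly positive entries while the perturbation has norm $O(\epsilon)$, so $P_{U|Y}(u|y)=P_U(u)P_{Y|U=u}(y)/P_Y(y)$ is a valid kernel compatible with \eqref{Markov}; and \eqref{local} holds with equality since $\|[\sqrt{P_X}^{-1}](P_{X|U=u}-P_X)\|^2=\epsilon^2\|L^*\|^2=\epsilon^2$. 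For this mechanism $\sum_u P_U(u)\|W L_u\|^2=\sigma_{\max}^2$ exactly, so $I(U;Y)=\frac{1}{2}\epsilon^2\sigma_{\max}^2+o(\epsilon^2)$, which matches the bound from the previous steps and settles both the value and the achievability claims.

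I expect the one delicate point to be the uniformity of the $o(\epsilon^2)$ remainder: Corollary~\ref{corr1} is derived mechanism-by-mechanism, so to pass from ``the approximated program has optimum $\frac{1}{2}\epsilon^2\sigma_{\max}^2$'' to ``the true optimum is $\frac{1}{2}\epsilon^2\sigma_{\max}^2+o(\epsilon^2)$'' one needs the Taylor remainders bounded uniformly over all feasible perturbations, which is precisely what the cardinality bound of Proposition~1 and the explicit $\epsilon$-threshold (keeping every argument of $\log(1+\cdot)$ bounded away from $-1$) supply. The remaining verifications — rewriting $L^*$ as the conditional distributions and confirming nonnegativity for small $\epsilon$ — are routine.
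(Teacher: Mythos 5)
Your proposal is correct and follows essentially the same route as the paper: the corollary is obtained there exactly by chaining Corollary~\ref{corr1}, Proposition~4 and Theorem~\ref{th1}, then reading off the mechanism via \eqref{cond11}--\eqref{condis2}. Your additional checks (validity of the induced kernel, equality in \eqref{local}, and uniformity of the $o(\epsilon^2)$ remainder over the compact feasible set) are sound refinements of details the paper leaves implicit, not a different argument.
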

	\begin{figure*}[]
		\centering
		\includegraphics[width = .7\textwidth]{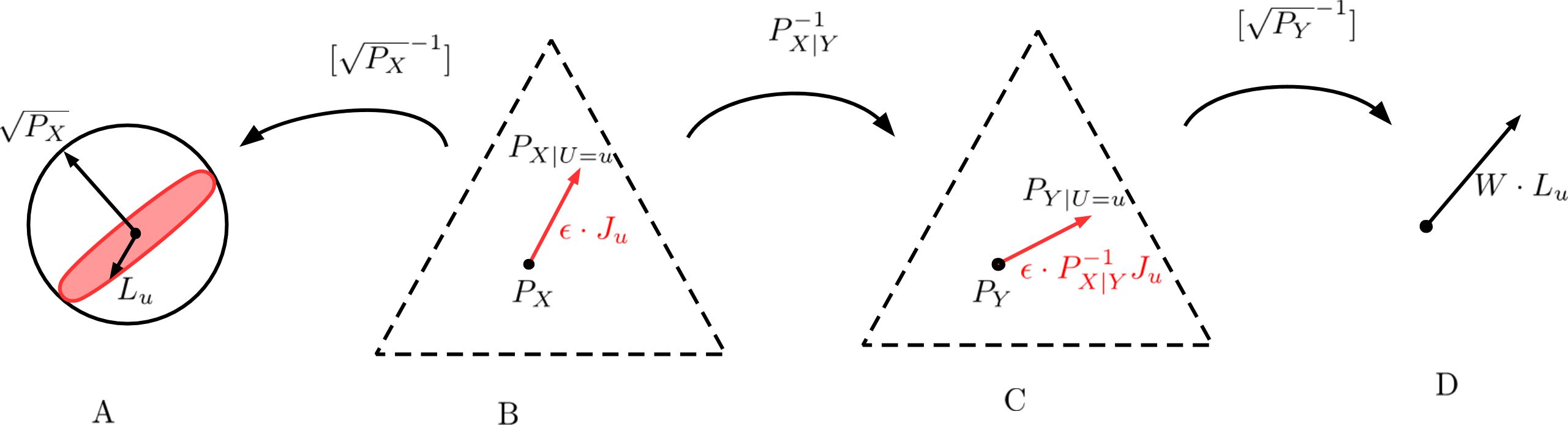}
		\caption{ For the privacy mechanism design, we are looking for $L^*$ in the red region (vector space A) which results in a vector with the largest Euclidean norm in vector space D. Space B and space C are probability spaces for the input and output distributions, the circle in space A represents the vectors that satisfy the strong $\chi^2$-privacy criterion and the red region denotes all vectors that are orthogonal to vector $\sqrt{P_X}$. Starting from Space A and reaching Space D the mapping between Space A and Space D can be found as  $W=[\sqrt{P_Y}^{-1}]P_{X|Y}^{-1}[\sqrt{P_X}]$.}
		\label{fig:inter}
	\end{figure*}
		\section{ geometric interpretation and discussions}\label{dissc}
	In Figure~\ref{fig:inter}, four spaces are illustrated. Space B and space C are probability spaces of the input and output distributions, where the points are inside a simplex. Multiplying input distributions by $P_{X|Y}^{-1}$ results in output distributions. Space A illustrates vectors $L_u$ with norm smaller than 1, which corresponds to the strong $\chi^2$-privacy criterion. The red region in this space includes all vectors that are orthogonal to $\sqrt{P_X}$. 
	For the optimal solution with $U$ chosen to be a equiprobable binary RV, it is shown that it remains to find the vector $L_u$ in the red region that results in a vector that has the largest norm in space D. This is achieved by the principal right-singular vector of $W$.
	The mapping between space A and B is given by $[\sqrt{P_X}^{-1}]$ and also the mapping between space C and D is given by $[\sqrt{P_Y}^{-1}]$. Thus $W$ is given by $[\sqrt{P_Y}^{-1}]P_{X|Y}^{-1}[\sqrt{P_X}]$. 
	
	In following, we provide an example where the procedure of finding the mechanism to produce $U$ is illustrated.
\begin{example}
	\normalfont
	Consider the leakage matrix $P_{X|Y}=\begin{bmatrix}
	\frac{1}{4}       & \frac{2}{5}  \\
	\frac{3}{4}    & \frac{3}{5}
	\end{bmatrix}$
	 and $P_Y$ is given as $[\frac{1}{4} , \frac{3}{4}]^T$. Thus, we can calculate $W$ and $P_X$ as
	 \begin{align*}
	 P_X&=P_{X|Y}P_Y=[0.3625, 0.6375]^T,\\
	 W &= [\sqrt{P_Y}^{-1}]P_{X|Y}^{-1}[\sqrt{P_X}] = \begin{bmatrix}
	 -4.8166       & 4.2583  \\
	 3.4761    & -1.5366
	 \end{bmatrix}.
	 \end{align*}
	 The singular values of $W$ are $7.4012$ and $1$ with corresponding right singular vectors $[0.7984, -0.6021]^T$ and $[0.6021 , 0.7954]^T$, respectively. Thus the maximum of \eqref{privacy} is approximately $\frac{1}{2}\epsilon^2(7.4012)^2=27.39\cdot \epsilon^2$.
	  
	 The maximizing vector $L^*$ in \eqref{max2} is equal to $[0.7984, -0.6021]^T$ and the mapping between $U$ and $Y$ can be calculated as follows (the approximate maximum of $I(U;Y)$ is achieved by the following conditional distributions):
	 \begin{align*}
	 P_{Y|U=0}&=P_Y+\epsilon P_{X|Y}^{-1}[\sqrt{P_X}]L^*\\
	 &=[0.25-3.2048\cdot\epsilon , 0.75+3.2048\cdot\epsilon]^T,\\
	 P_{Y|U=1}&=P_Y-\epsilon P_{X|Y}^{-1}[\sqrt{P_X}]L^*\\
	 &=[0.25+3.2048\cdot\epsilon , 0.75-3.2048\cdot\epsilon]^T.
	 \end{align*}
	 Note that the approximation is valid if $|\epsilon \frac{P_{X|Y}^{-1}J_{u}(y)}{P_{Y}(y)}|\ll 1$ holds for all $y$ and $u$ .
	 For the example above we have $\epsilon\cdot P_{X|Y}^{-1}J_0=\epsilon[-3.2048,\ 3.2048]^T$ and $\epsilon\cdot P_{X|Y}^{-1}J_1=\epsilon[3.2048,\ -3.2048]^T$ so that $\epsilon\ll 0.078$. Fig.~\ref{fig:shsh} illustrates the exact solution of \eqref{privacy} and the proposed approximated solution, i.e., $\frac{1}{2}\epsilon^2\sigma_{\max}^2$, where the exact solution is found by exhaustive search. As can be seen the approximation error of our proposed solution in the high privacy regime, i.e., small $\epsilon$, is negligible.
	 \begin{figure*}[]
	 	\centering
	 	\includegraphics[width = .8\textwidth]{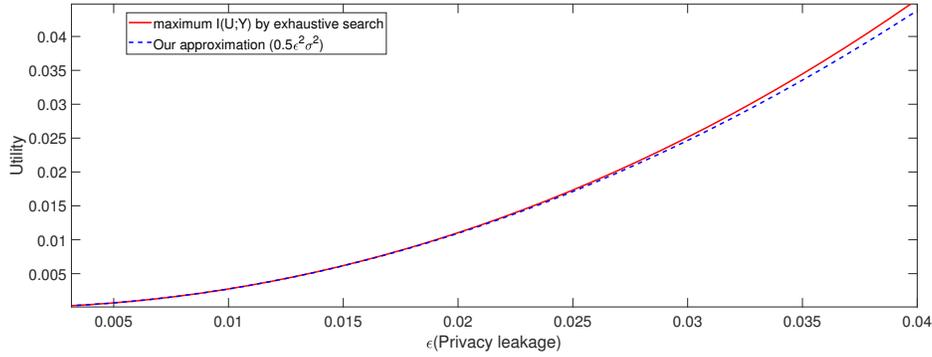}
	 	\caption{Comparing our proposed approximate solution in Example~1 with the exact solution found by an exhaustive search. It can be seen that when small leakage is allowed the approximated utility is very close to the exact utility given by $I(U;Y)$.}
	 	\label{fig:shsh}
	 \end{figure*}
	\end{example}
	In next example we consider a BSC($\alpha$) channel as leakage matrix. We provide an example with a constant upper bound on the approximated mutual information. Furthermore, the privacy-utility trade-off is studied in two scenarios, where in the first scenario we have used our approach to design $P_{U|Y}$ and in the second scenario the approach in \cite{deniz6} is used. 
		Here, utility is measured by probability of error between disclosed and desired data and estimation error measures the privacy. We show how our approach achieves better utility when small leakage is allowed, however the perfect privacy approach in \cite{deniz6} attains constantly an error probability of one-half, i.e., no utility.
	\begin{example}
		\normalfont
		Let $P_{X|Y}=\begin{bmatrix}
		1-\alpha       & \alpha  \\
		\alpha    & 1-\alpha
		\end{bmatrix}$ and $P_Y$ is given by $[\frac{1}{4} , \frac{3}{4}]^T$. By following the same procedure we have
		\begin{align*}
			P_X&=P_{X|Y}P_Y=[\frac{2\alpha+1}{4}, \frac{3-2\alpha}{4}]^T,\\
			W &= [\sqrt{P_Y}^{-1}]P_{X|Y}^{-1}[\sqrt{P_X}] \\&= \begin{bmatrix}
				\frac{\sqrt{2\alpha+1}(\alpha-1)}{(2\alpha-1)}       & \frac{\alpha\sqrt{3-2\alpha}}{2\alpha-1}  \\
				\frac{\alpha\sqrt{2\alpha+1}}{\sqrt{3}(2\alpha-1)}    & \frac{\sqrt{3-2\alpha}(\alpha-1)}{\sqrt{3}(2\alpha-1)}
			\end{bmatrix}.
		\end{align*}
		Singular values of $W$ are $\sqrt{\frac{(2\alpha+1)(3-2\alpha)}{3(2\alpha-1)^2}}\geq 1$ for $\alpha\in[0,\frac{1}{2})$ and $1$ with corresponding right singular vectors $[-\sqrt{\frac{3-2\alpha}{4}},\ \sqrt{\frac{2\alpha+1}{4}}]^T$ and $[\sqrt{\frac{2\alpha+1}{4}}, \sqrt{\frac{3-2\alpha}{4}}]^T$, respectively. Thus, we have $L^*=[-\sqrt{\frac{3-2\alpha}{4}},\ \sqrt{\frac{2\alpha+1}{4}}]^T$ and $\max I(U;Y)\approx \epsilon^2\frac{(2\alpha+1)(3-2\alpha)}{6(2\alpha-1)^2}$ with the following conditional distributions
		\begin{align}
			P_{Y|U=0}&=P_Y+\epsilon\cdot P_{X|Y}^{-1}[\sqrt{P_X}]L^*\nonumber\\&=[\frac{1}{4}\!\!+\!\epsilon\frac{\sqrt{(3\!-\!2\alpha)(2\alpha\!+\!1)}}{4(2\alpha\!-\!1)},\ \!\!\!\frac{3}{4}\!\!-\!\epsilon\frac{\sqrt{(3\!-\!2\alpha)(2\alpha\!+\!1)}}{4(2\alpha\!-\!1)}],\label{sossher}\\
			P_{Y|U=1}&=P_Y-\epsilon\cdot P_{X|Y}^{-1}[\sqrt{P_X}]L^*\nonumber\\&=[\frac{1}{4}\!\!-\!\epsilon\frac{\sqrt{(3\!-\!2\alpha)(2\alpha\!+\!1)}}{4(2\alpha\!-\!1)},\ \!\!\!\frac{3}{4}\!\!+\!\epsilon\frac{\sqrt{(3\!-\!2\alpha)(2\alpha\!+\!1)}}{4(2\alpha\!-\!1)}].
		\end{align}
		The approximation of $I(U;Y)$ holds when we have $|\epsilon\frac{P_{X|Y}^{-1}[\sqrt{P_X}]L^*}{P_Y}|\ll 1$ for all $y$ and $u$, which leads to $\epsilon\ll\frac{|2\alpha-1|}{\sqrt{(3-2\alpha)(2\alpha+1)}}$. If $\epsilon<\frac{|2\alpha-1|}{\sqrt{(3-2\alpha)(2\alpha+1)}}$, then the approximation of the  mutual information $I(U;Y)\cong\frac{1}{2}\epsilon^2\sigma_{\text{max}}^2$ is upper bounded by $\frac{1}{6}$ for all $0\leq\alpha<\frac{1}{2}$.

			Next, we consider two scenarios, where in first scenario our approach is used to find the sub-optimal Kernel $P_{U|Y}$ and in the second scenario the perfect privacy approach in \cite{deniz6} is used. Intuitively, $U$ and $Y$ should be as much correlated as possible under the privacy constraint. Furthermore, mutual information increases with lower probability of error. Thus, we consider the probability of error between disclosed data $U$ and desired data $Y$ as utility, i.e., $P_r(Y\neq U)$, and the MMSE of estimating $X$ based on observing $U$ as privacy measure in this example.
			\begin{figure*}[h]
				\centering
				\includegraphics[width = .8\textwidth]{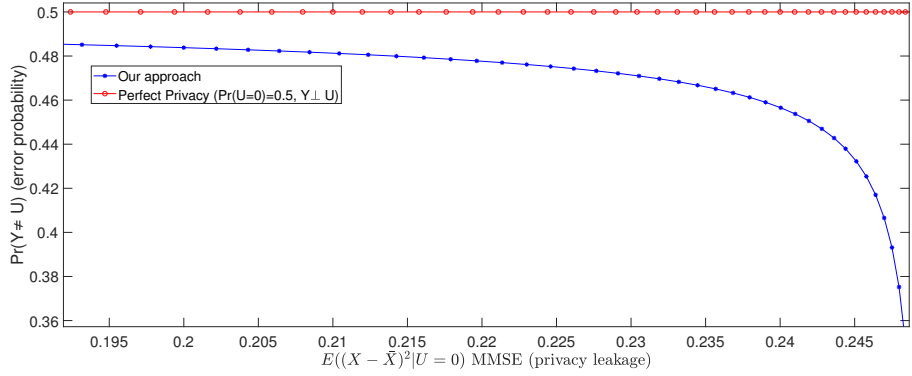}
				\caption{Privacy-utility trade-off for designed $U$ based on our method and the perfect privacy approach in \cite{deniz6} resulting in an independent $U$. It can be seen that when higher leakage is allowed our approach achieves increased utility. In the first scenario using our approach, $\bar{X}=\hat{X}$ and in the second scenario, $\bar{X}=\tilde{X}$ assuming $U=0$ is observed.}
				\label{fig:kosmos1}
			\end{figure*}
			In Fig.~\ref{fig:kosmos1}, the parameter $\alpha\neq\frac{1}{2}$, which corresponds to correlation between $X$ and $Y$, is swept to illustrate the privacy-utility trade-off. Furthermore, $\hat{X} = \mathbb{E}\{X|U=u\}$ is the estimation of $X$ based on observing $U$ using \eqref{cond11} with uniform $U$ and $\tilde{X} =  \mathbb{E}\{X|U=u\}=\mathbb{E}\{X\},$ is the estimation of $X$ based on observing $U$ for the perfect privacy approach, where $U$ has to be independent of $X$ and $Y$ in order to satisfy the perfect privacy constraint, i.e., for $\epsilon=0$ and invertible leakage matrix we have $I(U;Y)=0$.
			Thus, the estimation error can be calculated as follows
			\begin{align*}
			\mathrm{MMSE}(\hat{X})&=\mathbb{E}\{(X-\hat{X})^2|U=0\}\\&=\mathbb{E}\{X^2|U=0\}-\left(\mathbb{E}\{X|U=0\}\right)^2\\&=P(X=1|U=0)-P(X=1|U=0)^2,\\
			\mathrm{MMSE}(\tilde{X})&=\mathbb{E}\{(X-\tilde{X})^2|U=0\}\\&=\mathbb{E}\{X^2\}-\left(\mathbb{E}\{X\}\right)^2\\&=P(X=1)-P(X=1)^2,
			\end{align*}
			where the conditional distribution $P(X=1|U=0)$ is found by \eqref{cond11} and in last line we used the independency of $X$ and $U$.  
			Furthermore, the error probability in both cases can be calculated as
			\begin{align*}
			&P_{\chi^2}(U\neq Y)=P(U=0,Y=1)+P(U=1,Y=0)\\&=\frac{1}{2}\left(P(Y=0|U=1)+P(Y=1|U=0)\right),\\
			&P_{\epsilon=0}(U\neq Y)=P(Y=0)P(U=1)\!+\!P(Y=1)P(U=0),
			\end{align*}   
			where $P_{\chi^2}(U\neq Y)$ is the error probability of our approach and $P_{\epsilon=0}(U\neq Y)$ is the error probability of the perfect privacy approach. 
			The conditional distribution $P(Y=0|U=1)$ is found by \eqref{sossher} and for the perfect privacy scenario, we assumed $U$ is uniformly distributed.	\\
			The privacy-utility trade-off employing error probability and MMSE is illustrated in Fig.~\ref{fig:kosmos1}. As it can be seen when the leakage is increasing, the error probability of our approach decreases which means we achieve better utility, i.e., larger mutual information and lower error probability, however the approach in \cite{deniz6} attains a constant utility (constant probability of error equals to one-half which corresponds to zero mutual information). In other words, allowing a small privacy leakage can have a significant utility gain achieved with our design approach.
	\end{example}
	The next example outlines scenarios where a privacy filter will be essential to enable data sharing. This applies to Example~2 as well.
	\begin{example}
		\normalfont
		Consider a situation with a binary private variable $X\in\{0,1\}$. Assume the variable is hidden but we can observe $Z$ that depends it, and let $Y\in\{0,1\}$ be the outcome of a statistical test based on observing $Z$. For example, $X$ can be ``driver is drunk," $Z$ correlates with ``driving behavior" and $Y$ is ``driver is exceeding the speed limit."	Or consider instead a scenario formed by patients under treatment. Here a specific example would be that $Z$ is the so called CD4 lymphocyte level measured for a specific patient. When the value is low, then with high probability the patient is in the terminal stage of an HIV infection. However this is not the single cause for low CD4 values (for example being under chemotherapy would be another cause). On the other hand, a patient in the terminal stage of HIV can also have higher CD4 values. Then let $Y$ corresponds to ``CD4 is low/high" and $X$ to ``patient is in terminal stage of HIV." To get a numerical example, assume that ``low" corresponds to $Z<200$ and $Y=1$, and that Pr$(X=1|Y=1)=0.9.$ Also assume that the likelihood of the patient being in the terminal stage of HIV given $Y=0$ is $0.05$, i.e., Pr$(X=1|Y=0)=0.05.$ The corresponding leakage matrix $P_{X|Y}$ is clearly invertible. Assume that in a study 6 percent of the patients are observed to have low CD4 values, i.e., Pr$(Y=1)=0.06$, which results Pr$(X=1)=0.101.$ Then the optimal $W$ matrix can be computed as  
		\begin{align*}
		W=\begin{bmatrix}
		1.4501  & -0.2277\\
		-0.0386  &  1.0355
		\end{bmatrix}.
		\end{align*}
		Since the leakage matrix is invertible, perfect privacy would result in zero utility. However, by using our approach we obtain $1.1141\epsilon^2$ utility or $1.6073\epsilon^2$ bits utility.   
\end{example}
	Our next result discusses the relation to \cite{Calmon2}. While the focus in the present paper is on introducing the proposed design framework, our result also shows that an upper bound in \cite{Calmon2} is actually achievable since we can achieve it considering even a strong privacy criterion. In order to compare the results one needs to substitute $S$, $X$, $Y$ and $\epsilon$ in \cite{Calmon2}, by $X$, $Y$, $U$ and $\epsilon^2$, respectively.
	\begin{proposition}
		For all $\epsilon\!<\!\frac{|\sigma_{\text{min}}(P_{X|Y})|\min_{y\in\mathcal{Y}}P_Y(y)}{\sqrt{\max_{x\in{\mathcal{X}}}P_X(x)}}$, the upper bound on the privacy-utility trade-off derived in \cite[Th.2]{Calmon2}, is tight. 
	\end{proposition}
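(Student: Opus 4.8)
The plan is to establish tightness by a sandwiching argument: the strong $\chi^2$-privacy criterion \eqref{local} is more restrictive than the $\chi^2$-information constraint used in \cite{Calmon2}, yet the mechanism delivered by Theorem~\ref{th1} already meets the upper bound of \cite[Th.~2]{Calmon2} up to an $o(\epsilon^2)$ term, so that bound cannot be improved in the small-leakage regime.

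First I would make the constraint comparison explicit. By the identity $\chi^2_{\text{information}}(X;U)=\mathbb{E}_U[\chi^2(P_{X|U=u}\|P_X)]$ recalled in Section~\ref{sec:system}, condition \eqref{local} implies $\chi^2_{\text{information}}(X;U)\leq\epsilon^2$. After the substitution $(S,X,Y,\epsilon)\mapsto(X,Y,U,\epsilon^2)$, the optimization in \cite{Calmon2} reads: maximize $I(U;Y)$ over $P_{U|Y}$ subject to $X-Y-U$ and $\chi^2_{\text{information}}(X;U)\leq\epsilon^2$. Hence every mechanism feasible for \eqref{privacy}--\eqref{local} is feasible there, so the optimal value of \eqref{privacy} is no larger than the optimal value of the \cite{Calmon2} problem, which in turn is no larger than the bound in \cite[Th.~2]{Calmon2}.

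For the matching lower bound I would use the mechanism of Theorem~\ref{th1}: $U$ uniform binary with $P_{X|U=u}=P_X\pm\epsilon[\sqrt{P_X}]L^*$ and $P_{Y|U=u}$ as in \eqref{cond11}--\eqref{condis2}, where $L^*$ is a unit-norm leading right-singular vector of $W=[\sqrt{P_Y}^{-1}]P_{X|Y}^{-1}[\sqrt{P_X}]$. This mechanism meets \eqref{local} with equality, since $\chi^2(P_{X|U=u}\|P_X)=\epsilon^2\|L^*\|^2=\epsilon^2$, and for $\epsilon$ in the stated range its conditional laws are valid pmfs; hence by Proposition~3 evaluated at $L_0=-L_1=L^*$ it attains $I(U;Y)=\frac{1}{2}\epsilon^2\sigma_{\max}^2(W)+o(\epsilon^2)$. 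Therefore the optimal value of \eqref{privacy} is at least $\frac{1}{2}\epsilon^2\sigma_{\max}^2(W)+o(\epsilon^2)$.

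The remaining and crucial step is to show that the \cite[Th.~2]{Calmon2} bound, under the same substitution, equals $\frac{1}{2}\epsilon^2\sigma_{\max}^2(W)+o(\epsilon^2)$; then the two inequalities above pin the optimum of \eqref{privacy} to this value and tightness follows. I would rewrite the \cite{Calmon2} bound in terms of the principal inertia components of $P_{XY}$, that is, the squared singular values of the divergence-transition matrix $B=[\sqrt{P_X}^{-1}]P_{X|Y}[\sqrt{P_Y}]$, whose largest singular value equals $1$ with singular pair $(\sqrt{P_X},\sqrt{P_Y})$. A one-line computation gives $WB=[\sqrt{P_Y}^{-1}]P_{X|Y}^{-1}P_{X|Y}[\sqrt{P_Y}]=I$, so $W=B^{-1}$ and $\sigma_{\max}(W)=1/\sigma_{\min}(B)$; moreover the leading right-singular vector of $W$ coincides with the left-singular vector of $B$ associated with $\sigma_{\min}(B)$, and is therefore orthogonal to $\sqrt{P_X}$, which is exactly the constraint present in \eqref{max2}. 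Substituting this identification into the \cite{Calmon2} expression and retaining the leading $\epsilon^2$ term gives $\frac{1}{2}\epsilon^2\sigma_{\max}^2(W)$, matching the achievable value. I expect the main obstacle to be bookkeeping: aligning the normalization conventions and small-$\epsilon$ asymptotics of \cite{Calmon2} with the ones used here, and verifying that the $o(\epsilon^2)$ remainders on both ends — those from the Taylor expansions behind Propositions~2--3 and the one inherent in the derivation of the \cite{Calmon2} bound — are uniformly negligible over the admissible range of $\epsilon$.
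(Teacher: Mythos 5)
Your skeleton shares several elements with the paper's argument (strong criterion implies the averaged $\chi^2$-information constraint, the matrix identity $Q_{X,Y}=[\sqrt{P_X}^{-1}]P_{X|Y}[\sqrt{P_Y}]=W^{-1}$ so that $1/\lambda_{\min}(X;Y)=\sigma_{\max}^2(W)$, and achievability via the principal right-singular vector mechanism), but there is a genuine gap at exactly the step you defer to ``bookkeeping.'' The bound in \cite[Th.~2]{Calmon2} is a bound on the $\chi^2$-information utility $\chi^2_{\text{information}}(Y;U)$ under the constraint $\chi^2_{\text{information}}(X;U)\leq\epsilon^2$, not a bound on $I(U;Y)$; so your sandwich ``optimal value of \eqref{privacy} $\leq$ \cite{Calmon2}-bound'' compares quantities measured in different currencies, and your claim that the \cite{Calmon2} expression ``retains the leading term $\frac{1}{2}\epsilon^2\sigma_{\max}^2(W)$'' is off by the factor two that separates mutual information from $\chi^2$-information and is never established. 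The paper resolves this by first proving that the local approximation \eqref{approx2} of $I(U;Y)$ equals $\tfrac{1}{2}\chi^2_{\text{information}}(Y;U)$, which converts the design problem into \eqref{privacyex} with the \emph{same} objective as in \cite[Definition~4]{Calmon2}; then the constructed mechanism attains $\chi^2_{\text{information}}(Y;U)=\epsilon^2\|WL^*\|^2=\epsilon^2\sigma_{\max}^2(W)$ \emph{exactly} (the $\chi^2$ objective is quadratic, so no Taylor remainder appears), which coincides exactly with the bound $\epsilon^2/\lambda_{\min}(X;Y)$. In other words, the uniform-$o(\epsilon^2)$ matching you flag as the main obstacle is not needed once the objective is translated correctly, but without that translation your argument does not close.

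A second missing step: the \cite{Calmon2} upper bound has the form $\epsilon^2/\lambda_{\min}(X;Y)$ only in the regime $\epsilon^2\leq\lambda_{\min}(X;Y)$, and tightness must be asserted inside that regime. You never verify that the hypothesis $\epsilon<\frac{|\sigma_{\text{min}}(P_{X|Y})|\min_{y}P_Y(y)}{\sqrt{\max_{x}P_X(x)}}$ places you there. The paper checks this explicitly via the spectral-norm estimate
\begin{align*}
\frac{(\sigma_{\text{min}}(P_{X|Y}))^2(\min_y P_Y(y))^2}{\max_x P_X(x)}\leq \frac{1}{\sigma_{\max}^2(W)}=\lambda_{\text{min}}(X;Y),
\end{align*}
which shows the assumed range of $\epsilon$ indeed satisfies $\epsilon^2\leq\lambda_{\min}(X;Y)$. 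Adding the objective translation and this regime check would bring your proposal in line with the paper's proof.
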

\begin{proof}
	First we show that the approximation of $I(U;Y)$ found in \eqref{approx2}, is equal to half of the $\chi^2$-information between $U$ and $Y$. By using Proposition~2, we have
	\begin{align*}
	&\frac{1}{2}\epsilon^2\sum_u P_U\|[\sqrt{P_Y}^{-1}]P_{X|Y}^{-1}[\sqrt{P_X}]L_u\|^2=\\
	&\frac{1}{2}\epsilon^2\sum_u \!P_U\!\sum_y\frac{(P_{X|Y}^{-1}J_u)^2}{P_Y}\!=\!\frac{1}{2}\!\sum_u \!P_U\!\!\sum_y\frac{(P_{Y|U=u}-P_Y)^2}{P_Y}=\\
	&\frac{1}{2}\sum_u P_U \chi^2(P_{Y|U=u}||P_Y)=\frac{1}{2}\chi^2_{\text{information}}(Y;U).
	\end{align*}
	Thus, the problem found in \eqref{newprob2}, is equivalent to the following problem
	\begin{subequations}
	\begin{align}
	\max_{P_{U|Y}} \ \ &\chi^2_{\text{information}}(Y;U),\label{privacyex}\\
	\text{subject to:}\ \ &X-Y-U,\label{Markovex}\\
	& \left\lVert[\sqrt{P_X}^{-1}](P_{X|U=u}-P_X) \right\rVert^2\leq \epsilon^2,\ \forall u\in\mathcal{U}.\label{localex}
	\end{align}
	\end{subequations}
Since the strong privacy criterion in \eqref{localex} implies the privacy criterion in \cite[Definition~4]{Calmon2} and the objective functions are the same, we conclude that the problem defined in \cite[Definition~4]{Calmon2} is an upper bound to \eqref{privacyex}. Furthermore, the upper bound in \cite{Calmon2} is equal to $\frac{1}{\lambda_{\text{min}}(X;Y)}\epsilon^2$ for $\epsilon^2\leq \lambda_{\text{min}}(X;Y)$, where $\sqrt{\lambda_{\text{min}}(X;Y)}$ is the minimum singular value of $Q_{X,Y}$, which is defined in \cite[Definition~2]{Calmon2}. Next, we show that $\frac{1}{\lambda_{\text{min}}(X;Y)}=\sigma^2_{\max}(W)$. The relation between $W$ and $Q_{X,Y}$ is as follows
\begin{align*}
Q_{X,Y}&=[\sqrt{P_X}^{-1}]P_{X,Y}[\sqrt{P_Y}^{-1}]=[\sqrt{P_X}^{-1}]P_{X|Y}[\sqrt{P_Y}]\\&=W^{-1}.
\end{align*}
Thus, we have $\frac{1}{\lambda_{\text{min}}(X;Y)}=\sigma^2_{\max}(W)$. Also, $\epsilon<\frac{|\sigma_{\text{min}}(P_{X|Y})|\min_{y\in\mathcal{Y}}P_Y(y)}{\sqrt{\max_{x\in{\mathcal{X}}}P_X(x)}}$ leads to the first region ($\epsilon^2\leq \lambda_{\text{min}}(X;Y)$) of the upper bound, since we have
\begin{align*}
|||W|||\leq \frac{1}{\min P_Y}|||P_{X|Y}^{-1}|||(\max P_X),
\end{align*}
which implies
\begin{align*}
\frac{(\sigma_{\text{min}}(P_{X|Y}))^2(\min P_Y)^2}{\max P_X}&\leq \frac{(\sigma_{\text{min}}(P_{X|Y}))^2\min P_Y}{\max P_X}\\
&\leq \frac{1}{\sigma_{\max}^2(W)}=\lambda_{\text{min}}(X;Y),
\end{align*}
where we used spectral norm defined as $|||A|||=\max_{||x||_2=1}||Ax||_2^2$, also $\max P_X=\max_{x\in{\mathcal{X}}}P_X(x)$ and $\min P_Y=\min_{y\in\mathcal{Y}} P_Y(y)$. The privacy mechanism found in this paper achieves $\sigma_{\max}^2(W)\epsilon^2$ for \eqref{privacyex}, and since \eqref{privacyex} is a lower bound to the problem defined in \cite[Definition~4]{Calmon2} and achieves the upper bound in \cite[Th.2]{Calmon2} for small $\epsilon$, we can conclude the upper bound in \cite[Th.2]{Calmon2} is tight for all $\epsilon<\frac{|\sigma_{\text{min}}(P_{X|Y})|\min_{y\in\mathcal{Y}}P_Y(y)}{\sqrt{\max_{x\in{\mathcal{X}}}P_X(x)}}$.
\end{proof}

	In next section, we study two extensions, where the idea of information geometry approximation is used.

	\section{Extensions}\label{exten}
	In this section, two problems are introduced. First, a fixed binary channel between the user and the adversary is considered and the agent is trying to find a mechanism to produce a binary random variable $U$, which maximizes $I(U;Y)$ under the Markov chain $X-Y-U-U'$ and privacy criterion on $X$ and $U'$. In the other scenario, $U$ is the stored data which can be used later, i.e., for future purposes, and $U'$ is the first instance that is disclosed with a standard post-processing mechanism.
		In second extension, the agent looks for a mechanism which finds $U$ based on observing $X$, maximizing the mutual information between $U$ and $Y$ while satisfying the privacy criterion on $U$ and $Z$ under the Markov chain $(Z,Y)-X-U$. In these extensions, small enough $\epsilon$ stands for all $\epsilon$ such that the second Taylor expansion can be used.
	\subsection{Privacy problem with sophisticated adversary}
	Similar to the previous problem let $P_{XY}$ denote the joint distribution of discrete random variables $(X,Y)$ and the leakage matrix defined by $P_{X|Y}$ be invertible. Similarly, let $X$ and $Y$ denote the private and the useful data with equal cardinality, i.e, $|\cal{X}|=|\cal{Y}|=\mathcal{K}$. Other considerations on $(X,Y)$ mentioned in section \ref{sec:system}, are assumed in this problem. Here, we add an invertible fixed binary channel between the user and an adversary denoted by $P_{U|U'}$ on $\mathbb{R}^{2\times 2}$, where we assume $|U|=|U'|=2$. $U'$ is the message received by the adversary and $U$ is the message received by the user. The agent tries to find a mechanism to produce $U$ such that maximizes $I(U;Y)$ while satisfying privacy criterion on $X$ and $U'$ under the Markov chain $X-Y-U-U'$. The privacy criterion employed in this problem is as follows
	\begin{align*}
		\left\lVert[\sqrt{P_X}^{-1}](P_{X|U'=u'}-P_{X})\right\rVert^2\leq \epsilon^2,\ u'\in \{u'_0,u'_1\}.
	\end{align*} 
	The information theoretic privacy problem can be characterized as follows
	\begin{subequations}
	\begin{align}
		\max_{P_{U|Y}} \ \ &I(U;Y),\label{privacy1}\\
		\text{subject to:}\ &X-Y-U-U',\\
		&\left\lVert[\sqrt{P_X}^{-1}](P_{X|U'=u'}-\!P_X)\right\rVert^2\! \!\!\leq \frac{1}{2}\epsilon^2\!,\  u' \!\! \in\!\! \{\!u'_0,u'_1\!\}.\label{local1}
	\end{align}
\end{subequations}
	Same as before, we assume that $\epsilon$ is a small quantity. 
	We define the matrix $P_{U|U'}\in\mathbb{R}^{2\times2}$ by $\begin{bmatrix}
	x &y\\z &t
	\end{bmatrix}$, where $x+z=1,\ y+t=1,$ and all $x$, $y$, $z$ and $t$ are non-negative. Furthermore, we show $P_{U|U'}^{-1}$ by $\begin{bmatrix}
	a &c\\b &d
	\end{bmatrix}$, where $a=\frac{t}{xt-zy}$, $b=\frac{-z}{xt-zy}$, $c=\frac{-y}{xt-zy}$ and $d=\frac{x}{xt-zy}$. 
	\begin{proposition}\label{tir}
			The tuple $(a,b,c,d)$ belongs to one of the following sets 
			\begin{align*}
	&A_1 = \\ &\{(a,b,c,d)|a\leq0, d\leq0, b\geq1, c\geq1,\ a+b\!=\!1,\ c+d\!=\!1\},\\
	&A_2 = \\ &\{(a,b,c,d)|a\geq1, d\geq1, b\leq0, c\leq0,\ a+b\!=\!1,\ c+d\!=\!1\}.
			\end{align*}
	\end{proposition}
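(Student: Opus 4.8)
The plan is to push everything through the explicit $2\times 2$ inverse and then read off sign patterns, so the argument is almost entirely bookkeeping. The one substantive simplification I would make first is to use the column-stochasticity relations $z = 1-x$ and $t = 1-y$ to rewrite the determinant: $xt - zy = x(1-y) - (1-x)y = x - y$. Substituting this back into the stated formulas for $(a,b,c,d)$ gives the compact forms $a = \frac{1-y}{x-y}$, $b = \frac{x-1}{x-y}$, $c = \frac{-y}{x-y}$, $d = \frac{x}{x-y}$, and invertibility of $P_{U|U'}$ is precisely the condition $x \neq y$.

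Next I would dispose of the two equality constraints that appear in both $A_1$ and $A_2$: directly from the compact forms, $a + b = \frac{(1-y)+(x-1)}{x-y} = 1$ and $c + d = \frac{-y+x}{x-y} = 1$, with no case distinction needed. After this step only the inequalities remain to be checked.

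The main (and essentially only) step is a case split on the sign of $x-y$. If $x > y$, the denominator is positive, and using $0 \le x \le 1$ and $0 \le y \le 1$ each numerator's sign is immediate: $1 - y \ge x - y$ yields $a \ge 1$; $x \ge x - y$ yields $d \ge 1$; $x - 1 \le 0$ yields $b \le 0$; $-y \le 0$ yields $c \le 0$ — exactly the defining conditions of $A_2$. If $x < y$, the denominator is negative, the same numerator signs now give $a \le 0$ and $d \le 0$, and the identities $b = 1 - a$, $c = 1 - d$ established above immediately upgrade these to $b \ge 1$ and $c \ge 1$ — exactly $A_1$.

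I do not anticipate a genuine obstacle here. The only point that needs a moment of care is boundary behaviour (for instance $x = 0$, $y = 1$, or a vanishing numerator), but since $A_1$ and $A_2$ are defined with non-strict inequalities these degenerate stochastic matrices are automatically accommodated, and the hypothesis $x \neq y$ is all that is required to keep the denominator nonzero throughout.
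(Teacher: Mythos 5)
Your proposal is correct and follows essentially the same route as the paper: write out the explicit $2\times 2$ inverse and read off sign patterns, using $x+z=1$, $y+t=1$ to get $a+b=c+d=1$ (the paper computes $a+b=\frac{t-z}{(1-z)t-z(1-t)}=1$ and then argues one of each pair is nonpositive while the other exceeds $1$). Your version is in fact slightly tighter on one point: by reducing the determinant to $x-y$ and doing a single case split on its sign, you simultaneously fix the signs of all four entries, which makes explicit why $(a,b)$ and $(c,d)$ land in the \emph{same} set ($A_1$ or $A_2$ together) — a coupling the paper's statement requires but its proof dispatches only with ``same proof can be used for $c$ and $d$.''
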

\begin{proof}
	Since $x+z=1$ and $y+t=1$, we have
	\begin{align*}
	a+b=\frac{t-z}{xt-zy}=\frac{t-z}{(1-z)t-z(1-t)}=1.
	\end{align*}
	Since $t\geq0$ and $z\geq0$, one of $a$ and $b$ is non-negative and the other one is non-positive. Furthermore, since $a+b=1$, we have $a\leq0,\ b\geq1$ or $a\geq1,\ b\leq0$. Same proof can be used for $c$ and $d$. 
\end{proof}

	By using \eqref{local1}, we can write $P_{X|U=u}=P_X+\epsilon\cdot J_{u'}$, where $J_{u'}\in\mathbb{R}^{\mathcal{K}}$ is the perturbation vector that has three properties as follows 
	\begin{align}
	&\sum_{x\in\mathcal{X}} J_{u'}(x)=0,\ u'\in\{u'_0,u'_1\},\label{prop111}\\
	&P_{U'}(u'=u'_0)J_0+P_{U'}(u'=u'_1)J_1=\bm{0}\label{prop222},\\
	&\sum_{x\in\mathcal{X}}\frac{J_{u'}^2(x)}{P_X(x)}\leq 1,\ u'\in\{u'_0,u'_1\} \label{prop333},
	\end{align}
	where $\bm{0}\in\mathbb{R}^{\mathcal{K}}$.

	 Similarly, we can show that by using the concept of Euclidean Information theory, \eqref{local1} results in a leakage constraint.
	 \begin{proposition}
	 	For a small enough $\epsilon$, \eqref{local} results in a leakage constraint as follows
	 		\begin{align*}
	 	I(U';X)\leq \frac{1}{2}\epsilon^2+o(\epsilon^2).
	 	\end{align*}
	 \end{proposition}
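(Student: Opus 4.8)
The plan is to mirror the argument of Proposition~\ref{propp1} almost verbatim, now with $U'$ in the role of $U$ and with the perturbation vectors $J_{u'}$ satisfying \eqref{prop111}--\eqref{prop333}. First I would use $P_{X|U'=u'}=P_X+\epsilon J_{u'}$ from \eqref{local1} to write
\begin{align*}
I(U';X)&=\sum_{u'}P_{U'}(u')D(P_{X|U'=u'}\|P_X)\\
&=\sum_{u'}P_{U'}(u')\sum_{x}\big(P_X(x)+\epsilon J_{u'}(x)\big)\log\!\Big(1+\tfrac{\epsilon J_{u'}(x)}{P_X(x)}\Big).
\end{align*}

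Next I would apply the second-order Taylor expansion $\log(1+t)=t-\tfrac{t^2}{2}+o(t^2)$ with $t=\epsilon J_{u'}(x)/P_X(x)$, which is legitimate once $|t|<1$ for all $x$ and $u'$. As in the main problem, a sufficient condition is $\epsilon<\frac{\min_{x}P_X(x)}{\sqrt{\max_{x}P_X(x)}}$: indeed \eqref{prop333} gives $\|J_{u'}\|^2\le \max_x P_X(x)$, hence $|J_{u'}(x)|\le \sqrt{\max_x P_X(x)}$ and $|t|\le \epsilon\sqrt{\max_x P_X(x)}/\min_x P_X(x)<1$; since $\mathcal{X}$ and $\{u'_0,u'_1\}$ are finite, the remainder terms are uniformly $o(\epsilon^2)$. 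Collecting by order in $\epsilon$, the $O(\epsilon)$ contribution is $\epsilon\sum_{u'}P_{U'}(u')\sum_x J_{u'}(x)$, which vanishes by \eqref{prop111}; the $O(\epsilon^2)$ contribution combines $-\tfrac{\epsilon^2}{2}\sum_x J_{u'}^2(x)/P_X(x)$ from the $\log$-expansion with $+\epsilon^2\sum_x J_{u'}^2(x)/P_X(x)$ from the cross term $\epsilon J_{u'}(x)\cdot \epsilon J_{u'}(x)/P_X(x)$, and the residual products are $O(\epsilon^3)=o(\epsilon^2)$. This yields
\begin{align*}
I(U';X)&=\frac{\epsilon^2}{2}\sum_{u'}P_{U'}(u')\sum_x\frac{J_{u'}^2(x)}{P_X(x)}+o(\epsilon^2)\\
&\le\frac{\epsilon^2}{2}+o(\epsilon^2),
\end{align*}
the last step using \eqref{prop333} together with $\sum_{u'}P_{U'}(u')=1$.

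I do not anticipate a genuine obstacle, as the computation is structurally identical to Proposition~\ref{propp1}; the only points requiring care are (i) stating an explicit range of $\epsilon$ for which the Taylor expansion (equivalently, convergence of the KL series) is valid, and (ii) observing that property \eqref{prop222} plays no role in this bound --- only the normalization \eqref{prop111} and the $\chi^2$-bound \eqref{prop333} are used. A minor bookkeeping point is reconciling the factor $\tfrac12$ in \eqref{local1} with the normalization fixed in \eqref{prop333}; I would simply carry whichever constant is adopted in \eqref{prop333} through to the final inequality.
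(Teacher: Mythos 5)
Your proposal is correct and is essentially the paper's own argument: the paper proves this proposition by pointing back to Proposition~2, whose Appendix~B proof is exactly the Taylor-expansion computation you carry out (linear term killed by \eqref{prop111}, quadratic term bounded via \eqref{prop333}, with the same sufficient condition on $\epsilon$ for validity of the expansion). Your side remarks—that \eqref{prop222} is not needed and that the factor $\tfrac12$ in \eqref{local1} versus the normalization in \eqref{prop333} must be reconciled—are accurate observations about the paper's bookkeeping and do not affect the conclusion.
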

\begin{proof}
	The proof is similar to Proposition~\ref{propp1}.
\end{proof}

	Next proposition shows that the post-processing inequality holds for the $\chi^2_{\text{information}}$ privacy constraint.
	\begin{proposition}
		Let $X-U-U'$ form a Markov chain. Then, we have
		\normalfont
		\begin{align*}
		\chi^2_{\text{information}}(X;U)&=\sum_u P_U(u) \chi^2(P_{X|U=u}||P_Y)\\&\geq \sum_{u'} P_{U'}(u') \chi^2(P_{X|U'=u'}||P_Y)\\&=\chi^2_{\text{information}}(X;U'). 
		\end{align*}
	\end{proposition}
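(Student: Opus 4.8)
The plan is to recognize the claim as the data-processing inequality for $\chi^2$-divergence (an $f$-divergence with $f(t)=t^2-1$) in its conditional ``information'' form, and to prove it directly from convexity. The only place the Markov assumption is needed is the following mixture representation: since $X-U-U'$ means that $X$ and $U'$ are conditionally independent given $U$, for every $u'$ with $P_{U'}(u')>0$ we have
\begin{align*}
P_{X|U'=u'}(x)=\sum_u P_{U|U'}(u\mid u')\,P_{X|U=u}(x),
\end{align*}
so $P_{X|U'=u'}$ is a convex combination of the conditional distributions $\{P_{X|U=u}\}_u$ with weights $\{P_{U|U'}(u\mid u')\}_u$ that sum to one. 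Note that it is the \emph{backward} kernel $P_{U|U'}$, not $P_{U'|U}$, that appears here; here the reference distribution in the $\chi^2$-information is $P_X$, matching Definition~1.

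Next I would use that $Q\mapsto\chi^2(Q\|P_X)=\sum_x Q(x)^2/P_X(x)-1$ is convex on the probability simplex, being a nonnegative combination of the convex maps $Q\mapsto Q(x)^2/P_X(x)$ shifted by a constant. Applying Jensen's inequality to the mixture above gives, for each $u'$,
\begin{align*}
\chi^2(P_{X|U'=u'}\|P_X)\le\sum_u P_{U|U'}(u\mid u')\,\chi^2(P_{X|U=u}\|P_X).
\end{align*}
Multiplying by $P_{U'}(u')$, summing over $u'$, exchanging the two finite sums, and using $\sum_{u'}P_{U'}(u')P_{U|U'}(u\mid u')=P_U(u)$ then yields
\begin{align*}
\chi^2_{\mathrm{information}}(X;U')=\sum_{u'}P_{U'}(u')\,\chi^2(P_{X|U'=u'}\|P_X)\le\sum_u P_U(u)\,\chi^2(P_{X|U=u}\|P_X)=\chi^2_{\mathrm{information}}(X;U),
\end{align*}
which is exactly the asserted inequality.

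I do not expect a genuine obstacle; the only points demanding care are to invoke Jensen with the second argument $P_X$ held fixed (convexity in both arguments jointly is not what is used) and to keep the mixing weights equal to the backward probabilities $P_{U|U'}$. If the linear-algebra language of the paper is preferred, an equivalent argument writes $\chi^2_{\mathrm{information}}(X;U)=\|[\sqrt{P_X}^{-1}]P_{X,U}[\sqrt{P_U}^{-1}]\|_F^2-1$, observes that the Markov chain makes the divergence-transition matrix of $(X,U')$ equal to that of $(X,U)$ right-multiplied by the divergence-transition matrix of the channel $P_{U'|U}$, whose largest singular value is $\sigma_{\max}=1$, and concludes by submultiplicativity of the Frobenius norm under such a product; but the Jensen route is shorter and self-contained.
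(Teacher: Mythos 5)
Your proposal is correct and follows essentially the same route as the paper: the Markov chain gives the mixture representation of $P_{X|U'=u'}$ with the backward weights $P_{U|U'}(\cdot\mid u')$, and the key step is Jensen's inequality with the reference $P_X$ held fixed, which the paper phrases as convexity of the squared $\ell_2$-norm applied to $[\sqrt{P_X}^{-1}](\cdot-P_X)$ and you phrase equivalently as convexity of $Q\mapsto\chi^2(Q\|P_X)$. You also correctly read the reference measure as $P_X$ (the $P_Y$ in the statement is a typo), so no gap remains.
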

	\begin{proof}
		For all $X\in\cal{X}$, let $P_{X|U'=u',U=u}$ defined on $\mathbb{R}^{|\cal X|}$ correspond to a distribution vector where each element equals to $P_{X=x|U'=u',U=u}$. We have
		\begin{align*}
		&\sum_{u} P_{U}(u) \chi^2(P_{X|U=u}||P_X)\\&=\sum_{u} P_{U}(u) \left\lVert[\sqrt{P_X}^{-1}](P_{X|U=u}-P_X) \right\rVert^2 \\
		&\stackrel{(a)}{=}\sum_{u',u} P_{U',U}(u',u) \left\lVert[\sqrt{P_X}^{-1}](P_{X|U'=u',U=u}-P_X) \right\rVert^2\\
		&=\sum_{u'}P_{U'}(u')\times\\&\sum_{u}P_{U|U'}(u|u')\left\lVert[\sqrt{P_X}^{-1}](P_{X|U'=u',U=u}-P_X) \right\rVert^2\\
		&\stackrel{(b)}{\geq} \sum_{u'}P_{U'}(u')\times\\&\left\lVert[\sqrt{P_X}^{-1}](\sum_{u}P_{U|U'}(u|u')P_{X|U'=u',U=u}-P_X) \right\rVert^2\\
		&=\sum_{u'}P_{U'}(u') \left\lVert[\sqrt{P_X}^{-1}](P_{X|U'=u'}-P_X) \right\rVert^2,
		\end{align*}
		where in step (a) we used the fact that $X-U-U'$ form a Markov chain. Furthermore, (b) follows from the fact that square of $\ell_2$-norm is convex, i.e., $f(x)=\lVert x\rVert_2^2$, is a convex function. A similar proof has been done in \cite[Th. 3]{Total}, where total variation has been used for privacy constraint. 
	\end{proof}
	\begin{remark}
		It is easy to check that the post-processing inequality does not hold for the strong $\chi^2$-privacy criterion in general, which is a per letter criterion, however, it holds on average ($\chi^2_{\normalfont\text{information}}$) as shown in the previous proposition. 
	\end{remark}
	Next proposition shows that the strong $\chi^2$-privacy constraint leads to a similar constraint on $X$ and $U$.
	\begin{proposition}
		Equation \eqref{local1} imposes a privacy constraint on $X$ and $U$ as follows
		\begin{align*}
		\left\lVert[\sqrt{P_X}^{-1}](P_{X|U=u_0}-P_X)\right\rVert^2\leq \epsilon^2(a^2+b^2),\\
		\left\lVert[\sqrt{P_X}^{-1}](P_{X|U=u_1}-P_X)\right\rVert^2\leq \epsilon^2(c^2+d^2).
		\end{align*}
	\end{proposition}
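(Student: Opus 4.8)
The plan is to use that, under the Markov chain, the conditional law of $X$ given $U$ is a \emph{linear} combination of the two conditional laws of $X$ given $U'$, with coefficients taken from the rows of $P_{U|U'}^{-1}$, and then to estimate the resulting weighted $\ell_2$-norm by the triangle inequality.

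First I would invoke the law of total probability along $X-U-U'$: for each $u'$, $P_{X|U'=u'}=\sum_{u}P_{U|U'}(u\mid u')\,P_{X|U=u}$. Collecting the two conditional distributions of each kind as the columns of $\mathcal{K}\times 2$ matrices $\mathbf{P}_{X|U'}$ and $\mathbf{P}_{X|U}$, this reads $\mathbf{P}_{X|U'}=\mathbf{P}_{X|U}\,P_{U|U'}$. Since $P_{U|U'}$ is invertible by assumption, $\mathbf{P}_{X|U}=\mathbf{P}_{X|U'}\,P_{U|U'}^{-1}$, and reading off the columns (recall $P_{U|U'}^{-1}$ has first column $(a,b)^{T}$ and second column $(c,d)^{T}$) gives $P_{X|U=u_0}=a\,P_{X|U'=u'_0}+b\,P_{X|U'=u'_1}$ and $P_{X|U=u_1}=c\,P_{X|U'=u'_0}+d\,P_{X|U'=u'_1}$. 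Subtracting $P_X$ and using Proposition~\ref{tir}, i.e.\ $a+b=1$ and $c+d=1$, I get $P_{X|U=u_0}-P_X=a\,(P_{X|U'=u'_0}-P_X)+b\,(P_{X|U'=u'_1}-P_X)$, and analogously for $u_1$ with $(c,d)$; in particular both are genuine (zero-sum) perturbations of $P_X$.

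Next I would multiply by $[\sqrt{P_X}^{-1}]$, apply the triangle inequality for the Euclidean norm, and use the privacy constraint \eqref{local1}, according to which each summand has squared norm at most $\tfrac12\epsilon^2$; this gives $\lVert[\sqrt{P_X}^{-1}](P_{X|U=u_0}-P_X)\rVert\le(|a|+|b|)\,\epsilon/\sqrt{2}$. Squaring and applying the elementary bound $(|a|+|b|)^2\le 2(a^2+b^2)$ (equivalently $(|a|-|b|)^2\ge 0$) yields $\lVert[\sqrt{P_X}^{-1}](P_{X|U=u_0}-P_X)\rVert^2\le \epsilon^2(a^2+b^2)$, and the identical computation with $(c,d)$ in place of $(a,b)$ gives the second claimed inequality. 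Equivalently, writing $v_i=[\sqrt{P_X}^{-1}](P_{X|U'=u'_i}-P_X)$, one may expand $\lVert a v_0+b v_1\rVert^2$ and bound the cross term by Cauchy--Schwarz, $2ab\langle v_0,v_1\rangle\le 2|ab|\,\lVert v_0\rVert\,\lVert v_1\rVert\le |ab|\epsilon^2$, arriving at the same estimate.

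The argument involves no genuinely hard step; the only points needing care are the bookkeeping of which column of $P_{U|U'}^{-1}$ multiplies which $P_{X|U'=u'_i}$ (together with the consistency check $a+b=c+d=1$ that makes the output vectors legitimate perturbations), and the observation that the factor $\tfrac12$ in \eqref{local1} is precisely what is needed so that the slack $2|ab|$ lost in passing from $(|a|+|b|)^2$ to $2(a^2+b^2)$ is absorbed. I would also remark that this per-letter bound need not be tight, but that it is all that is required to carry the Euclidean information-theory machinery of Section~\ref{result} over to this extended model.
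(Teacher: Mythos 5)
Your proposal is correct and follows essentially the same route as the paper: the same decomposition $P_{X|U=u_0}=aP_{X|U'=u'_0}+bP_{X|U'=u'_1}$ (the paper's \eqref{joon}) combined with $a+b=c+d=1$, followed by an elementary norm estimate that exploits the factor $\tfrac12$ in \eqref{local1}. The only cosmetic difference is that the paper applies the two-term Cauchy--Schwarz bound coordinatewise, whereas you use the triangle inequality together with $(|a|+|b|)^2\le 2(a^2+b^2)$ (or the cross-term Cauchy--Schwarz), which is the same estimate in a different arrangement.
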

	\begin{proof}
		By using \eqref{joon} we have $P_{X|u_0}=aP_{X|u'_0}+bP_{X|u'_1}=P_X+\epsilon[aJ_{u'_0}+bJ_{u'_1}]$ and $P_{X|u_1}=P_X+\epsilon[cJ_{u'_0}+dJ_{u'_1}]$. Thus, we obtain
		\begin{align*}
		\left\lVert[\sqrt{P_X}^{-1}](P_{X|U=u_0}-P_X)\right\rVert^2&\!=\!\left\lVert[\sqrt{P_X}^{-1}](aJ_{u'_0}+bJ_{u'_1})\right\rVert^2\\&\stackrel{(a)}{\leq} \epsilon^2(a^2+b^2),
		\end{align*}
		where step (a) can be shown as follows 
		\begin{align*}
		&\left\lVert[\sqrt{P_X}^{-1}](aJ_{u'_0}+bJ_{u'_1})\right\rVert^2 \\&= \sum_x \left(a\left(([\sqrt{P_X}^{-1}]J_{u'_0})(x)\right)+b\left(([\sqrt{P_X}^{-1}]J_{u'_1})(x)\right)\right)^2\\
		\\&\stackrel{(b)}{\leq}\! \sum_x (\!a^2+b^2)\!\!\left(\!\!\left(\!([\sqrt{P_X}^{-1}]J_{u'_0}\!)(x)\!\right)^2\!\!\!\!\!+\!\!\left(\!([\sqrt{P_X}^{-1}]J_{u'_1}\!)(x)\!\right)^2\right)\\
		\\&= (a^2+b^2)\left(\left\lVert[\sqrt{P_X}^{-1}]J_{u'_0}\right\rVert^2+\left\lVert[\sqrt{P_X}^{-1}]J_{u'_1}\right\rVert^2 \right)\\
		&\stackrel{(c)}{\leq} \epsilon^2(a^2+b^2),
		\end{align*}
		where (b) comes from Cauchy Schwarz inequality and in step (c) we used \eqref{prop333}. A similar proof applies to $U=u_1$. Furthermore, by letting $J_{u_0}=aJ_{u'_0}+bJ_{u'_1}$ and $J_{u_1}=cJ_{u'_0}+dJ_{u'_1}$ we have the following properties using \eqref{prop111} and \eqref{prop222} for $J_u$
		\begin{align*}
		&\sum_{x\in\mathcal{X}} J_{u}(x)=0,\ u\in\{u_0,u_1\},\\
		&P_{U}(u=0)J_{u_0}+P_{U}(u=1)J_{u_1}=\bm{0},
		\end{align*}
		which means $J_{u_0}$ and $J_{u_1}$ are the perturbation vectors for the conditional distributions $P_{X|u_0}$ and $P_{X|u_1}$.
\end{proof}
	We show that $P_{Y|U=u}$ can be written as a linear perturbation of $P_Y$. Since the Markov chain $X-Y-U-U'$ holds, we can write
	\begin{align*}
	P_{X|u'_0}=P_{X|U}P_{U|u'_0},\
	P_{X|u'_1}=P_{X|U}P_{U|u'_1}.
	\end{align*}
	Thus, $P_{X|U'}=P_{X|U}P_{U|U'}$ and since $P_{U|U'}$ is invertible, we obtain
	\begin{align}\label{joon}
	[P_{X|u_0}\ \ P_{X|u_1}]=P_{X|U'}\begin{bmatrix}
	a & c\\
	b & d
	\end{bmatrix}.
	\end{align}
	Furthermore, by using the Markov chain we have $P_{Y|U=u} = P_{X|Y}^{-1}P_{X|U=u}$, which results in
	\begin{align*}
	&P_{Y|u_0}\!=\!P_{X|Y}^{-1}[aP_{X|u'_{0}}\!+\!bP_{X|u'_{1}}],\\
	&P_{Y|u_1}\!=\!P_{X|Y}^{-1}[cP_{X|u'_{0}}\!+\!dP_{X|u'_{1}}].
	\end{align*}
	 Considering $P_{Y|U=u_0}$, we have
	\begin{align}
	P_{Y|u_0}-P_Y&=P_{X|Y}^{-1}[aP_{X|u'_{0}}+bP_{X|u'_{1}}-P_X]\nonumber\\&=P_{X|Y}^{-1}[a(P_{X|u'_{0}}-P_X)+b(P_{X|u'_{1}}-P_X)]\nonumber\\
	P_{Y|u_0}&=P_Y+a\epsilon P_{X|Y}^{-1}J_{u'_0}+b\epsilon P_{X|Y}^{-1}J_{u'_1}.\label{y1}
	\end{align}
	Similarly, $P_{Y|U=u_1}$ is found as follows
	\begin{align}
	P_{Y|u_1}&=P_Y+c\epsilon P_{X|Y}^{-1}J_{u'_0}+d\epsilon P_{X|Y}^{-1}J_{u'_1}.\label{y2}
	\end{align}
	Now we can approximate $I(U;Y)$ by a squared Euclidean metric.
	\begin{proposition}
		For a small enough $\epsilon$, $I(U;Y)$ can be approximated as follows
		\begin{align*}
		&I(U;Y)\cong \\&\frac{1}{2}\epsilon^2\left[P_{u_0}\left\lVert W(aL_{u'_0}+bL_{u'_1})\right\rVert^2+P_{u_1}\left\lVert W(cL_{u'_0}+dL_{u'_1})\right\rVert^2\right],
		\end{align*}
		where $W$ is defined in Corollary~1 and $L_{u'}=[\sqrt{P_X}^{-1}]J_{u'}\in\mathbb{R}^{\mathcal{K}}$ for $ u'\in\{u'_0,u'_1\}$.
	\end{proposition}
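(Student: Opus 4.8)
The plan is to reuse the local expansion of the KL divergence that established the earlier approximation \eqref{approx2}, now applied to the two conditional output distributions $P_{Y|U=u_0}$ and $P_{Y|U=u_1}$ obtained in \eqref{y1} and \eqref{y2}. First I would write
\begin{align*}
I(U;Y)=\sum_{u\in\{u_0,u_1\}}P_U(u)\sum_y P_{Y|U=u}(y)\log\!\left(\frac{P_{Y|U=u}(y)}{P_Y(y)}\right),
\end{align*}
and substitute, for $u=u_0$, the perturbation $P_{Y|u_0}-P_Y=\epsilon\,P_{X|Y}^{-1}(aJ_{u'_0}+bJ_{u'_1})$ from \eqref{y1} (and its $(c,d)$-counterpart \eqref{y2} for $u=u_1$), so that each logarithm takes the form $\log\bigl(1+\epsilon\,v_u(y)/P_Y(y)\bigr)$ with $v_{u_0}=P_{X|Y}^{-1}(aJ_{u'_0}+bJ_{u'_1})$ and $v_{u_1}=P_{X|Y}^{-1}(cJ_{u'_0}+dJ_{u'_1})$.

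I would then apply the second-order expansion $\log(1+x)=x-\tfrac{x^2}{2}+o(x^2)$ and perform the same bookkeeping as before, using $P_{Y|U=u}(y)=P_Y(y)+\epsilon v_u(y)$. The linear contribution cancels after the inner sum over $y$: from \eqref{prop111} we have $\bm{1}^T J_{u'}=0$, and $\bm{1}^T P_{X|Y}=\bm{1}^T$ gives $\bm{1}^T P_{X|Y}^{-1}=\bm{1}^T$, so that $\sum_y v_{u_0}(y)=\sum_y v_{u_1}(y)=0$. After collecting terms and pushing the higher-order remainders into the $o(\epsilon^2)$ term, one is left with
\begin{align*}
I(U;Y)\cong\frac12\epsilon^2\left[P_{u_0}\sum_y\frac{v_{u_0}(y)^2}{P_Y(y)}+P_{u_1}\sum_y\frac{v_{u_1}(y)^2}{P_Y(y)}\right].
\end{align*}
Writing $\sum_y v(y)^2/P_Y(y)=\lVert[\sqrt{P_Y}^{-1}]v\rVert^2$, substituting $J_{u'}=[\sqrt{P_X}]L_{u'}$, and using $[\sqrt{P_Y}^{-1}]P_{X|Y}^{-1}[\sqrt{P_X}]=W$ from Corollary~\ref{corr1} turns the two bracketed terms into $\lVert W(aL_{u'_0}+bL_{u'_1})\rVert^2$ and $\lVert W(cL_{u'_0}+dL_{u'_1})\rVert^2$, which is exactly the claimed identity.

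The step I expect to require the most care is justifying that the second-order expansion is valid, i.e.\ that $|\epsilon\,v_u(y)/P_Y(y)|<1$ for all $y$ and $u\in\{u_0,u_1\}$. Here I would bound $\lVert aJ_{u'_0}+bJ_{u'_1}\rVert^2\le 2(a^2+b^2)\max_{x}P_X(x)$ using Cauchy--Schwarz and \eqref{prop333}, then use $\lVert P_{X|Y}^{-1}v\rVert\le\sigma_{\text{min}}(P_{X|Y})^{-1}\lVert v\rVert$ and divide by $\min_y P_Y(y)$; this reproduces the estimate behind \eqref{approx2} but picks up the extra factor $a^2+b^2$ (respectively $c^2+d^2$), which is finite for every fixed invertible channel $P_{U|U'}$ by Proposition~\ref{tir}. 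The resulting sufficient condition $\epsilon<\dfrac{|\sigma_{\text{min}}(P_{X|Y})|\min_y P_Y(y)}{\sqrt{2\max(a^2+b^2,\,c^2+d^2)}\,\sqrt{\max_x P_X(x)}}$ is what ``small enough $\epsilon$'' refers to in the statement.
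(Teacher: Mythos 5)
Your argument is essentially identical to the paper's own proof: decompose $I(U;Y)$ into the two KL divergences, substitute the perturbations from \eqref{y1} and \eqref{y2}, apply the second-order expansion of $\log(1+x)$ with the linear term vanishing because $\bm{1}^T P_{X|Y}^{-1}J_{u'}=\bm{1}^T J_{u'}=0$, and rewrite the quadratic terms as $\lVert W(aL_{u'_0}+bL_{u'_1})\rVert^2$ and $\lVert W(cL_{u'_0}+dL_{u'_1})\rVert^2$. The only addition is your explicit sufficient bound on $\epsilon$ (with the $a^2+b^2$, $c^2+d^2$ factors), which simply makes precise what the paper leaves as ``small enough $\epsilon$'' for the Taylor expansion to be valid.
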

	\begin{proof}
	By using \eqref{y1} and \eqref{y2} we have
	\begin{align*}
	I(Y;U)&=\sum_u P_U(u)D(P_{Y|U=u}||P_Y)\\
	&=P_{u_0}\!\!\sum_y\!\!P_{Y|u_0}\!\log(\frac{P_{Y|u_0}}{P_Y})\! +\!\!P_{u_1}\!\!\sum_y\!\!P_{Y|{u_1}}\!\log(\frac{P_{Y|u_1}}{P_Y})\\
	&=P_{u_0}\sum_y(P_Y+a\epsilon P_{X|Y}^{-1}J_{u'_0}+b\epsilon P_{X|Y}^{-1}J_{u'_1})\\&\times\log(1+\frac{\epsilon P_{X|Y}^{-1}(aJ_{u'_0}+bJ_{u'_1})}{P_Y})
	\\&+P_{u_1}\sum_y(P_Y+c\epsilon P_{X|Y}^{-1}J_{u'_0}+d\epsilon P_{X|Y}^{-1}J_{u'_1})\\&\times\log(1+\frac{\epsilon P_{X|Y}^{-1}(cJ_{u'_0}+dJ_{u'_1})}{P_Y})\\
	&=\frac{1}{2}\epsilon^2P_{u_0}\sum_y\frac{(P_{X|Y}^{-1}(aJ_{u'_0}+bJ_{u'_1}))^2}{P_Y}\\&+\frac{1}{2}\epsilon^2P_{u_1}\sum_y\frac{(P_{X|Y}^{-1}(cJ_{u'_0}+dJ_{u'_1}))^2}{P_Y}+o(\epsilon^2)\\
	&=\frac{1}{2}\epsilon^2P_{u_0}\left\lVert[\sqrt{P_Y}^{-1}]P_{X|Y}^{-1}(aJ_{u'_0}+bJ_{u'_1})\right\rVert^2\\&+\frac{1}{2}\epsilon^2P_{u_1}\left\lVert[\sqrt{P_Y}^{-1}]P_{X|Y}^{-1}(cJ_{u'_0}+dJ_{u'_1})\right\rVert^2\!\!+o(\epsilon^2)\\
	&\cong\frac{1}{2}\epsilon^2P_{u_0}\left\lVert[\sqrt{P_Y}^{-1}]P_{X|Y}^{-1}[\sqrt{P_X}](aL_{u'_0}+bL_{u'_1})\right\rVert^2\\
	&+\frac{1}{2}\epsilon^2P_{u_1}\left\lVert[\sqrt{P_Y}^{-1}]P_{X|Y}^{-1}[\sqrt{P_X}](cL_{u'_0}+dL_{u'_1})\right\rVert^2\\
	&=\frac{1}{2}\epsilon^2 P_{u_0}\left\lVert W(aL_{u'_0}+bL_{u'_1})\right\rVert^2\\&+\frac{1}{2}\epsilon^2P_{u_1}\left\lVert W(cL_{u'_0}+dL_{u'_1})\right\rVert^2.
	\end{align*}
	\end{proof}
	By locally approximating $I(U;Y)$, the main privacy problem in \eqref{privacy1} can be reduced to a simple quadratic problem.
	Substituting $L_u$ in \eqref{prop111}, \eqref{prop111} and \eqref{prop333} leads to the following result.
	\begin{corollary}
		For a small enough $\epsilon$, the privacy mechanism design problem in \eqref{privacy1} can be approximately solved by the following linear problem
		\begin{align}
		\max_{\{L_{u'},P_u\}} &P_{u_0}\!\left\lVert W(aL_{u'_0}\!+\!bL_{u'_1})\right\rVert^2\!\!+\!P_{u_1}\!\left\lVert\label{newprob22} W(cL_{u'_0}\!+\!dL_{u'_1})\right\rVert^2\\
		\text{subject to:}\ &\|L_{u'}\|^2\leq 1,\ u'\in\{u'_0,u'_1\},\\
		&\sum_x \sqrt{P_X(x)}L_{u'}(x)=0,\ u'\in\{u'_0,u'_1\},\label{orth11}\\
		&P_{u'_0}L_{u'_0}+P_{u'_1}L_{u'_1}=\bm{0},\label{orth222}
		\end{align}
		where $\bm{0}\in\mathbb{R}^{\mathcal{K}}$.
	\end{corollary}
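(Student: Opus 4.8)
The plan is to follow the same chain of equivalences and substitutions that was used to pass from \eqref{privacy} to Corollary~\ref{corr1}, but now carrying the two perturbation vectors $J_{u'_0}, J_{u'_1}$ together with the fixed post-processing coefficients $(a,b,c,d)$ through every step. First I would record that Proposition~9 already gives the local approximation of the objective $I(U;Y)$ in terms of $W$ and the combined vectors $aL_{u'_0}+bL_{u'_1}$ and $cL_{u'_0}+dL_{u'_1}$, so the only remaining task is to re-express the feasibility constraints \eqref{prop111}--\eqref{prop333} in the $L_{u'}$ variables. The substitution $L_{u'}=[\sqrt{P_X}^{-1}]J_{u'}$, exactly as in the passage preceding Corollary~\ref{corr1}, turns \eqref{prop333} directly into $\|L_{u'}\|^2\le 1$, which gives the norm constraint for $u'\in\{u'_0,u'_1\}$.

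Next I would handle the two affine constraints. Constraint \eqref{prop111}, which says $\sum_x J_{u'}(x)=0$, becomes $\sum_x\sqrt{P_X(x)}\,L_{u'}(x)=0$ after the change of variables, since $J_{u'}(x)=\sqrt{P_X(x)}\,L_{u'}(x)$; this yields \eqref{orth11}, the orthogonality of each $L_{u'}$ to the vector $\sqrt{P_X}$. Constraint \eqref{prop222}, $P_{U'}(u'_0)J_{u'_0}+P_{U'}(u'_1)J_{u'_1}=\bm 0$, is a componentwise identity; multiplying componentwise by $\sqrt{P_X(x)}^{-1}$ (which is legitimate since every entry of $P_X$ is strictly positive, as assumed in Section~\ref{sec:system}) preserves the zero vector and gives $P_{u'_0}L_{u'_0}+P_{u'_1}L_{u'_1}=\bm 0$, i.e. \eqref{orth222}. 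Here I would also note, as in the remark after Corollary~\ref{corr1}, that because $P_{U|U'}$ is a fixed deterministic kernel, $P_{U'}$ is determined once we reparametrize, and that $P_{u_0},P_{u_1}$ appearing in the objective are simply $P_{u_i}=\sum_{u'}P_{U|U'}(u_i|u')P_{U'}(u')$, fixed affine functions of $P_{U'}$; so the optimization is genuinely over $\{L_{u'_0},L_{u'_1},P_{U'}\}$ exactly as claimed.

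Finally I would assemble the pieces: maximizing $I(U;Y)$ subject to the Markov chain and \eqref{local1} is, up to the ignored $o(\epsilon^2)$ term and for $\epsilon$ small enough that the second-order Taylor expansion of $\log(1+x)$ converges (the same hypothesis as in Propositions~2 and~9), equivalent to maximizing the approximated objective of Proposition~9 over all valid perturbation vectors, and the change of variables $L_{u'}=[\sqrt{P_X}^{-1}]J_{u'}$ is a bijection between valid $\{J_{u'}\}$ and $\{L_{u'}\}$ satisfying \eqref{orth11}--\eqref{orth222} with $\|L_{u'}\|^2\le1$. This is precisely the program stated in the corollary. The step I expect to need the most care is bookkeeping rather than any genuine difficulty: one must be sure that the per-letter privacy constraint \eqref{local1} on $U'$ is fully captured by the three properties \eqref{prop111}--\eqref{prop333} of $J_{u'}$ (so that no feasible mechanism is lost and no infeasible one is admitted), and that the factor-$\tfrac12$ in \eqref{local1} versus the $\epsilon^2$ bound in \eqref{prop333} is consistent with the normalization chosen in Proposition~9 — but since Proposition~9 has already been proved under exactly this convention, invoking it closes the argument.
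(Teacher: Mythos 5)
Your proposal is correct and follows essentially the same route as the paper, which derives this corollary with no separate argument beyond exactly what you describe: invoke the local approximation of $I(U;Y)$ (the preceding proposition) and substitute $L_{u'}=[\sqrt{P_X}^{-1}]J_{u'}$ into \eqref{prop111}--\eqref{prop333} to obtain the norm, orthogonality, and averaging constraints, with the fixed invertible channel $P_{U|U'}$ linking $P_{U'}$ and $P_U$ as in the paper's remark following the corollary. Your bookkeeping remark about the factor $\tfrac12$ in \eqref{local1} versus the normalization in \eqref{prop333} is a fair observation about a looseness present in the paper itself and does not affect the argument.
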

	\begin{remark}
		\normalfont
		Condition \eqref{orth11} can be rewritten as $L_{u'_0}\perp \sqrt{P_X}$ and $L_{u'_1}\perp \sqrt{P_X}$. Also the maximization is over $\{L_{u'_0},L_{u'_1},P_{u_0},P_{u_1}\}$. $P_{u'_0}$ and $P_{u'_1}$ are replaced by $aP_{u_0}+cP_{u_1}$ and $bP_{u_0}+dP_{u_1}$, since we have $\begin{bmatrix}
		P_{u'_0}\\P_{u'_1}
		\end{bmatrix}=P_{U|U'}^{-1}\begin{bmatrix}
		P_{u_0}\\P_{u_1}
		\end{bmatrix}$.
		\end{remark}
	In next proposition we derive the solution of \eqref{newprob22}.
	\begin{proposition}
		The solution of \eqref{newprob22} is as follows
		\begin{align*}
		&L_{u'_0}=-L_{u'_1}=\psi,\\
		 &P_{u_0}=\frac{c-\frac{1}{2}}{c-a},\ P_{u_1}=\frac{\frac{1}{2}-a}{c-a},\ P_{u'_0}=P_{u'_0}=\frac{1}{2}\\
		 & \text{Maximum value}= 4(c-\frac{1}{2})(\frac{1}{2}-a)\sigma^2,
		\end{align*}
		where $\sigma^2$ is the largest singular value of $W$ with corresponding singular vector $\psi$.
	\end{proposition}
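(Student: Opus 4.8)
The plan is to expose the hidden one-dimensional structure of \eqref{newprob22}. First I would use the mean constraint \eqref{orth222}, $P_{u'_0}L_{u'_0}+P_{u'_1}L_{u'_1}=\bm{0}$, together with $P_{u'_0}+P_{u'_1}=1$ and $P_{u'_0},P_{u'_1}\ge0$ (write $q:=P_{u'_0}\in[0,1]$): this forces $L_{u'_0}$ and $L_{u'_1}$ to be anti-parallel, so every feasible point can be written as $L_{u'_0}=t\hat\psi$, $L_{u'_1}=-s\hat\psi$ with $\hat\psi$ a unit vector, $t,s\in[0,1]$, $qt=(1-q)s$, and, by \eqref{orth11}, $\hat\psi\perp\sqrt{P_X}$; the endpoint cases $q\in\{0,1\}$, where one of $L_{u'_0},L_{u'_1}$ vanishes, are included. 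Substituting, $aL_{u'_0}+bL_{u'_1}=(at-bs)\hat\psi$ and $cL_{u'_0}+dL_{u'_1}=(ct-ds)\hat\psi$, so the objective factorizes as $\|W\hat\psi\|^2\big[P_{u_0}(at-bs)^2+P_{u_1}(ct-ds)^2\big]$. Since the bracket is non-negative and does not involve $\hat\psi$, I would optimize $\hat\psi$ separately: by Theorem~\ref{th1} the maximum of $\|W\hat\psi\|^2$ over unit $\hat\psi\perp\sqrt{P_X}$ equals the squared largest singular value $\sigma^2$ of $W$ and is attained at $\hat\psi=\psi$. It then remains to maximize $B:=P_{u_0}(at-bs)^2+P_{u_1}(ct-ds)^2$ over the admissible $(t,s,p)$ with $p:=P_{u_0}$ and $q=ap+c(1-p)$.

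For fixed $p$ the ratio $t:s=(1-q):q$ is determined, and replacing $(t,s)$ by $(\lambda t,\lambda s)$ scales $B$ by $\lambda^2$, so the optimum uses the largest admissible scale, $t=(1-q)/\mu$, $s=q/\mu$ with $\mu:=\max(q,1-q)$. Using $b=1-a$ and $d=1-c$ one gets $at-bs=(a-q)/\mu$ and $ct-ds=(c-q)/\mu$, and since $p=\tfrac{q-c}{a-c}$, $1-p=\tfrac{a-q}{a-c}$, the bracket collapses to $B=h(q):=\dfrac{(c-q)(q-a)}{\max(q,1-q)^2}$. I would then show $h$ is non-decreasing on $[0,\tfrac12]$ and non-increasing on $[\tfrac12,1]$: in case $A_1$ of Proposition~\ref{tir} (so $a\le0\le1\le c$), on $[0,\tfrac12]$ write $h(q)=\big(1+\tfrac{c-1}{1-q}\big)\big(\tfrac{1-a}{1-q}-1\big)$, a product of two non-negative non-decreasing functions of $q$, and on $[\tfrac12,1]$ write $h(q)=\big(\tfrac{c}{q}-1\big)\big(1-\tfrac{a}{q}\big)$, a product of two non-negative non-increasing functions. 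Hence the maximum over $[0,1]$ is attained at $q=\tfrac12$: there $\mu=\tfrac12$, so $t=s=1$, $L_{u'_0}=-L_{u'_1}=\psi$, $P_{u'_0}=P_{u'_1}=\tfrac12$, while $p=\tfrac{c-1/2}{c-a}$ and $1-p=\tfrac{1/2-a}{c-a}$, and $h(\tfrac12)=4(c-\tfrac12)(\tfrac12-a)$, giving optimal value $4(c-\tfrac12)(\tfrac12-a)\,\sigma^2$. Case $A_2$ follows at once since $h(q)$ is symmetric under $a\leftrightarrow c$ and there $c\le0\le1\le a$; in both cases $(c-\tfrac12)$ and $(\tfrac12-a)$ carry the sign of $(c-a)$, so the stated $P_{u_0},P_{u_1}$ are non-negative and sum to $1$.

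The main obstacle is this last step, establishing that $q=\tfrac12$ is the global maximizer of $h$ on $[0,1]$: a direct derivative test is unpleasant, so the workable route is the factorization trick, splitting $h$ on each half-interval into a product of monotone factors whose monotonicity directions are pinned down exactly by the sign restrictions of Proposition~\ref{tir}. A secondary point needing care is justifying the orthogonality $\psi\perp\sqrt{P_X}$ required to invoke Theorem~\ref{th1}: this holds because $\sqrt{P_X}$ is itself a right-singular vector of $W$ for the singular value $1$ (indeed $W\sqrt{P_X}=\sqrt{P_Y}$ and $W^{\top}\sqrt{P_Y}=\sqrt{P_X}$ since $\bm{1}^T P_{X|Y}=\bm{1}^T$), which is strictly dominated by $\sigma$; and verifying that the anti-parallel reparametrization is a genuine exhaustion of the feasible set, so that no feasible point is lost when passing to the $(t,s,q,\hat\psi)$ coordinates.
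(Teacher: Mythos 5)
Your proof is correct and takes essentially the same route as the paper's Appendix D: use the mean constraint \eqref{orth222} to collapse both perturbations onto a single direction, pull out $\sigma^2$ via Theorem~\ref{th1} (with $\psi\perp\sqrt{P_X}$ exactly as in Appendix C), and reduce the remaining optimization over the distribution to a scalar problem whose optimum is at $P_{u'_0}=P_{u'_1}=\tfrac12$, giving $4(c-\tfrac12)(\tfrac12-a)\sigma^2$. The only difference is organizational: you merge the paper's two cases ($|P_{u'_0}|\gtrless |P_{u'_1}|$) and its separately handled degenerate case ($P_{u'_0}$ or $P_{u'_1}$ equal to zero, value $-ac\sigma^2$) into the single function $h(q)=(c-q)(q-a)/\max(q,1-q)^2$ on $q\in[0,1]$ and establish unimodality by factoring $h$ into monotone nonnegative factors (using the sign structure of Proposition~\ref{tir}) instead of the paper's derivative-and-case analysis plus the extra check $4(c-\tfrac12)(\tfrac12-a)\geq -ac$ — a cleaner bookkeeping of the same argument.
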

	\begin{proof}
		The proof is provided in Appendix D.
	\end{proof}
	\begin{corollary}
		The maximum value in \eqref{privacy1} can be approximated by $2\epsilon^2\sigma^2(c-\frac{1}{2})(\frac{1}{2}-a)$ for small $\epsilon$ and can be achieved by conditional distributions as follows
		\begin{align*}
		P_{Y|u_0}=P_Y+\epsilon(a-b)P_{X|Y}^{-1}[\sqrt{P_X}]\psi,\\
		P_{Y|u_1}=P_Y+\epsilon(c-d)P_{X|Y}^{-1}[\sqrt{P_X}]\psi,
		\end{align*}
		where $\sigma^2$ is the largest singular value of $W$ with corresponding singular vector $\psi$. Furthermore, the distribution of $U$ is as follows
		\begin{align*}
		P_{u_0}=\frac{c-\frac{1}{2}}{c-a},\ P_{u_1}=\frac{\frac{1}{2}-a}{c-a}.
		\end{align*}
	\end{corollary}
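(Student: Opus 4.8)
The plan is to derive the corollary directly from Proposition~10, which already gives the maximizer of the reduced problem in \eqref{newprob22}. First I would recall that by Proposition~9 the local approximation yields
\begin{align*}
I(U;Y)\cong\frac{1}{2}\epsilon^2\!\left[P_{u_0}\!\left\lVert W(aL_{u'_0}\!+\!bL_{u'_1})\right\rVert^2\!+\!P_{u_1}\!\left\lVert W(cL_{u'_0}\!+\!dL_{u'_1})\right\rVert^2\right],
\end{align*}
and by Corollary~4 the problem \eqref{privacy1} is (for small $\epsilon$) solved by the maximizer of \eqref{newprob22} up to the $o(\epsilon^2)$ term. Substituting the optimal choice from Proposition~10, namely $L_{u'_0}=-L_{u'_1}=\psi$ with $\psi$ the principal right-singular vector of $W$, and $P_{u_0}=\frac{c-1/2}{c-a}$, $P_{u_1}=\frac{1/2-a}{c-a}$, into the objective and multiplying by $\frac{1}{2}\epsilon^2$ gives the claimed approximate maximum $2\epsilon^2\sigma^2(c-\tfrac12)(\tfrac12-a)$, since the bracketed quantity equals the maximum value $4(c-\tfrac12)(\tfrac12-a)\sigma^2$ stated in Proposition~10.

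Next I would exhibit the conditional distributions that achieve this value. Using \eqref{y1}--\eqref{y2} together with $J_{u'}=[\sqrt{P_X}]L_{u'}$ and the substitution $L_{u'_0}=-L_{u'_1}=\psi$, one has $aJ_{u'_0}+bJ_{u'_1}=(a-b)[\sqrt{P_X}]\psi$ and $cJ_{u'_0}+dJ_{u'_1}=(c-d)[\sqrt{P_X}]\psi$, so that
\begin{align*}
P_{Y|u_0}&=P_Y+\epsilon(a-b)P_{X|Y}^{-1}[\sqrt{P_X}]\psi,\\
P_{Y|u_1}&=P_Y+\epsilon(c-d)P_{X|Y}^{-1}[\sqrt{P_X}]\psi,
\end{align*}
which is exactly the asserted mechanism. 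I would also verify consistency: the induced $P_{U'}$ satisfies $P_{u'_0}=P_{u'_1}=\tfrac12$ (as in Proposition~10), so $P_{X|U'=u'}-P_X=\pm\epsilon[\sqrt{P_X}]\psi$ and hence $\bigl\lVert[\sqrt{P_X}^{-1}](P_{X|U'=u'}-P_X)\bigr\rVert^2=\epsilon^2\lVert\psi\rVert^2\le\epsilon^2$, so the privacy constraint \eqref{local1} (in the normalization used there) is met, and the marginal relation $P_U=P_{U|U'}P_{U'}$ recovers the stated $P_{u_0},P_{u_1}$.

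The only genuine content beyond bookkeeping is making sure the reduction from the exact $I(U;Y)$ to \eqref{newprob22} is legitimate for the claimed mechanism — i.e.\ that substituting these particular conditional distributions into the \emph{exact} mutual information agrees with the approximation up to $o(\epsilon^2)$, and that the chosen $(a,b,c,d)$ from Proposition~\ref{tir} indeed make the coefficients $P_{u_0},P_{u_1}$ lie in $[0,1]$ (this is where one uses $a\le 0,\ c\ge 1$ from set $A_1$, or the symmetric case, so that $c-\tfrac12>0$ and $\tfrac12-a>0$ and $P_{u_0}+P_{u_1}=1$). So the main obstacle, such as it is, is checking these sign/feasibility conditions and confirming the $o(\epsilon^2)$ error bound carries over — everything else is a direct substitution of Proposition~\ref{tir}, Proposition~9, Proposition~10, and Corollary~4.
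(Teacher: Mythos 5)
Your proposal is correct and follows essentially the same route the paper takes: the corollary is a direct consequence of substituting the maximizer from the preceding proposition (uniform $P_{U'}$, $L_{u'_0}=-L_{u'_1}=\psi$, $P_{u_0}=\frac{c-1/2}{c-a}$, $P_{u_1}=\frac{1/2-a}{c-a}$, optimal value $4(c-\tfrac12)(\tfrac12-a)\sigma^2$) into the local approximation of $I(U;Y)$ and into \eqref{y1}--\eqref{y2}, which yields both the value $2\epsilon^2\sigma^2(c-\tfrac12)(\tfrac12-a)$ and the stated conditional distributions. Your added feasibility and privacy-constraint checks (signs of $a,c$ from the two sets in Proposition~\ref{tir}, and $\lVert\psi\rVert=1$) are consistent with the paper; only your numbering of the cited propositions/corollaries is shifted by one, which is immaterial.
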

	\subsection{Privacy problem with utility provider}
	In this part, we consider a similar framework as in \cite{gun}, where we have an agent and a utility provider. The agent observes useful data denoted by RV $X$ and the utility provider is interested in target data denoted by RV $Y$ which is not directly accessible by the agent but correlated with RV $X$. The agent receives utility by disclosing information about $Y$. Furthermore, we assume $X$ is dependent on the private data denoted by RV $Z$, and tried to keep it private and not disclose much information about $Z$. Thus, the agent uses a privacy mechanism to produce $U$ and tries to maximize the utility measured by $I(U;Y)$ and at the same time satisfies the privacy criterion. RV $U$ denotes the disclosed data. Here we assume that all random variables are discrete and have finite support, i.e., $|\mathcal{X}|,|\mathcal{Y}|,|\mathcal{Z}|<\infty$. Since the disclosed data is produced by observing $X$ and the variables $X$, $Y$ and $Z$ are correlated, we have the Markov chain $(Z,Y)-X-U$. We assume that $|\mathcal{X}|=|\mathcal{Z}|=\mathcal{K}$ and the leakage matrix $P_{Z|X}\in\mathbb{R}^{\mathcal{K}\times\mathcal{K}}$ is invertible. Furthermore, the marginal vectors $P_X,\ P_Z$ and $P_Y$ contain non-zero elements. Here, privacy is measured as follows  
	\begin{align*}
\left\lVert[\sqrt{P_Z}^{-1}](P_{Z|U=u}-P_{Z})\right\rVert^2\leq \epsilon^2,\ \forall u\in \mathcal{U}.
	\end{align*}
	The privacy problem is characterized as follows 
	\begin{subequations}
	\begin{align}
	\max_{P_{U|X}} \ \ &I(U;Y),\label{privacy13}\\
	\text{subject to:}\ &(Z,Y)-X-U,\\
	&\left\lVert[\sqrt{P_Z}^{-1}](P_{Z|U=u}-P_Z)\right\rVert^2\leq \frac{1}{2}\epsilon^2,\ \forall u\in\mathcal{U},\label{local13}
	\end{align}
	\end{subequations}
\begin{remark}
	\normalfont
	By using Fenchel-Eggleston-Carath\'{e}odory's Theorem \cite{el2011network}, it can be shown that it suffices to consider $U$ such that $|\mathcal{U}|\leq |\mathcal{X}|+1$. Furthermore, the maximum in \eqref{privacy13} is achieved so we used maximum instead of supremum. 
\end{remark}
	Similarly, \eqref{local13} results in $P_{Z|U=u}=P_Z+\epsilon J_u$, where $J_u\in\mathbb{R}^\mathcal{K}$ is the perturbation vector that has the following three properties 
	\begin{align}
	\sum_{z\in\mathcal{Z}} J_u(z)=0,\ \forall u,\label{prop13}\\
	\sum_{u\in\mathcal{U}} P_U(u)J_u=\bm{0}, \label{prop23}\\
	\sum_{z\in\mathcal{Z}}\frac{J_u^2(z)}{P_Z(z)}\leq 1, \forall u\label{prop33},
	\end{align}
	where $\bm{0}\in\mathbb{R}^\mathcal{K}$.
	By following the same procedure in Proposition~1 and using the Euclidean information concept, it can be shown that for a small enough $\epsilon$, \eqref{local13} results in the following leakage constraint
	\begin{align*}
	I(Z;U)\leq \frac{1}{2}\epsilon^2+o(\epsilon^2).
	\end{align*}
	Now we show that $P_{Y|U=u}$ can be written as a linear perturbation of $P_Y$. Since the Markov chain $(Z,Y)-X-U$ holds, we can write $P_{X|U=u}=P_{Z|X}^{-1}P_{Z|U=u}$. Thus,
	\begin{align*}
	P_{Y|U=u}=P_{Y|X}P_{X|U=u} = P_{Y|X}P_{Z|X}^{-1}P_{Z|U=u}.
	\end{align*}
	Using $P_{Z|U=u}=P_Z+\epsilon\cdot J_u$, $P_{Y|U=u}$ can be written as follows
	\begin{align*}
	P_{Y|U=u}= P_{Y|X}P_{Z|X}^{-1}(P_Z+\epsilon\cdot J_u)=P_Y+\epsilon\cdot P_{Y|X}P_{Z|X}^{-1}J_u.
	\end{align*}
	The next proposition shows that $I(U;Y)$ can be locally approximated by a squared Euclidean metric.
	\begin{proposition}
		For a small enough $\epsilon$, $I(U;Y)$ can be approximated as follows
		\begin{align}
		I(U;Y)\cong\frac{1}{2}\epsilon^2\sum_u P_U(u)\left\lVert [\sqrt{P_Y}^{-1}]P_{Y|X}P_{Z|X}^{-1}[\sqrt{P_Z}]L_u \right\rVert^2, \label{approx33}
		\end{align}
		where $L_u=[\sqrt{P_Z}^{-1}]J_u$.
	\end{proposition}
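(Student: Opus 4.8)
The plan is to reuse, almost verbatim, the local-approximation argument behind Proposition~3, replacing the linear map $P_{X|Y}^{-1}$ there by $P_{Y|X}P_{Z|X}^{-1}$ here. Starting from $I(U;Y)=\sum_u P_U(u)\sum_y P_{Y|U=u}(y)\log\!\big(P_{Y|U=u}(y)/P_Y(y)\big)$, I would substitute the perturbation identity just established, $P_{Y|U=u}=P_Y+\epsilon\,P_{Y|X}P_{Z|X}^{-1}J_u$, and abbreviate $v_u\coloneqq P_{Y|X}P_{Z|X}^{-1}J_u\in\mathbb{R}^{\mathcal K}$. Each inner sum then reads $\sum_y\big(P_Y(y)+\epsilon v_u(y)\big)\log\!\big(1+\epsilon\,v_u(y)/P_Y(y)\big)$, which I would expand with $\log(1+t)=t-\tfrac{t^2}{2}+o(t^2)$ and sort by powers of $\epsilon$.

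The decisive step is that the $O(\epsilon)$ contribution vanishes because $\sum_y v_u(y)=0$. Indeed $\bm 1^T v_u=\bm 1^T P_{Y|X}P_{Z|X}^{-1}J_u$; since every column of $P_{Y|X}$ is a pmf we have $\bm 1^T P_{Y|X}=\bm 1^T$, and since every column of $P_{Z|X}$ is a pmf we have $\bm 1^T P_{Z|X}=\bm 1^T$, hence $\bm 1^T P_{Z|X}^{-1}=\bm 1^T$; combining these with \eqref{prop13} gives $\bm 1^T v_u=\bm 1^T J_u=0$. With the linear term gone, the residual second-order part is $\tfrac12\epsilon^2\sum_y v_u(y)^2/P_Y(y)=\tfrac12\epsilon^2\big\lVert[\sqrt{P_Y}^{-1}]P_{Y|X}P_{Z|X}^{-1}J_u\big\rVert^2$; substituting $J_u=[\sqrt{P_Z}]L_u$ and averaging over $u$ with weights $P_U(u)$ yields \eqref{approx33}.

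It remains to pin down what ``small enough $\epsilon$'' means, i.e., to guarantee the Taylor step is licit, $|\epsilon\,v_u(y)/P_Y(y)|<1$ for all $u$ and $y$. As in Proposition~3, I would bound $\lVert v_u\rVert\le\sigma_{\max}(P_{Y|X})\,\sigma_{\max}(P_{Z|X}^{-1})\,\lVert J_u\rVert$ and invoke $\lVert J_u\rVert^2\le\max_{z\in\mathcal Z}P_Z(z)$, which follows from \eqref{prop33}; this produces an explicit sufficient threshold of the shape $\epsilon<\dfrac{\sigma_{\min}(P_{Z|X})\,\min_{y}P_Y(y)}{\sigma_{\max}(P_{Y|X})\sqrt{\max_{z}P_Z(z)}}$, playing the same role here that $\epsilon<\dfrac{|\sigma_{\min}(P_{X|Y})|\min_y P_Y(y)}{\sqrt{\max_x P_X(x)}}$ played before.

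There is no real obstacle: the argument is a routine transcription of Propositions~2 and~3. The only point demanding a moment's care is the annihilation of the first-order term, i.e., verifying $\bm 1^T P_{Y|X}P_{Z|X}^{-1}=\bm 1^T$ --- this is what forces the leading behaviour of $I(U;Y)$ to be quadratic in $\epsilon$, and it rests on both conditional-probability matrices being column-stochastic under the convention fixed in Section~\ref{sec:system}. Everything else (the Bachmann--Landau bookkeeping and the explicit $\epsilon$-bound) is mechanical.
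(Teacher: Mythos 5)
Your proposal is correct and follows essentially the same route as the paper's own proof: substitute the perturbation $P_{Y|U=u}=P_Y+\epsilon\,P_{Y|X}P_{Z|X}^{-1}J_u$ into the KL-divergence decomposition of $I(U;Y)$, expand $\log(1+t)$ to second order, kill the linear term via $\bm 1^T P_{Y|X}P_{Z|X}^{-1}J_u=\bm 1^T J_u=0$, and rewrite the quadratic term with $J_u=[\sqrt{P_Z}]L_u$. You are in fact slightly more explicit than the paper (which leaves the vanishing of the first-order term implicit and only says ``small enough $\epsilon$''), and your sufficient threshold $\epsilon<\sigma_{\min}(P_{Z|X})\min_y P_Y(y)/\bigl(\sigma_{\max}(P_{Y|X})\sqrt{\max_z P_Z(z)}\bigr)$ is a valid concretization of that condition.
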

	\begin{proof}
		By using the local approximation of the KL-divergence we have
		\begin{align*}
		I(Y;U)&=\sum_u P_U(u)D(P_{Y|U=u}||P_Y)\\&=\sum_u P_U(u)\sum_y P_{Y|U=u}(y)\log\left(\frac{P_{Y|U=u}(y)}{P_Y(y)}\right)\\&=\sum_u \!P_U\!\sum_y\! P_{Y|U=u}(y)\log\!\left(\!\!1\!+\!\epsilon\frac{P_{Y|X}P_{Z|X}^{-1}J_u(y)}{P_Y(y)}\!\right)\\&=\frac{1}{2}\epsilon^2\sum_u P_U\sum_y
		\frac{(P_{Y|X}P_{Z|X}^{-1}J_u)^2}{P_Y}+o(\epsilon^2)\\
		&=\frac{1}{2}\epsilon^2\sum_u P_U\|[\sqrt{P_Y}^{-1}]P_{Y|X}P_{Z|X}^{-1}J_u\|^2+o(\epsilon^2)
		\\&\cong\frac{1}{2}\epsilon^2\sum_u P_U\|[\sqrt{P_Y}^{-1}]P_{Y|X}P_{Z|X}^{-1}[\sqrt{P_Z}]L_u\|^2.
		\end{align*}
		\end{proof}
	By substituting $L_u$ in \eqref{prop13}, \eqref{prop23} and \eqref{prop33}, and using the local approximation in \eqref{approx33} we obtain the following result.  
	\begin{corollary}
		For a small enough $\epsilon$, the privacy mechanism design problem in \eqref{privacy13} can be approximately solved by the following linear problem
		\begin{align}
		\max_{\{L_u,P_U\}} \ &\sum_u P_U(u)\|W_1W_2L_u\|^2,\label{newprob23}\\
		\text{subject to:}\ &\|L_u\|^2\leq 1,\ \forall u\in\mathcal{U},\\
		&\sqrt{P_Z}\perp L_u,\ \forall u,\label{orth3}\\
		&\sum_u P_U(u)L_u=\bm{0},\label{orth23}
		\end{align}
		where $W_1=[\sqrt{P_Y}^{-1}]P_{Y|X}[\sqrt{P_X}]$ and $W_2=[\sqrt{P_X}^{-1}]P_{Z|X}^{-1}[\sqrt{P_Z}]$.
	\end{corollary}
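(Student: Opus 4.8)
The plan is to reproduce, line for line, the argument that reduced \eqref{privacy} to Corollary~\ref{corr1}, now carried out around the private variable $Z$ and with the composed operator $P_{Y|X}P_{Z|X}^{-1}$ in place of $P_{X|Y}^{-1}$. First I would invoke the $\chi^2$ criterion \eqref{local13} to write each conditional law as a perturbation of its marginal, $P_{Z|U=u}=P_Z+\epsilon\,J_u$, where $J_u\in\mathbb{R}^{\mathcal K}$ satisfies the three properties \eqref{prop13}, \eqref{prop23}, \eqref{prop33}: the first two guarantee that $P_{Z|U=u}$ is a bona fide conditional pmf (rows summing to one and averaging back to $P_Z$), and the third is precisely the privacy bound. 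Then I would substitute the local approximation \eqref{approx33} of $I(U;Y)$ proved in the preceding proposition, so that the objective of \eqref{privacy13} becomes $\frac{1}{2}\epsilon^2\sum_u P_U(u)\,\big\|[\sqrt{P_Y}^{-1}]P_{Y|X}P_{Z|X}^{-1}J_u\big\|^2+o(\epsilon^2)$, where the constant prefactor $\frac{1}{2}\epsilon^2$ is irrelevant for the maximizer and is dropped.

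Next I would apply the change of variables $L_u=[\sqrt{P_Z}^{-1}]J_u$, equivalently $J_u=[\sqrt{P_Z}]L_u$, and check that the three constraints become exactly those of \eqref{newprob23}. Property \eqref{prop33} reads $\|L_u\|^2\le 1$. Property \eqref{prop13}, i.e. $\bm{1}^TJ_u=0$, becomes $\sum_z\sqrt{P_Z(z)}\,L_u(z)=0$, that is the orthogonality condition \eqref{orth3}, $L_u\perp\sqrt{P_Z}$. Property \eqref{prop23}, $\sum_u P_U(u)J_u=\bm{0}$, becomes $[\sqrt{P_Z}]\sum_u P_U(u)L_u=\bm{0}$, which, because every entry of $P_Z$ is strictly positive so $[\sqrt{P_Z}]$ is invertible, is equivalent to $\sum_u P_U(u)L_u=\bm{0}$, i.e. \eqref{orth23}. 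For the objective I would insert the identity $[\sqrt{P_X}][\sqrt{P_X}^{-1}]=I$ to factor $[\sqrt{P_Y}^{-1}]P_{Y|X}P_{Z|X}^{-1}[\sqrt{P_Z}]=\big([\sqrt{P_Y}^{-1}]P_{Y|X}[\sqrt{P_X}]\big)\big([\sqrt{P_X}^{-1}]P_{Z|X}^{-1}[\sqrt{P_Z}]\big)=W_1W_2$; then $\big\|[\sqrt{P_Y}^{-1}]P_{Y|X}P_{Z|X}^{-1}J_u\big\|^2=\|W_1W_2L_u\|^2$, and the objective is $\sum_u P_U(u)\|W_1W_2L_u\|^2$, which is \eqref{newprob23}.

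The only delicate step — and the one I would write out carefully — is that the $o(\epsilon^2)$ remainder can be discarded \emph{uniformly} over all feasible mechanisms, so that the optimal values of \eqref{privacy13} and \eqref{newprob23} coincide up to $o(\epsilon^2)$. I would argue exactly as in the proofs of Proposition~2 and Proposition~3: the feasible perturbations lie in the compact set $\{\,L_u:\|L_u\|\le 1,\ L_u\perp\sqrt{P_Z}\,\}$, so $\|P_{Z|X}^{-1}J_u\|$, and hence $\epsilon\,\big|P_{Y|X}P_{Z|X}^{-1}J_u(y)/P_Y(y)\big|$, is bounded by a constant times $\epsilon$; for $\epsilon$ below the corresponding explicit threshold (of the same $|\sigma_{\min}|\cdot\min P/\sqrt{\max P}$ type as before) the argument of $\log(1+\cdot)$ stays strictly inside $(-1,1)$ and the second-order Taylor remainder is $o(\epsilon^2)$ with a constant independent of the mechanism. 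Together with the earlier remark that the cardinality bound $|\mathcal U|\le|\mathcal X|+1$ makes the supremum attained, this licenses replacing $\sup$ by $\max$ and \eqref{privacy13} by the linear program \eqref{newprob23}. As a byproduct one could then, as in Proposition~4 and Theorem~\ref{th1}, further reduce \eqref{newprob23} to a principal right-singular-vector computation for $W_1W_2$, but that is not needed for the statement.
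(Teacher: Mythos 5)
Your proposal is correct and follows essentially the same route as the paper: the paper obtains this corollary precisely by substituting $L_u=[\sqrt{P_Z}^{-1}]J_u$ into the perturbation properties \eqref{prop13}--\eqref{prop33} and plugging in the local approximation \eqref{approx33}, with the factorization $[\sqrt{P_Y}^{-1}]P_{Y|X}P_{Z|X}^{-1}[\sqrt{P_Z}]=W_1W_2$ exactly as you describe. Your additional care about the uniformity of the $o(\epsilon^2)$ remainder and the attainment of the maximum only spells out details the paper delegates to its earlier propositions and remark, so there is no substantive difference.
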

Similar to the Proposition~3, without loss of optimality we can choose $U$ as a uniform binary RV. Thus, \eqref{newprob23} reduces to the following problem
	\begin{align}
	\max_{L:L\perp \sqrt{P_Z},\ \|L\|^2\leq 1} \|W_1W_2\cdot L\|^2.\label{max33}
	\end{align}
	Let $L^*$ maximizes \eqref{max33}, thus, the conditional distributions $P_{Y|U=u}$ which maximizes \eqref{newprob23} are given by  
	\begin{align}
	P_{Y|U=0}=P_Y+\epsilon P_{Y|X}P_{Z|X}^{-1}[\sqrt{P_Z}]L^*,\label{condis31}\\
	P_{Y|U=1}=P_Y-\epsilon P_{Y|X}P_{Z|X}^{-1}[\sqrt{P_Z}]L^*.\label{condis32}
	\end{align}
	In the next theorem, the solution of \eqref{max33} is derived.
	\begin{theorem}
		Let $\sigma_{\max}$ be the largest singular value of $W_1W_2$ corresponding to the singular vector $\psi$. Furthermore, let $\phi$ be the singular vector of $W_1W_2$ corresponding the second largest singular value. If $\sigma_{\max}>1$, $\psi$ maximizes \eqref{max33}, and if $\sigma_{\max}=1$, $\phi$ is the maxmizer of \eqref{max33}.
	\end{theorem}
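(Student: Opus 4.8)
The plan is to convert the constrained quadratic maximization \eqref{max33} into an unconstrained eigenvalue problem for $M^{\!T}M$ with $M:=W_1W_2$, exactly in the spirit of the proof of Theorem~\ref{th1} in Appendix~C; the new ingredient is that the single constraint direction $\sqrt{P_Z}$ turns out to be a distinguished singular vector of $M$.

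First I would record the structural identities $M\sqrt{P_Z}=\sqrt{P_Y}$ and $M^{\!T}\sqrt{P_Y}=\sqrt{P_Z}$. Both follow from the fact that $P_{Y|X}$ and $P_{Z|X}$ have all columns summing to one, i.e., $\bm{1}^TP_{Y|X}=\bm{1}^TP_{Z|X}=\bm{1}^T$ (hence $\bm{1}^TP_{Z|X}^{-1}=\bm{1}^T$, $(P_{Y|X})^T\bm{1}=\bm{1}$, $(P_{Z|X}^{-1})^T\bm{1}=\bm{1}$), together with the marginalization identities $P_{Z|X}^{-1}P_Z=P_X$ and $P_{Y|X}P_X=P_Y$. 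Plugging in $W_1=[\sqrt{P_Y}^{-1}]P_{Y|X}[\sqrt{P_X}]$, $W_2=[\sqrt{P_X}^{-1}]P_{Z|X}^{-1}[\sqrt{P_Z}]$, using $[\sqrt{P_X}][\sqrt{P_X}^{-1}]=I$ and $[\sqrt{P_Z}]\sqrt{P_Z}=P_Z$, these are one-line computations. Consequently $M^{\!T}M\sqrt{P_Z}=M^{\!T}\sqrt{P_Y}=\sqrt{P_Z}$, so $\sqrt{P_Z}$ is a unit eigenvector of the symmetric matrix $M^{\!T}M$ with eigenvalue $1$, equivalently a right singular vector of $M$ associated with the singular value $1$.

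Since $M^{\!T}M$ is symmetric and $\sqrt{P_Z}$ is one of its eigenvectors, the subspace $\{\sqrt{P_Z}\}^\perp$ is $M^{\!T}M$-invariant and contains the whole feasible set of \eqref{max33}, so the problem becomes maximizing the Rayleigh quotient $L^{\!T}M^{\!T}M\,L$ over the unit sphere of $\{\sqrt{P_Z}\}^\perp$. Extending $\sqrt{P_Z}$ to an orthonormal eigenbasis of $M^{\!T}M$, the eigenvalues available inside $\{\sqrt{P_Z}\}^\perp$ are exactly those of $M^{\!T}M$ with one copy of the eigenvalue $1$ deleted, hence the optimal value is $\sigma_{(2)}^2$, the square of the second largest singular value, attained at a corresponding right singular vector lying in $\{\sqrt{P_Z}\}^\perp$. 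Now split into cases. If $\sigma_{\max}>1$, then $\sigma_{\max}^2\neq1$, so the eigenspace of $M^{\!T}M$ for $\sigma_{\max}^2$ is orthogonal to $\sqrt{P_Z}$; thus $\psi$ is feasible and $\|M\psi\|^2=\sigma_{\max}^2\geq\sigma_{\max}^2\|L\|^2\geq\|ML\|^2$ for every admissible $L$, so $\psi$ is the maximizer. If $\sigma_{\max}=1$, all singular values are $\leq1$, the deleted eigenvalue above is the top one, and the remaining maximum $\sigma_{(2)}^2$ is attained at $\phi$, the right singular vector of $M$ for the second largest singular value (chosen, in the degenerate subcase where $1$ has multiplicity larger than one, inside the $1$-eigenspace and orthogonal to $\sqrt{P_Z}$), which is therefore feasible and optimal.

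I expect the main obstacle to be bookkeeping rather than conceptual: getting the multiplicity accounting right when the singular value $1$ is not simple, and being precise about what ``the singular vector corresponding to the second largest singular value'' should mean in that case. The matrix identities of the first step and the Rayleigh-quotient reduction are routine and mirror Appendix~C.
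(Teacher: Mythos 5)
Your proposal is correct and takes essentially the same route as the paper's proof: the paper likewise verifies that $\sqrt{P_Z}$ is a right singular vector of $W_1W_2$ with singular value $1$ (via the direct computation $W_2^TW_1^TW_1W_2\sqrt{P_Z}=\sqrt{P_Z}$, using the same column-sum and marginalization identities) and then splits into the cases $\sigma_{\max}>1$ and $\sigma_{\max}=1$ using orthogonality of the singular vectors. Your Rayleigh-quotient phrasing and the explicit handling of the case where the singular value $1$ is not simple are only a more careful writing-out of the same argument.
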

	\begin{proof}
		The largest singular value of $W_1$ is $1$ corresponding to singular vector $\sqrt{P_Z}$ and the smallest singular value of $W_2$ is $1$ corresponding to singular vector $\sqrt{P_Z}$. Furthermore, we show that $1$ is one of the singular values of $W_1W_2$ corresponding to singular vector $\sqrt{P_Z}$. We have
		\begin{align*}
		&W_2^TW_1^TW_1W_2\sqrt{P_Z}=\\
		&[\sqrt{P_Z}]^T\left(P_{Z|X}^{-1}\right)^T[\sqrt{P_X}^{-1}]^T[\sqrt{P_X}]^TP_{Y|X}^T[\sqrt{P_Y}^{-1}]^T\times\\
		&[\sqrt{P_Y}^{-1}]P_{Y|X}[\sqrt{P_X}][\sqrt{P_X}^{-1}]P_{Z|X}^{-1}[\sqrt{P_Z}]\sqrt{P_Z}=\\
		&[\sqrt{P_Z}]^T\left(P_{Z|X}^{-1}\right)^T[\sqrt{P_X}^{-1}]^T[\sqrt{P_X}]^TP_{Y|X}^T\bm{1}=\\
		&[\sqrt{P_Z}]^T\bm{1}=\sqrt{P_Z}.
		\end{align*} 
		Thus, we have two cases as $\sigma_{\max}>1$ and $\sigma_{\max}=1$. In first case, $\psi$ is orthogonal to $\sqrt{P_Z}$ and so maximizes \eqref{max33}. In second case, $\psi=\sqrt{P_Z}$ and $\phi$ is orthogonal to $\sqrt{P_Z}$. Thus, $\phi$ maximizes $\eqref{max33}$. There are no other cases since $1$ is one of the singular values.
	\end{proof}
	\begin{corollary}
		Let $\sigma_{\max}$ and $\sigma_2$ be the first and second largest singular values of $W_1W_2$. If $\sigma_{\max}>1$, the maximum value in \eqref{privacy13} can be approximated by $\frac{1}{2}\epsilon^2\sigma_{\max}^2$ and can be achieved by a privacy mechanism characterized by conditional distributions found in \eqref{condis31} and \eqref{condis32} where $L^*=\psi$. Otherwise, the maximum value can be approximated by $\frac{1}{2}\epsilon^2\sigma_{2}^2$ and can be achieved by \eqref{condis31} and \eqref{condis32} where $L^*=\phi$. 
	\end{corollary}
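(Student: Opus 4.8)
The plan is to recognize \eqref{max33} as a Rayleigh-quotient maximization of the symmetric positive-definite matrix $M:=(W_1W_2)^{T}(W_1W_2)$ over the hyperplane $\{L:L\perp\sqrt{P_Z}\}$ intersected with the unit ball, and to read the solution off the eigendecomposition of $M$. The crucial observation — already isolated in the computation displayed in the statement — is that $\sqrt{P_Z}$ is itself a right singular vector of $W_1W_2$ associated with singular value $1$. I would verify this by chaining four elementary identities: $W_2\sqrt{P_Z}=\sqrt{P_X}$ (from $P_{Z|X}P_X=P_Z$), $W_1\sqrt{P_X}=\sqrt{P_Y}$ (from $P_{Y|X}P_X=P_Y$), $W_1^{T}\sqrt{P_Y}=\sqrt{P_X}$ (from $\bm{1}^{T}P_{Y|X}=\bm{1}^{T}$), and $W_2^{T}\sqrt{P_X}=\sqrt{P_Z}$ (from $\bm{1}^{T}P_{Z|X}=\bm{1}^{T}$, hence $(P_{Z|X}^{T})^{-1}\bm{1}=\bm{1}$); composing them gives $M\sqrt{P_Z}=\sqrt{P_Z}$. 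Conceptually this is just the statement that $W_1$ is a divergence-transition matrix with spectral norm $1$ and top singular pair $(\sqrt{P_Y},\sqrt{P_X})$, while $W_2^{-1}$ is another such matrix, so $W_2$ has smallest singular value $1$ attained at $\sqrt{P_Z}$. In particular $1$ is an eigenvalue of $M$ with unit eigenvector $\sqrt{P_Z}$, so $\sigma_{\max}\ge 1$ and the two cases in the theorem are exhaustive.

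Next I would fix an orthonormal eigenbasis of $M$ that contains $\sqrt{P_Z}$ — possible precisely because $\sqrt{P_Z}$ is an eigenvector — with eigenvalues $\sigma_{\max}^{2}\ge\sigma_{2}^{2}\ge\cdots>0$, and note that the subspace $V:=\{\sqrt{P_Z}\}^{\perp}$ is $M$-invariant; the spectrum of $M|_{V}$ is that of $M$ with one copy of the eigenvalue $1$ (the $\sqrt{P_Z}$ one) deleted. Since $\|W_1W_2L\|^{2}=L^{T}ML$ and the feasible set of \eqref{max33} is exactly $\{L\in V:\|L\|\le 1\}$, the maximum of \eqref{max33} equals the largest eigenvalue of $M|_{V}$ and is attained at a corresponding unit eigenvector.

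The case split is then immediate. If $\sigma_{\max}>1$, the $\sigma_{\max}^{2}$-eigenspace of $M$ is orthogonal to the eigenvector $\sqrt{P_Z}$ (distinct eigenvalues), so the principal right singular vector $\psi$ may be taken in $V$; it is feasible and attains the unconstrained Rayleigh-quotient maximum $\sigma_{\max}^{2}$, which dominates the constrained one, so $\psi$ solves \eqref{max33}. If $\sigma_{\max}=1$, then $\sqrt{P_Z}$ is a legitimate choice of principal singular vector, and deleting its direction leaves largest eigenvalue $\sigma_{2}^{2}$ on $V$, attained by the second singular vector $\phi$ — which is automatically orthogonal to $\sqrt{P_Z}$ when $\sigma_{2}\neq 1$, and can be chosen so inside the eigenspace when $\sigma_{2}=1$ — hence $\phi$ solves \eqref{max33}.

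The matrix identities in the first step are routine; the point that needs care is the multiplicity bookkeeping — ensuring $\sqrt{P_Z}$ is one of the orthonormal singular vectors, and selecting $\psi$ (resp. $\phi$) within the relevant eigenspace so that it is orthogonal to $\sqrt{P_Z}$ — after which the Rayleigh-quotient characterization on the invariant subspace $V$ closes the argument.
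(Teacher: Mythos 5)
Your proposal is correct and follows essentially the same route as the paper: the paper's Theorem~2 proof likewise establishes $W_2^TW_1^TW_1W_2\sqrt{P_Z}=\sqrt{P_Z}$ (your four-identity chain is just that product computed stepwise) and then splits into the cases $\sigma_{\max}>1$ and $\sigma_{\max}=1$, with the corollary following by combining this with the earlier reduction to \eqref{max33} and the perturbed distributions \eqref{condis31}--\eqref{condis32}. Your explicit Rayleigh-quotient/invariant-subspace phrasing and the multiplicity bookkeeping for the eigenvalue $1$ are a slightly more careful rendering of the same argument, not a different one.
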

	\begin{remark}
		\normalfont
		One simple example for the second case where $\sigma_{\max}=1$ is letting $P_{Z|X}=P_{Y|X}$. In this case, $W_1=W_2^{-1}$ and so all singular values of $W_1W_2$ are equal to one. The maximum value in \eqref{max33} is $1$ and can be achieved by any vector orthogonal to $\sqrt{P_Z}$.
	\end{remark}
    \begin{remark}
    	\normalfont
    	One sufficient condition for the first case where $\sigma_{\max}>1$, is to have $\sigma_{\max}(W_2)=\frac{1}{\sigma_{\min}(W_1)}$ and not all singular values are equal to $1$. Since in this case we have
    	\begin{align*}
    	||| W_1W_2|||&\geq \frac{|||W_2|||}{|||W_1^{-1}|||}=\frac{\sigma_{\max}(W_2)}{\sigma_{\max}(W_1^{-1})}\\
    	&=\sigma_{\max}(W_2)\sigma_{\min}(W_1)=1,
    	\end{align*}
    	where we used the spectral norm. 
    \end{remark}

\section{conclusion}\label{concul}
We have shown that Euclidean information theory can be used to linearize an information-theoretic disclosure control problem. When a small $\epsilon$ privacy leakage is allowed, a simple approximate solution is derived. A geometrical interpretation of the privacy mechanism design is provided. Four linear spaces are introduced to further interpret the structure of the optimization problem. In particular, we look for a vector satisfying the constraint of having the largest Euclidean norm in other space, leading to finding the largest principle singular value of a matrix. The proposed approach establishes a useful and general design framework, which has been demonstrated in two problem extensions that included an adversary and privacy design with utility provider. 

	\section*{Appendix A}
	As shown in \eqref{pyu}, $P_{Y|U=u}$ must belong to $\Psi$ for every $u\in\mathcal{U}$ which is defined as follows
	\begin{align*}
	\Psi=\{ y\in\mathbb{R}^{\mathcal{K}}| y=P_Y+\epsilon P_{X|Y}^{-1}J,\ \left\lVert J\right\rVert^2_{P_X}\leq 1,\ \bm{1}^T\cdot J=0\},
	\end{align*}
	where $\left\lVert J\right\rVert^2_{P_X}=\sum_x \frac{J(x)^2}{P_X(x)}$ is the weighted Euclidean norm.
	For all $\epsilon\!<\!\frac{|\sigma_{\text{min}}(P_{X|Y})|\min_{y\in\mathcal{Y}}P_Y(y)}{\sqrt{\max_{x\in{\mathcal{X}}}P_X(x)}}$, any point in $\Psi$ is a probability distribution and hence $\Psi$ is a subset of the standard $\mathcal{K}-1$ dimension simplex. Thus, $\Psi$ is bounded. Let $\mathcal{J}_1=\{J\in\mathbb{R}^\mathcal{K}|\bm{1}^T\cdot J=0\}$ and $\mathcal{J}_2=\{J\in\mathbb{R}^\mathcal{K}|\left\lVert J\right\rVert^2_{P_X}\leq 1\}$. $\mathcal{J}_1$ and $\mathcal{J}_2$ correspond to a hyperplane and an elipsoide, respectively. The set $\mathcal{J}=\mathcal{J}_1\cap\mathcal{J}_2$ is closed since each $\mathcal{J}_1$ and $\mathcal{J}_2$ is closed. Considering the sequence $\{y_0,y_1,..\}$ where each $y_i$ is inside the set $\Psi$, we have
	\begin{align*}
	\lim_{i\rightarrow \infty} y_i=\lim_{i\rightarrow \infty} P_Y+\epsilon P_{X|Y}^{-1}J_{i}=P_Y+\epsilon P_{X|Y}^{-1}\lim_{i\rightarrow \infty} J_{i}.
	\end{align*}
	Since $\mathcal{J}$ is a closed set $\lim_{i\rightarrow \infty} J_{i}\in\mathcal{J}$ and hence $\lim_{i\rightarrow \infty} y_i\in \Psi$. Thus, $\Psi$ is a compact set. We define a vector mapping $\theta:\Psi\rightarrow \mathbb{R}^{\cal K}$ as follows
		\begin{align*}
	\theta_i(p_{Y|U(\cdot|U)})&=p_{Y|U}(y_i|u),\ i\in [1:\mathcal{K}-1],\\
	\theta_{\mathcal{K}}&=H(Y|U=u).
	\end{align*}
	Since the mapping $\theta$ is continuous and the set $\Psi$ is compact, by using Fenchel-Eggleston-Carath\'{e}odory's Theorem \cite{el2011network} for every $U$ with p.m.f $F(u)$ there exists a random variable $U'$ with p.m.f $ F(u')$ such that $|\cal U'|\leq \cal K$ and collection of conditional p.m.fs $P_{Y|U'}(\cdot|u')\in \Psi$ where
	\begin{align*}
	\int_u \theta_i(p(y|u))dF(u)=\sum_{u'\in\cal U'}\theta_i(p(y|u'))p(u').
	\end{align*}
	It ensures that by replacing $U$ by $U'$, $I(U;Y)$ and the distribution $P_Y$ are preserved. Furthermore, the condition $\sum_{u'}P_{U'}(u')J_{u'}=\bm{0}$ is satisfied since we have
	\begin{align*}
	P_Y=\sum_{u'} P_{U'}P_{Y|U'=u'}\rightarrow P_X=\sum_{u'} P_{U'}P_{X|U'=u'}\\
	\sum_{u'} P_{U'}(P_{X|U'=u'}-P_X)=\bm{0}\rightarrow \sum_{u'}P_{U'}(u')J_{u'}=\bm{0}.
	\end{align*}
	Note that any point in $\Psi$ satisfies the strong privacy criterion, i.e., the equivalent $U'$ satisfies the per-letter privacy criterion as well. Thus, without loss of optimality we can assume $|\mathcal{U}|\leq \mathcal{K}$.
	
	Let $\mathcal{A}=\{P_{U|Y}(\cdot|\cdot)|U\in\mathcal{U}, Y\in\mathcal{Y},||\cal U|\leq \cal K \}$ and $\mathcal{ A}_y=\{P_{U|Y}(\cdot|y)|U\in\mathcal{U},|\cal U|\leq \cal K \}$, $\forall y\in \cal Y$. $\mathcal{A}_y$ is a standard $|\mathcal {U}|-1$ simplex and since $|\cal U|\leq |\cal Y|<\infty$ it is compact. Thus $\mathcal{A}=\cup_{y\in\mathcal{Y}}\mathcal{A}_y$ is compact. And the set $\mathcal{A}'=\{P_{U|Y}(\cdot|\cdot)\in\mathcal{A}|X-Y-U, \left\lVert[P_X]^{-1}(P_{X|U=u}-P_X)\right\rVert^2\leq \epsilon^2,\ \forall u\}$ is a closed subset of $\mathcal{A}$ since $\chi^2$ information is closed of the interval $[0,\epsilon^2]$. Therefore, $\mathcal {A}'$ is compact. Since $I(U;Y)$ is a continuous mapping over $\mathcal{A}'$, the supremum is achieved. Thus, we use maximum instead of supremum.
	
	\section*{Appendix B}\label{appa}
	The KL divergence is denoted by $D(\cdot||\cdot)$.
	\begin{align*}
		I(X;U)&=\sum_{u\in \mathcal{U}}P_U(u)D(P_{X|U=u}||P_X)\\&=\sum_u P_U(u)\sum_x P_{X|U=u}\log(\frac{P_{X|U=u}}{P_X})\\&\stackrel{(a)}{=}\sum_u P_U(u)\sum_x (P_X+\epsilon\cdot J_u)\log(1+\epsilon\frac{J_u}{P_X})\\
		&=\sum_u P_U(u)[\sum_x (\epsilon J_u+\frac{1}{2}\epsilon^2 \frac{J_u^2}{P_X})]+o(\epsilon^2)\\
		&=\frac{1}{2}\epsilon^2\sum_{u\in \mathcal{U}}P_U(u)\|[\sqrt{P_X}^{-1}]J_u\|^2+o(\epsilon^2),\\
		&\stackrel{(b)}{\leq}\frac{1}{2}\epsilon^2+o(\epsilon^2),
	\end{align*}
	where (a) follows from $P_{X|U=u}=P_X+\epsilon\cdot J_u$ and (b) follows from the third property of $J_u$ stated in \eqref{prop3}. Furthermore, for approximating $I(U;X)$ we should have $|\epsilon\frac{J_u(x)}{P_X(x)}|<1$ for all $x$ and $u$. One sufficient condition is to have $\epsilon<\frac{\min_{x\in\mathcal{X}}P_X(x)}{\sqrt{\max_{x\in\mathcal{X}}P_X(x)}}$.
	Thus the privacy criterion implies a bounded mutual information leakage.
	\section*{Appendix C}\label{appb}
	We first show that the smallest singular value of $W$ is $1$ with $\sqrt{P_X}$ as corresponding right singular vector.
	We have
	\begin{align*}
	&W^TW\sqrt{P_X}\\&=[\sqrt{P_X}](P_{X|Y}^T)^{-1}[\sqrt{P_Y}^{-1}][\sqrt{P_Y}^{-1}]P_{X|Y}^{-1}[\sqrt{P_X}]\sqrt{P_X}\\
	&=[\sqrt{P_X}](P_{X|Y}^T)^{-1}[\sqrt{P_Y}^{-1}][\sqrt{P_Y}^{-1}]P_Y\\
	&=[\sqrt{P_X}](P_{X|Y}^T)^{-1}\bm{1}=[\sqrt{P_X}]\bm{1}=\sqrt{P_X}.
	\end{align*}
	Now we show that all other singular values are greater than or equal to 1. Equivalently, we show that all singular values of $W^{-1}=[\sqrt{P_X}^{-1}]P_{X|Y}[\sqrt{P_Y}]$ are smaller than or equal to 1, i.e., we need to prove that for any vector $\alpha\in\mathbb{R}^{\mathcal{K}}$ we have 
	\begin{align}
	||W^{-1}\alpha||^2\leq||\alpha||^2.
	\end{align}
	In the following, we use $P_{Y_j}=P_Y(y_j)$, $P_{X_i}=P_X(x_i)$ and $P_{X_i|Y_j}=P_{X|Y}(x_i|y_j)$ for simplicity. 
	More explicitly we claim to have
	\begin{align*}
	&\alpha^T(W^{-1})^TW^{-1}\alpha=\sum_{j=1}^{\mathcal{K}}\alpha_j^2\sum_{i=1}^{\mathcal{K}}\frac{P_{X_i|Y_j}^2P_{Y_j}}{P_{X_i}}+\\&\sum_{\begin{array}{c} \substack{m,n=1\\ m\neq n} \end{array}}^{\mathcal{K}}\!\!\!\!\alpha_m\alpha_n\sum_{i=1}^{\mathcal{K}}\frac{P_{X_i|Y_m}P_{X_i|Y_n}\sqrt{P_{Y_n}P_{Y_m}}}{P_{X_i}}\leq\sum_{i=1}^{\mathcal{K}}\alpha_i^2.
	\end{align*}
	By using $\frac{P_{X_i|Y_j}^2P_{Y_j}}{P_{X_i}}=P_{X_i|Y_j}P_{Y_j|X_i}$, we can rewrite the last inequality as follows
	\begin{align*}
	&\sum_{j=1}^{\mathcal{K}}\alpha_j^2\sum_{i=1}^{\mathcal{K}}P_{X_i|Y_j}P_{Y_j|X_i}+\\&\sum_{\begin{array}{c}   \substack{m,n=1\\ m\neq n} \end{array}}^{\mathcal{K}}\!\!\!\!\alpha_m\alpha_n\sum_{i=1}^{\mathcal{K}}\frac{P_{X_i|Y_m}P_{X_i|Y_n}\sqrt{P_{Y_n}P_{Y_m}}}{P_{X_i}}\leq\sum_{i=1}^{\mathcal{K}}\alpha_i^2,
	\end{align*}
	Equivalently, by using $\sum_{i=1}^{\mathcal{K}}\sum_{m=1}^{\mathcal{K}}P_{X_i|Y_j}P_{Y_m|X_i}=1$, we claim to have 
	\begin{align*}
	&\sum_{\begin{array}{c}   \substack{m,n=1\\ m\neq n} \end{array}}^{\mathcal{K}}\!\!\!\!\alpha_m\alpha_n\sum_{i}\frac{P_{X_i|Y_m}P_{X_i|Y_n}\sqrt{P_{Y_n}P_{Y_m}}}{P_{X_i}}\leq\\ &\sum_j \alpha_j^2[\sum_i\sum_{m\neq j}P_{X_i|Y_j}P_{Y_m|X_i}].
	\end{align*}
	Finally, we can see that the last inequality holds, since for any $i$ by using the inequality of arithmetic and geometric means and  $P_{X_i|Y_m}P_{Y_n|X_i}P_{X_i|Y_n}P_{Y_m|X_i}=\frac{P_{X_i|Y_m}P_{X_i|Y_n}P_{X_i,Y_n}P_{X_i,Y_m}}{P_{X_i}^2}=\!\left(\!\frac{P_{X_i|Y_m}P_{X_i|Y_n}\sqrt{P_{Y_n}P_{Y_m}}}{P_{X_i}}\!\right)^2\!\!\!$, we have
	$
	2\alpha_m\alpha_n\frac{P_{X_i|Y_m}P_{X_i|Y_n}\sqrt{P_{Y_n}P_{Y_m}}}{P_{X_i}}\leq \alpha_m^2P_{X_i|Y_m}P_{Y_n|X_i}+\alpha_n^2P_{X_i|Y_n}P_{Y_m|X_i},
	$
	where we use 
	\begin{align*}
	P_{X_i|Y_m}P_{Y_n|X_i}P_{X_i|Y_n}P_{Y_m|X_i}&=\frac{P_{X_i|Y_m}P_{X_i|Y_n}P_{X_i,Y_n}P_{X_i,Y_m}}{P_{X_i}^2}\\&=\!\left(\!\frac{P_{X_i|Y_m}P_{X_i|Y_n}\sqrt{P_{Y_n}P_{Y_m}}}{P_{X_i}}\!\right)^2\!\!\!.
	\end{align*}
	
	Therefore, one is the smallest singular value of $W$ with $\sqrt{P_X}$ as corresponding right singular vector. Furthermore, we have that the right singular vector of the largest singular value is orthogonal to $\sqrt{P_X}$. Thus, the principal right-singular vector is the solution of \eqref{max2}.
	\section*{Appendix D}\label{appD}
	First, assume that the maximum occurs in non-zero $P_{u'_0}$ and $P_{u'_1}$. For simplicity we show $P_{u_0}$ and $P_{u_1}$ by $P_0$ and $P_1$, also we show $P_{u'_0}$ and $P_{u'_1}$ by $P'_0$ and $P'_1$.
	By using \eqref{orth222}, we have
	\begin{align*}
	L_{u'_0}=-\frac{P'_1}{P'_0}L_{u'_1}=-\frac{bP_0+dP_1}{aP_0+cP_1}L_{u'_1}.
	\end{align*}
	Since $\left\lVert L_{u'_0} \right\rVert^2\leq 1$, thus, $\left\lVert\frac{P'_1}{P'_0}L_{u'_1}\right\rVert^2\leq 1$, which results in $\left\lVert L_{u'_1}\right\rVert\leq \min \{1,(\frac{P'_0}{P'_1})^2\} \leq 1$. With the same argument $\left\lVert L_{u'_0}\right\rVert\leq \min \{1,(\frac{P'_1}{P'_0})^2\} \leq 1$. Now we consider two cases: \\
	1. Case 1: $|P'_0|\geq|P'_1|$, 2. Case 2: $|P'_1|\geq|P'_0|$.\\
	\textbf{Case 1} : 
	In this case we have $|aP_0+cP_1|\geq|bP_0+dP_1|$, which results in
	\begin{align}
	(a-c)P_0+c\geq\frac{1}{2},\label{jj}
	\end{align}
	since $b=1-a$ and $d=1-c$.
	Then, we substitute $L_{u'_0}$ by $-\frac{bP_0+dP_1}{aP_0+cP_1}L_{u'_1}$ in the objective function, which results in
	\begin{align*}
	&P_0\left\lVert W(aL_{u'_0}+bL_{u'_1})\right\rVert^2+P_1\left\lVert W(cL_{u'_0}+dL_{u'_1})\right\rVert^2\\
	&=\left\lVert WL_{u'_1} \right\rVert P_0\left(b-\frac{abP_0+adP_1}{aP_0+cP_1}\right)^2\\&+\left\lVert WL_{u'_1} \right\rVert P_1\left(b-\frac{cbP_0+cdP_1}{aP_0+cP_1}\right)^2\\
	&=\left\lVert WL_{u'_1} \right\rVert P_0\frac{(bc-ad)^2P_1^2}{(aP_0+cP_1)^2}
	+\left\lVert WL_{u'_1} \right\rVert P_1\frac{(ad-bc)^2P_0^2}{(aP_0+cP_1)^2}\\
	&=\left\lVert WL_{u'_1}\right\rVert(bc-ad)^2\left(\frac{P_0(1-P_0)}{((a-c)P_0+c)^2}\right)
	\end{align*}
	Now we show that the maximum of $f(P_0)=\frac{P_0(1-P_0)}{((a-c)P_0+c)^2}$ occurs in $P_0^*=\frac{c-\frac{1}{2}}{c-a}$. The derivative of $f$ with respect to $P_0$ is as follows
	\begin{align*}
	\frac{d}{dP_0}f=\frac{c-(a+c)P_0}{((a-c)P_0+c)^2}.
	\end{align*}
	By using Proposition~\ref{tir}, we have two cases for $a$ and $c$, $a\geq1,\ c\leq0$ or $a\leq0,\ c\geq1$. For $a\geq1,\ c\leq0$ we have $a-c\geq0$, which implies $P_0\geq\frac{\frac{1}{2}-c}{a-c}$ by using \eqref{jj}. We show that $f(P_0)$ is a decreasing function in this case. If $a+c\geq0$, then $c-(a+c)P_0\leq0$ and if $a+c\geq0$, then $-(a+c)P_0\leq-(a+c)$ which results in $c-(a+c)P_0\leq-a\leq -1<0$. Thus, for $a\geq1,\ c\leq0$, $f(P_0)$ is decreasing and its maximum happens in $P^*_0=\frac{\frac{1}{2}-c}{a-c}$. Now consider $a\leq0,\ c\geq1$. In this case we have $P_0\leq\frac{\frac{1}{2}-c}{a-c}$. We show that $f(P_0)$ is an increasing function. If $a+c\leq0$, then $P_0(a+c)\leq0$ which results in $c\geq1>0\geq(a+c)P_0$. And if $a+c\geq 0$, then $(a+c)P_0\leq\frac{(c-\frac{1}{2})(a+c)}{c-a}\leq c$, since $2ac\leq 0\leq \frac{a+c}{2}$. Thus, $f(P_0)$ is an increasing function and its maximum occurs in $P^*_0=\frac{\frac{1}{2}-c}{a-c}$. The maximum value of $f(P_0)$ is $\frac{(c-\frac{1}{2})(\frac{1}{2}-a)}{4(c-a)^2}$.\\
	\textbf{Case 2}: 
	In this case we have $|aP_0+cP_1|\leq|bP_0+dP_1|$, which results in
	\begin{align}
	(a-c)P_0+c\geq\frac{1}{2},\label{jjj}
	\end{align}
	We substitute $L_{u'_1}$ by $-\frac{aP_0+cP_1}{bP_0+dP_1}L_{u'_0}$ in the objective function, which results in
	\begin{align*}
	&P_0\left\lVert W(aL_{u'_0}+bL_{u'_1})\right\rVert^2+P_1\left\lVert W(cL_{u'_0}+dL_{u'_1})\right\rVert^2\\
	&=\left\lVert WL_{u'_0}\right\rVert(bc-ad)^2\left(\frac{P_0(1-P_0)}{((b-d)P_0+d)^2}\right)
	\end{align*}
	By the same arguments it can be shown that maximum of $\frac{P_0(1-P_0)}{((b-d)P_0+d)^2}$ occurs in $P^*_0=\frac{d-\frac{1}{2}}{d-b}=\frac{\frac{1}{2}-c}{a-c}$. 
	Thus for both cases we have
	\begin{align*}
	P^*_0=\frac{\frac{1}{2}-c}{a-c},\ p^*_1=\frac{a-\frac{1}{2}}{a-c},\
	P'^*_0=P'^*_1=\frac{1}{2}.
	\end{align*}
	So the maximum of \eqref{newprob22} occurs in $L_{u'_0}=-L_{u'_1}=\psi$, where $\psi$ is the singular vector corresponding to largest singular value of $W$, if both $P'_0$ and $P'_1$ are non-zero, and the maximum value is $4(c-\frac{1}{2})(\frac{1}{2}-a)\sigma^2$.
	
	Now we assume that $P'_0$ or $P'_1$ for instance $P'_0$ is zero, which implies $L_{u'_1}=0$ and $P'_1=1$. Thus, the objective function reduces to
	\begin{align*}
	\left\lVert WL_{u'_0}\right\rVert ^2\left( a^2P_0+c^2P_1\right).
	\end{align*} 
	Since $P'_0=aP_0+cP_1=0$, we have
	\begin{align*}
    P_0=-\frac{c}{a}P_1\rightarrow P_0=\frac{-c}{a-c},\ P_1=\frac{a}{a-c},
	\end{align*}
	Thus, the objective function is
	$\left\lVert WL_{u'_0}\right\rVert ^2(-ac)$,
	where the maximum value is $-ac\sigma^2$. We show that $4(c-\frac{1}{2})(\frac{1}{2}-a)\geq-ac$. This is true since we have $2(a+c)-3ac\geq1$ due to $a\geq1,\ c\leq0$ or $a\leq0,\ c\geq1$. Thus, the maximization of \eqref{newprob22} occurs in $P'_0=P'_1=\frac{1}{2}$. Furthermore, $L_{u'_0}=-L_{u'_1}=\psi$ satisfies the conditions \eqref{orth11} and \eqref{orth222}, since $\psi$ is orthogonal to $\sqrt{P_X}$.
	\section*{Acknowledgment}
	The work was partially supported by the Digital Futures research center and the Strategic Research Agenda Program, Information and Communication Technology - The Next Generation (SRA ICT - TNG), through the Swedish Government.
\bibliographystyle{IEEEtran}
\bibliography{IEEEabrv,IZS}
\begin{IEEEbiography}[{\includegraphics[width=1in,height=1.25in,clip,keepaspectratio]{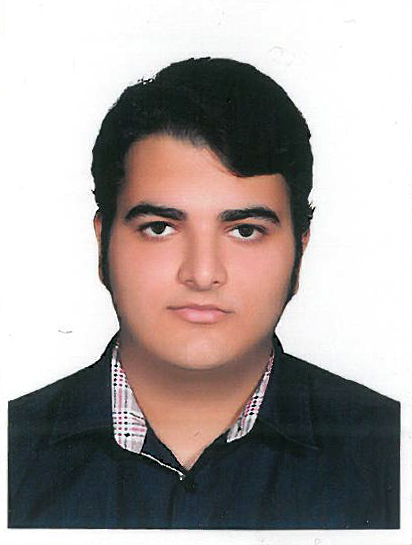}}]{Amirreza Zamani}
	received the B. Tech. degree in electrical engineering from the University of Tehran, Iran, in 2016, the M.Sc. degree from Sharif university of Technology, Iran, in 2018. He is presently a Ph.D. student at KTH Royal Institute of Technology, Stockholm, Sweden. His research interests include statistical inference, information theory, information privacy and security. 
\end{IEEEbiography}
\begin{IEEEbiography}[{\includegraphics[width=1in,height=1.25in,clip,keepaspectratio]{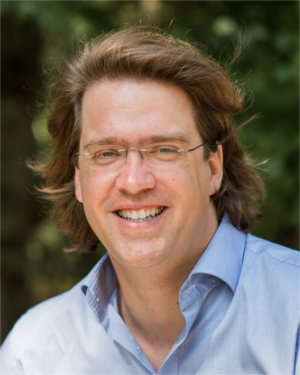}}]{Tobias J. Oechtering}
(S’01-M’08-SM’12) received his Dipl-Ing degree in Electrical Engineering and Information Technology in 2002 from RWTH Aachen University, Germany, his Dr-Ing degree in Electrical Engineering in 2007 from the Technische Universit\"at Berlin, Germany. In 2008 he joined KTH Royal Institute of Technology, Stockholm, Sweden and has been a Professor since 2018. In 2009, he received the ``F\"orderpreis 2009” from the Vodafone Foundation.

Dr. Oechtering is currently Senior Editor of IEEE Transactions on Information Forensic and Security since May 2020 and served previously as Associate Editor for the same journal since June 2016, and IEEE Communications Letters during 2012-2015. He has served on numerous technical program committees for IEEE sponsored conferences, and he was general co-chair for IEEE ITW 2019. His research interests include physical layer privacy and security, statistical signal processing, communication and information theory, as well as communication for networked control. 

\end{IEEEbiography}
\begin{IEEEbiography}[{\includegraphics[width=1in,height=1.25in,clip,keepaspectratio]{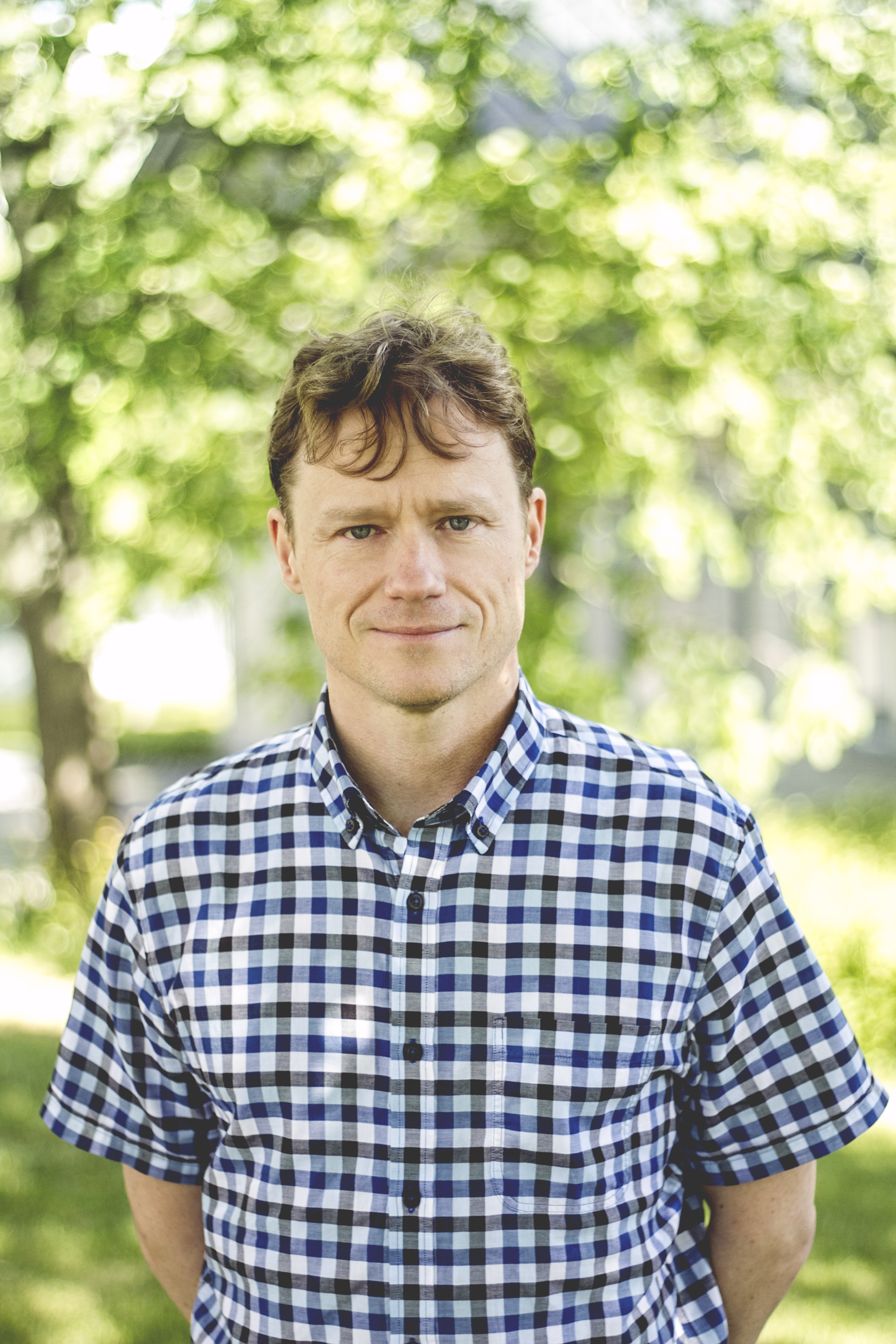}}]{Mikael Skoglund}
	(S'93-M'97-SM'04-F'19) received the Ph.D.~degree in
	1997 from Chalmers University of Technology, Sweden.  In 1997, he joined the Royal Institute of Technology (KTH), Stockholm, Sweden, where he was appointed to the Chair in Communication Theory in 2003.  At KTH, he heads the Division of Information Science and Engineering, and the Department of Intelligent Systems.
	
	Dr.~Skoglund has worked on problems in source-channel coding, coding and transmission for wireless communications, Shannon theory, information and control, and statistical signal processing. He has authored and co-authored more than 160 journal and 380 conference papers.
	
	Dr.~Skoglund is a Fellow of the IEEE. During 2003--08 he was an associate editor for the IEEE Transactions on Communications and during 2008--12 he was on the editorial board for the IEEE Transactions on Information Theory. He has served on numerous technical program committees for IEEE sponsored conferences, he was general co-chair for IEEE ITW 2019, and he will serve as TPC co-chair for IEEE ISIT 2022.
\end{IEEEbiography}
\end{document}